\documentclass{article}
\usepackage[utf8]{inputenc}
\usepackage[T1]{fontenc}
\usepackage{lmodern}
\usepackage[a4paper, top=3cm,bottom=3cm,left=3.5cm,right=3.5cm]{geometry}
\usepackage[british]{babel}
\usepackage{csquotes}

\title{Relativistic limits on quantum operations}
%\author{Tein van der Lugt}
\date{May 2021}

% ----------- References ----------------
\usepackage[style=numeric-comp, % or: draft
            sortcites=true,
            giveninits=true, % Use initials in bibliography
            useprefix=true,
            sorting=nyt,
            doi=false,isbn=false,url=false,eprint=false]{biblatex}
\AtEveryBibitem{\clearfield{month}} % remove month field from bib file - don't want them to show months
\DeclareFieldFormat{titlecase}{#1} % Hoofdletters in titels behouden
\DeclareFieldFormat{postnote}{#1} % No 'p.' before postnotes in \cite[<postnote>]{ref}. See biblatex manual p147
\DeclareSortingNamekeyTemplate{
  \keypart{\namepart{family}}
  \keypart{\namepart{prefix}} % Sorting 'van der' etc. correctly
  \keypart{\namepart{given}}
  \keypart{\namepart{suffix}}
}
\DeclareNameAlias{author}{family-given} % display last name first in bibliography
\renewbibmacro{begentry}{\midsentence}  % Hack to prevent capitalisation of e.g. 'van Dalen' in bibliography
\renewbibmacro{in:}{\ifentrytype{article}{}{\printtext{\bibstring{in}\intitlepunct}}} % Remove 'In:' from @article entries
\addbibresource{mendeley.bib}

% FIGURES ETC.
\usepackage{tikz, xcolor, wrapfig}
\usepackage[font=small,labelfont=sc,margin=1em]{caption}

% HYPERREF AND PDF BOOKMARKS
\usepackage[pdfusetitle]{hyperref}
%\usepackage{bookmark} % bookmark fixes PDF outline
%\bookmarksetup{numbered} % Or without bookmark with hyperref only: \hypersetup{bookmarksnumbered} / \bookmarksetup{numbered}
% \hypersetup{linktocpage} % to make page numbers in ToC clickable instead of titles
% The following fixes the problem of hyperlinks to appendices:
% \renewcommand*{\theHchapter}{\thechapter}

\usepackage[shortlabels]{enumitem}
\setlist{noitemsep}

% --- Page headers ---
\usepackage{emptypage}
\pagestyle{headings}
%\renewcommand\chaptermark[1]{\markboth{Ch.\ \thechapter\ \ #1}{}}

%\renewcommand\headrulewidth{0pt}
%\markright{\thechapter.\thesection hi}
% For draft:
%\pagestyle{myheadings}
%\markright{Tein van der Lugt, draft \today}

% --- Headings ---
%\usepackage[Sonny]{fncychap} % Glenn or Lenny or Sonny
%\usepackage{lettrine}
%\usepackage{titlesec}
%\titleformat{\section}
%  {\normalfont\bfseries\large}{\thesection.}{1em}{}

% Footnotes: Don't reset footnote numbering per chapter
% \counterwithout{footnote}{chapter}

\DeclareFontSeriesDefault[rm]{bf}{b} % instead of default bx ('bold extended'). See https://tex.stackexchange.com/a/27875 and https://tex.stackexchange.com/a/585160

% ----------- TODO stuff ----------------
\usepackage{marginnote}
\reversemarginpar
% These will disappear if disabled:
\newif\iftodo
\newif\iftodow
\newif\ifnote
\newif\ifmaybe
\newcommand{\todo}[1]{\iftodo{\marginnote{\normalsize\normalfont\color{blue}\textsf{TODO}}{\sffamily\color{blue}{#1}}}\fi}

\todofalse  % toggle
\todowfalse
\notefalse
\maybefalse

% These will be displayed normally if disabled:
\newif\ifnieuw
\newif\iflang % I.e. where I have to reconsider my use of language

\nieuwfalse
\langfalse

% ----------- Abbreviations --------------
\newcommand{\ie}{i.e.}
\newcommand{\eg}{e.g.}

% ----------- Maths definitions --------------
\usepackage{amsmath, amsthm, amssymb, bm, commath, extarrows, tikz-cd, accents} % extarrows gives \xlongrightarrow{\sim}
\newtheorem{theorem}{Theorem}
\newtheorem{corollary}[theorem]{Corollary}
\newtheorem{proposition}[theorem]{Proposition}
\newtheorem{lemma}[theorem]{Lemma}
\theoremstyle{definition}
\newtheorem{definition}[theorem]{Definition}

\usepackage{apptools} % number theorems like A.1 in the appendix (but not in normal sections):
\AtAppendix{\counterwithin{theorem}{section}} % 'theorem' is the counter name, so it also works for Lemmas, Definitions etc
\AtAppendix{\counterwithin{equation}{section}}

\newcommand{\defn}[1]{\emph{\textbf{#1}}}

\renewcommand{\a}{\alpha}

\renewcommand{\d}{\delta}
\renewcommand{\phi}{\varphi}
\newcommand{\e}{\varepsilon}
\newcommand{\g}{\gamma}
\newcommand{\s}{\sigma}
\newcommand{\om}{\omega}
\newcommand{\G}{\Gamma}
\newcommand{\ups}{\upsilon}
\newcommand{\tom}{\tilde\omega}
\newcommand{\I}{\mathcal{I}}
\renewcommand{\L}{\Lambda}
\renewcommand{\u}{\upsilon}
\let\o\overline
\let\ul\underline
\let\tns\otimes
\newcommand{\one}{\bm 1}
\newcommand{\ns}{\text{ns}}

\newcommand{\spacelike}{\mathrel{\natural}}
\DeclareMathOperator{\Tr}{Tr}
\DeclareMathOperator{\id}{id}
\DeclareMathOperator{\ch}{ch}

\def\N{\mathbb{N}}

\def\R{\mathbb{R}}
\def\C{\mathbb{C}}

\def\B{\mathcal{B}}
\def\H{\mathcal{H}}
\def\M{\mathcal{M}}

\def\D{D} % Density operators
\def\S{S} % State space of an algebra
\def\Af{\mathfrak{A}}
\def\Bf{\mathfrak{B}}
\def\Uf{\mathfrak{U}}
\def\Cf{\mathfrak{C}}

% Quantum stuff
\newcommand{\proj}[1]{\left|#1\right\rangle\left\langle#1\right|}
\newcommand{\projk}[1]{|#1\rangle\langle#1|} % k voor klein
\renewcommand{\k}[1]{\left|#1\right\rangle} % ket
\newcommand{\kk}[1]{|#1\rangle} % ket

% For the large tikz-cd diagram
\newcommand{\black}[1]{\textcolor{black}{#1}}

% Tikzit diagrams
\usepackage{tikzit}
\input{figures/mystyle.tikzstyle}

% Dark mode
%\pagecolor[rgb]{0.15,0.15,0.15}
%\color[rgb]{1,1,1}

\begin{document}
    % \pagenumbering{gobble}

    \begin{titlepage}
    \newgeometry{top=4cm,bottom=4cm,left=2cm,right=2cm,marginparwidth=1.75cm}
    
    \center
    \large

    \includegraphics[width=.3\textwidth]{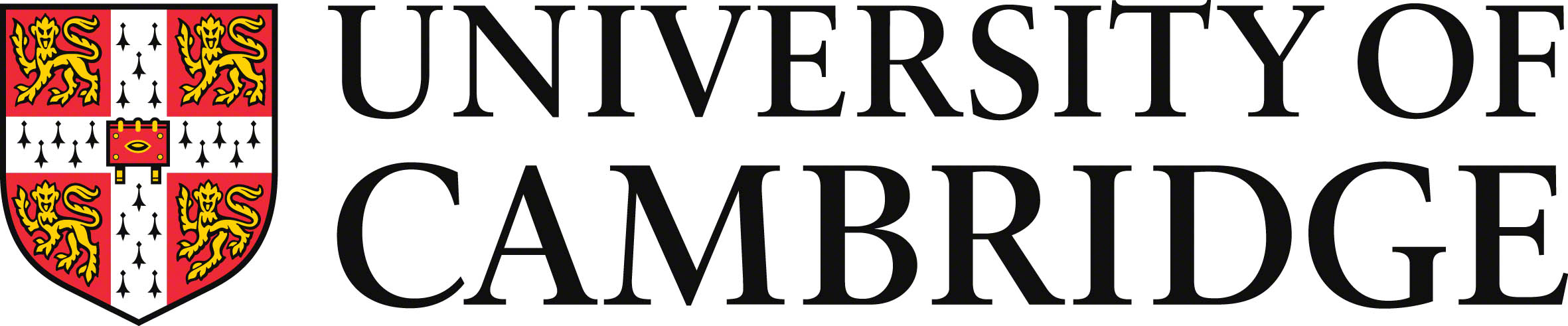}\\[2cm]
    
    \makeatletter
    {\sffamily\huge Relativistic limits on quantum operations}\\[3cm]
    \makeatother
    
    {\Large\textsc{Part III Essay}}\\[3cm]
    
    {\Large \textsc{Tein van der Lugt}}\\[0.5cm]
    
    Department of Applied Mathematics and Theoretical Physics\\
    Wolfson College\\
    University of Cambridge\\ \vfill
    
    \emph{Essay setter:}\\[0.3cm]
    Dr \textsc{Jeremy Butterfield}\\ \vfill
    \makeatletter
    {\large \@date}
    \makeatother

    % ------- Logo -------
    %\includegraphics{logo.png}\\[1cm] % Requires graphicx package

    \vfill
\end{titlepage}
    % \maketitle
    
    \newgeometry{top=3cm,bottom=2cm,left=3.5cm,right=3.5cm} % Otherwise TOC doesn't fit on page together with abstract
    
    \noindent\textsc{Abstract.}\quad In this essay we explore the limits imposed by the impossibility of superluminal signalling on the class of physically realisable quantum operations, focusing on the difference in approaches one can take towards this problem in Hilbert space quantum theory and algebraic quantum field theory (AQFT). We review a recent proposal by Fewster and Verch for a generally covariant measurement scheme in AQFT, which we call the FV scheme; along the way, we argue that the framework as set up in its original proposal can be greatly simplified without losing generality or physical motivation. We then discuss the problem, introduced by Sorkin, that naive generalisations of the notion of operation to the relativistic context can lead to signalling between spacelike observers by introducing a third observer. We consider approaches to tackle this problem in Hilbert space quantum theory and argue that these are sometimes ambiguous, since the physical (im)possibility of operations depends in part on the specific spacetime embodiment of the quantum system under consideration, which lies outside the Hilbert space formalism. In the AQFT context, we review and generalise a recent result showing that operations induced by the FV measurement scheme do not enable superluminal signalling. We connect this result to Hilbert space by introducing a concrete model of AQFT, which we call the `hybrid model' as it provides a way to explicitly embed Hilbert space quantum theory into spacetime using the formalism of AQFT. Finally, we show that in this model, under suitable necessary and sufficient conditions, the converse result also holds: every operation that does not enable superluminal signalling can be implemented in the FV framework.
    
    \tableofcontents
    
    \newgeometry{top=3cm,bottom=3cm,left=3.5cm,right=3.5cm}
    
    \newpage
    
    \section{Introduction}
When discussing quantum operations in the general context of quantum information theory, one often considers the set of completely positive, trace-preserving (CPTP) maps between bounded operators on finite-dimensional Hilbert spaces. While complete positivity and trace-preservation are necessary conditions for such a map to represent a physically possible transformation between quantum states, there are additional restrictions, both in practice and in principle, which physically realisable operations must obey. An important example is that they should not lead to superluminal signalling, an in principle restriction arising from relativity theory and which will be the subject of this essay.

The problem of reconciling quantum mechanics with relativity theory is of course a notoriously difficult one. Nonrelativistic quantum mechanics benefits from the operationally well-established (though foundationally insecure) concepts of ideal measurement and the instantaneous state update, or wavefunction collapse, induced by it. In relativistic spacetime, however, instantaneity is relative, leading to questions about the ontology of state update. But problems also arise in a purely operational\footnote{Operationalism can be formulated as the view that a concept is only meaningful when one has a method of measurement or detection for it; more abstractly, it views any concept as nothing more than a ‘set of operations’ (here not necessarily to be interpreted in the quantum mechanical sense) \cite{sep-operationalism}. The operationalist view is prominent in many areas of quantum mechanics including algebraic quantum field theory and quantum information theory.} context: it turns out that when naively generalising ideal measurements (and more generally, operations) to relativistic spacetime, allowing arbitrary operations to be performed by local parties can lead to superluminal signalling, as pointed out by \textcite{Sor93}.

This suggests the general problem of delineating the class of quantum operations that correspond to a physically realisable change of state,%
\footnote{That is, physically realisable in the operational sense that it can be induced by an outside observer. For the purposes of this essay, we regard any physical process that is not due to the intervention of an outside agent as a unitary evolution rather than an operation. In the (about to be introduced) framework of algebraic quantum field theory, this unitary evolution is encoded in one quantum state defined on the entirety of spacetime, and hence does not correspond to a ‘change of state’. Moreover, note that we call this a `class' rather than a set, since `physically realisable' is not (yet) a rigorous notion and there might be borderline cases in which the physical realisability of an operation depends on factors lying outside of the present-day theories of physics.}\textsuperscript{,}%
\footnote{A different though very closely related question which has received more attention \cite{Zan01,Nie97,Wig52,AY60,Dir58} is whether all observables of a quantum system are truly measurable. Compare \textcite[my translation]{Wig52}: ``\textit{The usual assumption of the statistical interpretation of quantum mechanics that all Hermitian operators represent measurable quantities is, arguably, generally recognised as a convenient mathematical idealisation and not as an expression of fact.}''
%Die übliche Annahme der statistischen Deutung der Quantenmechanik, daß alle hermiteschen Operatoren meßbare Größen darstellen, wird wohl allgemein als eine bequeme mathematische Idealisierung und nicht als ein Ausdruck eines Tatbestandes anerkannt.
with \textcite[p.~37]{Dir58}: ``\textit{The question now presents itself---can every observable be measured? The answer theoretically is yes. In practice it may be very awkward, or perhaps even beyond the ingenuity of the experimenter, to devise an apparatus which could measure some particular observable, but the theory always allows one to imagine that the measurement can be made.}''}
and that in particular do not lead to the possibility of superluminal signalling. Many different approaches to this problem can be and have been taken. One way of comparing these it to distinguish between ‘bottom-up’ and ‘top-down’ approaches. Here, bottom-up approaches attempt to solve the problem by considering the specific physical mechanisms by which operations can be implemented, either exhibiting protocols for implementing particular (classes of) operations or constructing a general framework for implementing operations that (hopefully) capture all realisable operations \cite{BFR21, HK69, HK70a, OO16}.

In a top-down approach, on the other hand, one starts from a predefined set of quantum operations (\eg\ CPTP maps), and by considering physical constraints such as the impossibility of superluminal signalling aims to either show on a case-by-case basis that some operations violate these constraints, or derive general necessary and/or sufficient conditions for operations to satisfy the constraints---without giving procedures to actually perform those operations that do \cite{Sor93, BGNP01, ESW02}. In other words, a top-down approach, if you will, carves out the class of physically possible operations from the block of marble formed by \eg\ CPTP maps.

Another, related distinction between approaches to the aforementioned problem is what mathematical framework is used to describe the quantum systems. Quantum mechanics on finite-dimensional Hilbert space is the most widely used language when talking about measurements and operations, and the issue of superluminal signalling has been researched extensively in this framework in the top-down approach \cite{Sor93, BGNP01, ESW02}. Locality is then encoded by a tensor factorisation of the Hilbert space into multiple systems controlled by different observers. An aspect that makes this approach somewhat ambiguous, however, is the fact that spacetime structure itself falls outside the scope of the Hilbert space formalism and must instead be considered separately and on a case-by-case basis. Indeed, in Section~\ref{sec:bgnp} of this essay, focusing on the purely Hilbert space treatment of the problem by \textcite{BGNP01}, I will argue that whether or not a Hilbert space operation enables superluminal signalling can non-trivially depend on the specific physical realisation of the quantum systems described by the tensor product factors.

Bottom-up approaches, on the other hand, involve giving a description of the true physical mechanism which goes into performing the operations, and so are best formulated in frameworks that are more complete and take relativity as fundamental. These are of course quantum field theories. The main focus of this essay will be on a generally covariant measurement scheme in the rigorous language of algebraic quantum field theory (AQFT), recently proposed by \textcite{FV20}, which can be regarded as a bottom-up approach to the problem of superluminal signalling. Although the measurement scheme was primarily designed to consistently describe measurements and their induced state updates in AQFT, it can also be argued to more generally capture all physically realisable operations. The central observation of this essay will indeed be that operations induced by this scheme do not enable superluminal signalling, as was recently shown by \textcite{BFR21}.

To investigate the connection between the approaches in Hilbert space quantum mechanics \cite{BGNP01} and AQFT \cite{BFR21}, we introduce a simple model of AQFT which embeds, as it were, the formalism of nonrelativistic Hilbert space QM into relativistic AQFT---therefore we will refer to it as the `hybrid' model. Apart from providing intuition behind the result of \textcite{BFR21}, its hybrid nature allows us to prove that in this particular model and under a mild geometrical condition, the converse of that result also holds: all operations that do not enable superluminal signalling can be implemented in the FV framework. This is achieved by carrying over the important result from Hilbert space QM that `all semicausal operations are semilocalisable' \cite{BGNP01,ESW02} to the hybrid model of AQFT.

We start with some preliminaries in Section~\ref{sec:preliminaries}, including a brief introduction to algebraic quantum field theory and the (relatively little-studied) notions of (local) operations in AQFT. We introduce the hybrid model in Section~\ref{sec:prelim-aqft-hybrid-model}. In Section~\ref{sec:fewster-verch}, we review the generally covariant measurement scheme of \textcite{FV20}; along the way, we argue that the framework as set up in its original proposal can be greatly simplified without losing generality (which we prove in Appendix~\ref{app:semicolons}). Then, in Section~\ref{sec:sorkin}, we zoom in on Sorkin’s observation that arbitrary operations can establish superluminal signalling, and discuss the approach in Hilbert space QM as set out in \cite{BGNP01}. In Section~\ref{sec:bfr} we review the result of \cite{BFR21}, generalising it slightly to show that operations induced by Fewster and Verch’s measurement scheme do not enable superluminal signalling. Finally, Section~\ref{sec:bfr-hybrid-model} applies these ideas to the hybrid model to arrive at the result discussed above.

    \section{Preliminaries}\label{sec:preliminaries}
This section contains some background and terminology necessary to set the stage for the rest of the essay. We begin in Section~\ref{sec:prelim-functional-analysis} with some basic definitions in the context of *-algebras. In Section~\ref{sec:prelim-qm-hilbert-space}, we recall some aspects of Hilbert space quantum mechanics, which I mostly assume as familiar; we will focus on the connection with the algebraic approach to quantum mechanics, which will be useful for comparing it to algebraic quantum field theory. Section~\ref{sec:prelim-lorentz} introduces the necessary terminology and results from Lorentzian geometry, which we will need before continuing on to AQFT in Section~\ref{sec:prelim-aqft}. Although the essay is set in the context of general curved spacetime, an understanding of flat (Minkowski) spacetime will generally be sufficient.

\subsection{Functional analysis}\label{sec:prelim-functional-analysis}
In algebraic approaches to quantum mechanics, *-algebras play an important role as generalisations of the set of operators on a Hilbert space, hence representing the observables of a quantum system. A \defn{*-algebra} $\Af$ \footnote{That's an `A', by the way.} is an (associative but generally noncommutative) algebra over $\C$ together with an \defn{involution} $^*: \Af\to\Af$: that is, for each $a,b\in\Af$ and $\lambda\in\C$ we have
\begin{itemize}
    \item $a^{**} = a$;
    \item $(ab)^* = b^*a^*$; and
    \item $(\lambda a)^* = \o\lambda a^*$.
\end{itemize}
$\Af$ is called \defn{unital} if it has a unit, denoted $\one_\Af$. Given two unital *-algebras $\Af$ and $\Bf$, one can take the tensor product $\Af\tns\Bf$; this is again a unital *-algebra with involution $(a\tns b)^* := a^* \tns b^*$ for $a\in\Af$ and $b\in\Bf$, extended linearly. Moreover, a \defn{*-isomorphism} $\phi:\Af\to\Bf$ between *-algebras is a bijection that commutes with the algebraic operations and involutions on $\Af$ and $\Bf$, and a \defn{*-subalgebra} is a subset of $\Af$ which is closed under algebraic operations and the involution.

% We will mostly want to use C*-algebras\maybe{ why? motivate physically somewhere}, a special kind of *-algebras. First, a \defn{Banach algebra} $\Af$ is a complete normed space which is also an algebra, such that $\|ab\| \leq \|a\|\|b\|$ for $a,b\in\Af$. $\Af$ is then said to be a \defn{C*-algebra} if it has an involution $^*$ that satisfies $\|a^*a\| = \|a\|^2$ for all $a\in\Af$.

We say that an element $a$ of a *-algebra $\Af$ is \defn{positive} if $a=b^*b$ for some $b\in\Af$. This implies that $a$ is \defn{self-adjoint}, \ie\ $a^*=a$. This notion extends to a partial order $\leq$ on $\Af$ by setting $a \leq b$ iff $b-a$ is positive. In particular, if $\Af$ is unital, we say that $a\in\Af$ is an \defn{effect} if $0\leq a\leq \one_\Af$.

% It will be useful to know that every element $a$ of a *-algebra can be decomposed as $a = x + iy$, where $x$ and $y$ are self-adjoint; moreover, in a C*-algebra, any self-adjoint element can be further decomposed as $x = x_+ - x_-$ where $x_\pm$ are both positive \cite[Lemma~C.53]{Lan17}. Hence, any C*-algebra is spanned by its positive elements.\todow{Is this really useful to know? necessary? fn6}
% https://planetmath.org/DecompositionOfSelfadjointElementsInPositiveAndNegativeParts

On the other hand, a linear \emph{map} $T:\Af\to\Bf$ between *-algebras is said to be \defn{positive} if it preserves positivity, \ie\ for all $a\in\Af$ there is a $b\in\Bf$ such that $T(a^*a) = b^*b$. Moreover, it is called \defn{completely positive} (CP) if for any other *-algebra $\Bf$, the extension $T\tns\id_{\Bf}: \Af\tns\Bf\to\Af\tns\Bf$ is positive. Here $\id_{\Bf}: \Bf\to\Bf$ is the identity map. Finally, if $\Af$ and $\Bf$ are unital then $T$ is also called \defn{unital} if $T(\one_A) = \one_B$.

An important example of a positive map is a \defn{state} on a unital *-algebra $\Af$, which is defined as a bounded linear functional $\om:\Af\to\C$ being
\begin{itemize}
    \item positive: $\om(a^* a) \geq 0$ for all $a\in\Af$,\footnote{Note that $\C$ is a *-algebra with complex conjugation as the involution and that $a\in\C$ is positive iff $a\in\R$ and $a\geq 0$.} and
    \item \defn{normalised}: $\om(\one_\Af) = 1$.
\end{itemize}
We denote the set of all states on $\Af$, the \defn{state space}, by $\S(\Af)$; this is a convex subset of the dual $\Af^*$. States can be seen as assigning an expectation value $\om(a)\in\C$ to each observable $a\in\Af$, thereby specifying a normalised state of the quantum system with observables $\Af$.\footnote{\label{fn:sa-observables}Often only self-adjoint elements of the algebra (\eg\ self-adjoint operators on a Hilbert space) are considered to be observable, and their expectation values are required to be real numbers, so that a `state' should be defined as a positive and normalised map $\Af_{\text{sa}}\to\R$, where $\Af_{\text{sa}}\subseteq \Af$ is the subalgebra of self-adjoint elements. Denote the space of such states by $\S_\text{sa}(\Af)$. If $\Af$ is a \emph{C*-algebra}, however, which is nearly always assumed in physical scenarios, one can write any element of $\Af$ as a linear combination of positive elements \cite[Lemma~C.53]{Lan17}. This implies that $\S(\Af)$ and $\S_\text{sa}(\Af)$ are in bijective correspondence (cf.\ \textcite[Proposition~2.6]{Lan17}).}

Going one step further yet, we say that a map $\G:\Af^*\to\Bf^*$ is \defn{positive} if it preserves positivity of functionals $\om\in\Af^*$, and \defn{completely positive (CP)} if its extension $\G\tns\id_{\Cf^*}$ is positive in this sense for any *-algebra $\Cf$.

The example of a *-algebra which is most relevant to us is of course the space $\B(\H)$ of bounded operators on an arbitrary Hilbert space $\H$, where the algebraic operations are as usual, the norm is the operator norm and the involution is given by the adjoint (which we still denote as $^*$). $\B(\H)$ is unital with unit $\one_{\B(\H)} =: \one_\H$, the identity operator on $\H$. Any element $a\in\B(\H)$ is positive in the sense defined above iff for all $\psi\in\H$, $\langle \psi, a\psi\rangle \geq 0$---that is, $a$ is positive semidefinite as an operator.
In particular, an effect $0\leq a\leq \one_\H$ uniquely defines a two-element POVM $\{a, \one_\H-a\}$, \ie\ a two-outcome measurement or `yes/no' question about the quantum system.
As for states $\om\in\S(\B(\H))$, in the finite-dimensional case they correspond bijectively to density operators, as we will see shortly.

% The following result will turn out to be useful to our discussion of operations in AQFT.
% \begin{proposition}
%     Let $\Af$ be a C*-algebra. If $a,b\in\Af$ and $a\neq b$, then we may find a bounded linear functional $\phi\in\Af^*$ such that $\phi(a) \neq \phi(b)$.
    
%     This implies that for any linear map $\widehat T: \Bf^*\to\Af^*$ between the dual spaces of two C*-algebras, there is a unique linear map $T:\Af\to\Bf$ such that $\widehat T$ is the pullback of $T$:
%     \begin{equation}
%         (\widehat T \phi)(a) = \phi(Ta) \text{\qquad for all } a\in\Af,\om\in\Bf^*.
%     \end{equation}
% \end{proposition}
% \begin{proof}
%     The first claim is a direct consequence of the Hahn-Banach theorem \textcite[Corollary~3.4]{Clu09} (and in fact holds for any normed vector space). Now suppose $\widehat T: \Bf^*\to\Af^*$ is linear. Then there is a unique $b\in\Bf$ such that $(\widehat T\phi)(a) = \phi(b)$ for all $\phi\in\Af^*$: if $c\neq b$, then 
% \end{proof}

\subsection{Quantum mechanics on Hilbert space}\label{sec:prelim-qm-hilbert-space}
We recall the notion of operation in the case of quantum systems on a Hilbert space $\H$ and explain the relation between density operators $\D(\H)$ and states $\S(\B(\H))$. We will see that this relation is closely linked to the idea of the equivalence between the Schrödinger and Heisenberg pictures of quantum evolution. In this section, we restrict ourselves to finite-dimensional Hilbert spaces.

\begin{definition}\label{def:hs-operation}
    Let $\H_1$ and $\H_2$ be finite-dimensional. An \defn{operation} is a map $\L:\B(\H_1)\to\B(\H_2)$ which is linear, completely positive (CP) and trace-nonincreasing, \ie\ $\Tr(\L(\rho)) \leq \Tr(\rho)$ for $\rho\in\D(\H_1)$. We call $\L$ \defn{non-selective} or CPTP if it is trace-preserving, and \defn{selective} otherwise. If it is non-selective, it can be seen as a map between the spaces of density operators $\D(\H_1)$ and $\D(\H_2)$.
\end{definition}

The central equivalence linking the Hilbert space and algebraic approaches to quantum mechanics is given by the following proposition (cf.\ \textcite[Thm.~2.8]{Lan17}; see Thm.~4.12 of that work for the infinite-dimensional case).

\begin{proposition}\label{prop:dh-sbh}
    Let $\H$ be finite-dimensional. $\B(\H)$ and its dual $\B(\H)^*$ are isomorphic as vector spaces via
    \begin{equation}
        \om(a) := \Tr(\rho a) \text{\quad for all } a\in\B(\H),
    \end{equation}
    defining a functional $\om\in\B(\H)^*$ for a given operator $\rho\in\B(\H)$. Under this equivalence, $\om$ is positive iff $\rho$ is positive and $\om(\one_{\H}) = 1$ iff $\Tr(\rho) = 1$, meaning that $\D(\H)$ and $\S(\B(\H))$ are isomorphic as compact convex sets.
    
    %If $\H$ is infinite-dimensional, one simply has to restrict the state space to the space of \defn{normal} states $\S_n(\B(\H))$ to get $\D(\H) \cong \S_n(\B(\H))$, where
    %\begin{equation}
    %    \S_n(\B(\H)) := \{ \om\in\S(\B(\H)) \,\mid\, \exists\rho\in\D(\H)\forall a\in\B(\H): \om(a) = \Tr(\rho a) \}
    %\end{equation}
    % and you have to restrict density operators to be of trace class. (Landsman p103)
\end{proposition}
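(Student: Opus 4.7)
The plan is to build the isomorphism from $\B(\H)$ to $\B(\H)^*$ first as vector spaces, then upgrade it to an order-theoretic and normalisation-preserving correspondence, and finally note that convexity and compactness follow for free by linearity and finite-dimensionality. For the first part, I would define $\Phi:\B(\H)\to\B(\H)^*$ by $\Phi(\rho)(a) := \Tr(\rho a)$ and note that $\Phi$ is manifestly linear in $\rho$. Since $\dim \B(\H) = \dim \B(\H)^* = (\dim\H)^2 < \infty$, it suffices to prove injectivity: if $\Tr(\rho a) = 0$ for every $a \in \B(\H)$, choose $a = \rho^*$ to get $\Tr(\rho\rho^*) = 0$, which is the squared Hilbert--Schmidt norm of $\rho$, forcing $\rho = 0$.

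For the positivity clause, I would show both implications separately. If $\rho \geq 0$, then for any $b\in\B(\H)$ one has $\Tr(\rho\, b^*b) = \Tr\!\bigl((b\rho^{1/2})^*(b\rho^{1/2})\bigr) \geq 0$, so $\Phi(\rho)$ is positive on $\B(\H)$. Conversely, suppose $\Phi(\rho)$ is positive; testing against the rank-one projections $a = \projk{\psi}$ for $\psi\in\H$ yields $\langle\psi,\rho\psi\rangle = \Tr(\rho\projk{\psi}) \geq 0$, so $\rho$ is positive semidefinite as an operator. The normalisation equivalence is immediate from $\Phi(\rho)(\one_\H) = \Tr(\rho)$.

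Combining the two, $\Phi$ restricts to a bijection $\D(\H)\to\S(\B(\H))$. Since both $\Phi$ and its inverse are linear, they preserve convex combinations, so this restriction is an affine bijection. Finally, $\D(\H)$ is a closed and bounded subset of the finite-dimensional space $\B(\H)$, hence compact, and $\Phi$ is continuous (being linear on a finite-dimensional space), so $\S(\B(\H))$ is its continuous image and also compact; the inverse, being linear, is likewise continuous, giving a homeomorphism. There is no serious obstacle here: the only non-trivial ingredient is the use of the Hilbert--Schmidt inner product to establish injectivity, and the rank-one test operators to recover positivity of $\rho$ from positivity of $\omega$.
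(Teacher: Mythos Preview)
Your proof is correct and complete. The paper itself does not give a proof of this proposition; it merely cites \textcite[Thm.~2.8]{Lan17} (and Thm.~4.12 for the infinite-dimensional case) and proceeds directly to the remarks that follow. Your argument---injectivity via the Hilbert--Schmidt norm, positivity via rank-one projections, and compactness from finite-dimensionality---is the standard self-contained route and would serve as a perfectly adequate replacement for the citation.
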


The following two remarks about this equivalence will be useful to us.
\begin{enumerate}[(i)]
    \item Let $\H_1$ and $\H_2$ be two finite-dimensional Hilbert spaces. Then under the equivalence of Theorem~\ref{prop:dh-sbh}, partial trace $\Tr_2 : \D(\H_1\tns\H_2) \to \D(\H_1)$ on density operators corresponds to the mapping $\S(\B(\H_1\tns\H_2)) \ni \om_{12} \mapsto \om_1 \in \S(\B(\H_1))$, where
    \begin{equation}
        \om_1(a) = \om_{12}(a\tns\one_{\H_2})  \text{\quad for } a\in\B(\H_1).
    \end{equation}
    That is, $\om_1$ is simply the restriction of $\om_{12}$ to $\B(\H_1)$ under the isomorphism $\B(\H_1) \cong \B(\H_1)\tns\{\one_{\H_2}\}$.
    
    \item Let $\L:\B(\H_1)\to\B(\H_2)$ be a positive map and $\rho\in\B(\H_1)$ a positive operator. Then for $a\in\B(\H_2)$,
    \begin{equation}\label{eq:tr-hilbert-schmidt}
        \Tr(\L(\rho)a) = \Tr(\L(\rho)^* a) = \Tr(\rho^* \L^\dagger(a)) = \Tr(\rho\L^\dagger(a)),
    \end{equation}
    where $\L^\dagger$ is the adjoint of $\L$ with respect to the Hilbert-Schmidt inner products on $\B(\H_1)$ and $\B(\H_2)$. So under the equivalence of Proposition~\ref{prop:dh-sbh}, $\L:\B(\H_1)\to\B(\H_2)$ corresponds to the pullback\footnote{Also often called adjoint, but we have enough of those already.} of $\L^\dagger$, which we denote by $\widehat{\L^\dagger} : \B(\H_2)^* \to \B(\H_1)^*$, \ie\ $(\widehat{\L^\dagger}(\om))(a) = \om(\L^\dagger a)$ for $a\in\B(\H_2)$. Moreover, it can be shown by some straightforward arguments that \cite{ESW02}
    \begin{itemize}
        \item $\L$ is CP iff $\L^\dagger$ is CP iff $\widehat{\L^\dagger}$ is CP;
        \item $\L$ is trace-nonincreasing iff $\L^\dagger(\one_{\H_2}) \leq \one_{\H_1}$ iff $\widehat{\L^\dagger}$ is normalisation-nonincreasing; and
        \item $\L$ is trace-preserving iff $\L^\dagger$ is unital iff $\widehat{\L^\dagger}$ is normalisation-preserving.
    \end{itemize}
    
    Although the Schrödinger and Heisenberg pictures in quantum mechanics usually refer to the unitary evolution of a quantum system, the same idea applies to general operations: $\L$ and $\widehat{\L^\dagger}$ represent the operation in the Schrödinger picture, where quantum states are the entities that transform, while $\L^\dagger$ is the same operation expressed in the Heisenberg picture, where observables are the entities that transform.
    
    Finally, if $\Af=\B(\H)$ with $\H$ finite-dimensional and $\G:\Af^*\to\Af^*$ is any CP and normalisation-nonincreasing map, then Proposition~\ref{prop:dh-sbh} and the results above tell us that there exists a map $\L^\dagger: \Af\to\Af$ such that $\G$ is the pullback of $\L^\dagger$. This is not true for general unital *-algebras $\Af$---we will come back to this in Section~\ref{sec:prelim-aqft-operations-locality}.
\end{enumerate}

Proposition~\ref{prop:dh-sbh} is a window into the algebraic approach to quantum mechanics, where the focus is on general algebras of observables rather than wavefunctions living in Hilbert space. We will encounter the algebraic approach in greater detail when discussing algebraic quantum field theory in Section~\ref{sec:prelim-aqft}.

\subsection{Lorentzian geometry}\label{sec:prelim-lorentz}
% Source for this is largely FV20, section 2.
Algebraic quantum field theory takes place on a time-oriented Lorentzian manifold $M$, that is, a smooth manifold of dimension at least two equipped with a Lorentzian metric $g$ of signature $-++\cdots+$ and a choice of time orientation---\ie\ a smooth timelike vector field designating the `future' time direction at each point, such that every causal vector is future- or past-directed. Here a vector is \defn{causal} if it is timelike or null; moreover, a curve in $M$ is causal if its tangent is everywhere causal.

For a point $p\in M$, the \defn{future (past) lightcone} or \defn{causal future (past)} of $p$ is the set of all points that can be reached from $p$ by a piecewise smooth, % note: I made this up, sources actually say it's supposed to be smooth, but I guess it's pretty much equivalent and this makes the proof of the following lemma easier
future- (past-)directed causal curve. These sets are denoted by $J^+(p)$ and $J^-(p)$, respectively. These notions naturally extend to sets, which are our main concern in AQFT: for $O\subseteq M$, define $J^\pm(O) = \bigcup_{p\in O} J^\pm(p)$. The \defn{causal hull} of a set $O\subseteq M$ is the intersection $J^+(O) \cap J^-(O) =: \ch(O)$. $O$ is called \defn{causally convex} if it is equal to its causal hull, or equivalently, if it contains every piecewise smooth causal curve that starts and ends in $O$ \cite{FV20}. This last property can be easily seen to imply that the intersection of two causally convex sets is again causally convex. (See \eg\ Figure~\ref{fig:smiley-face} on page~\pageref{fig:smiley-face} for an example of what a causally convex region can look like.) The \defn{causal complement} of $O$, denoted by $O^\perp$, is the set $M\setminus(J^+(O)\cup J^-(O))$; also define
\begin{equation}
    M_O^+ := M\setminus J^-(O) \text{\qquad and \qquad} M_O^- := M\setminus J^+(O).
\end{equation}
Then we have the following result, which we will use later and will prove here for illustration.
\begin{lemma}\label{lemma:ccompl-is-cconvex}
    For any set $O\subseteq M$, the sets $M_O^+$, $M_O^-$ and $O^\perp$ are causally convex.
\end{lemma}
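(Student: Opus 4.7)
The plan is to reduce the claim for $O^\perp$ to those for $M_O^+$ and $M_O^-$ by means of the observation, already noted in the main text, that the intersection of two causally convex sets is causally convex, together with the identity
\begin{equation*}
O^\perp = M \setminus (J^+(O) \cup J^-(O)) = M_O^+ \cap M_O^-.
\end{equation*}
This reduces the problem to showing causal convexity of $M_O^+$ and $M_O^-$, and by the symmetry between past and future it will suffice to treat $M_O^+$.

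For $M_O^+ = M \setminus J^-(O)$ I would argue by contradiction. Take a future-directed piecewise smooth causal curve $\gamma$ running from $p$ to $q$, with $p, q \in M_O^+$, and suppose that some interior point $r$ on $\gamma$ lies in $J^-(O)$. Unpacking the definition of $J^-$ given in the text, there is then a point $s \in O$ and a future-directed piecewise smooth causal curve from $r$ to $s$. Concatenating the segment of $\gamma$ from $p$ to $r$ with this curve yields a future-directed piecewise smooth causal curve from $p$ to $s$, so $p \in J^-(s) \subseteq J^-(O)$, contradicting the hypothesis that $p \in M_O^+$. The case of $M_O^-$ is the time-reverse: use the segment of $\gamma$ from $r$ to $q$ to conclude $q \in J^+(O)$, again contradicting the hypothesis.

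Stripped to its essentials, the argument only uses that concatenation of future-directed causal curves remains future-directed causal, i.e.\ transitivity of the relations $J^\pm$, which is immediate from the definitions. There is therefore no substantive obstacle; the only point meriting a moment's attention is the implicit convention that a causal curve carries a definite time-orientation, so that the sub-segments of $\gamma$ into which it is split retain a well-defined causal direction along which they can be concatenated with the chosen curve from $r$ to $s$.
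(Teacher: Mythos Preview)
Your proof is correct and follows essentially the same argument as the paper: a contradiction argument on $M_O^+$ by concatenating the segment of the curve from $p$ to $r$ with a causal curve from $r$ into $O$, then the symmetric statement for $M_O^-$, and finally the observation that $O^\perp = M_O^+ \cap M_O^-$ is an intersection of causally convex sets. The extra commentary you give on transitivity of $J^\pm$ and time-orientation is sound but not needed beyond what the paper already assumes.
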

\begin{proof}
    Suppose that $M_O^+$ is not causally convex, so that there exist $p,q\in M_O^+$ and a future-directed causal curve from $p$ to $q$ that intersects $J^-(O)$, say in the point $r$. Then by definition of $J^-(O)$, there is a future-directed causal curve from $r$ into $O$, so that by joining this with the curve from $p$ to $r$ obtained by restricting the curve from $p$ to $q$, we see that $p\in J^-(O)$, yielding a contradiction. Similarly, $M_O^-$ is causally convex. Since any intersection of causally convex sets is again causally convex, $O^\perp = M_O^+ \cap M_O^-$ is also causally convex.
\end{proof}

A point $p\in M$ is a \defn{past} or \defn{future endpoint} of a smooth causal curve $\g:(a,b)\to M$ with $a,b\in\R$ if $\g(t) \to p$ as $t\to a$ or $t\to b$, respectively. The curve is \defn{past-/future- inextendible} if it has no past/future endpoint, and simply \defn{inextendible} if it has neither.
A \defn{Cauchy surface}, then, is a set $\Sigma\subseteq M$ such that every inextendible smooth timelike curve intersects $\Sigma$ exactly once. $M$ is \defn{globally hyperbolic} iff a Cauchy surface exists \cite{Ger70}; if it is, then $M$ may be foliated by everywhere spacelike Cauchy surfaces, which may therefore be seen as `time slices' \cite{FV20} (see also the proof of Lemma~\ref{lemma:separating-cauchy-surface} below). A causally convex open submanifold of a globally hyperbolic spacetime $M$ is again globally hyperbolic \cite{FV12}.
The \defn{domain of dependence} or \textit{Cauchy development} of a set $O\subseteq M$ is $D(O) := D^+(O)\cup D^-(O)$, where $D^{+/-}(O)$ is the set of points $p$ such that every past-/future-inextendible smooth causal curve through $p$ intersects $O$. For example, if $O'\subseteq O$ contains a Cauchy surface of $O$, then $D(O')=D(O)$.
$D(O)$ is often `diamond-shaped' (see \eg\ Figure~\ref{fig:D-O-covering} on page~\ref{fig:D-O-covering}) and should be thought of as the set of points $p$ such that the `initial data' in $O$ uniquely determine system at $p$ \cite{Ger70}.

Finally, we introduce some causal relations between subsets of spacetime which will be of significance to us. We write 
\begin{equation}
    O_1 \spacelike O_2 \quad:\iff\quad O_1 \subseteq O_2^\perp \quad (\iff O_2 \subseteq O_1^\perp)
\end{equation}
for $O_{1,2}\subseteq M$, in which case we say that $O_1$ and $O_2$ are \defn{causally disjoint} or \defn{spacelike separated}.
Moreover, for any finite set $\mathcal O$ of precompact subsets of spacetime $M$ (\ie\ sets with compact closure), we say that a binary relation $\leq$ on $\mathcal O$ is a \defn{causal order} if it is a linear order such that for any $O, P\in\mathcal O$,
\begin{equation}\label{eq:causal-order}
    O < P  \quad\implies\quad  \bar P \subseteq M^+_{\bar O},
\end{equation}
where the bar denotes topological closure. A motivation for this definition is the fact that if $\bar P \subseteq M^+_{\bar O}$ holds for precompact $O$ and $P$, then there is a Cauchy surface (`time slice') separating $O$ and $P$, with $O$ to its past and $P$ to its future, as shown by Lemma~\ref{lemma:separating-cauchy-surface} on page~\pageref{lemma:separating-cauchy-surface}. Moreover, in flat spacetime, we would be able to say that there is an inertial observer according to which $O$ happens completely before $P$.

Note that a causal order as defined in Eq.~\eqref{eq:causal-order} does not always exist (\eg\ if two or more of the regions in the set $\mathcal O$ overlap), and that if it exists, it is not necessarily unique. If $\bar O_1$ and $\bar O_2$ are causally disjoint, for example, then both $O_1\leq O_2$ and $O_2\leq O_1$ constitute causal orders on the set $\{O_1,O_2\}$. If, on the other hand, $\{O_i\}_{i=1}^n$ is a set of $n$ distinct subsets of spacetime such that for all $1\leq i\leq n-1$ we have that $\bar O_{i+1}\subseteq M^+_{\bar O_i}$ and $O_i$ and $O_{i+1}$ are \emph{not} causally disjoint, then there exists a unique causal order, given by ${O_i\leq O_j} \iff i\leq j$ (since $O_{i+1} < O_i$ cannot hold for any causal order on $\{O_i\}_{i=1}^n$ by \eqref{eq:causal-order}, and a causal order is linear, so that $O_i < O_{i+1}$ must hold).%
\footnote{The definition of causal order in Eq.~\eqref{eq:causal-order} is admittedly a bit awkward because it requires one to switch between precompact sets and their closures. We have made this choice because Lemma~\ref{lemma:separating-cauchy-surface} needs the sets $K,L$ to be compact, whereas locality in AQFT is formulated in terms of open (causally convex) subsets of spacetime (see Section~\ref{sec:prelim-aqft-axioms}). Note that this makes our definition of causal order slightly different from a similar one considered in \cite{BFR21}.}

\subsection{Algebraic quantum field theory}\label{sec:prelim-aqft}
This section outlines the fundamentals of algebraic quantum field theory (AQFT). I will start by placing it into context. More comprehensive introductions can be found in, for example, \cite{FR20,Hal07,FV15,Lan96}.

AQFT, introduced by \citeauthor{HK64} in \citeyear{HK64} \cite{HK64}, is one of the three main approaches to quantum field theory, together with the Wightman formulation and traditional, Lagrangian QFT \cite{Swa17}. While Lagrangian QFT is by far the most successful of the three in terms of predictive power and empirical success, it also has many problems, including its lack of mathematical consistency and its reliance on non-local concepts such as the vacuum and the path integral \cite{FR15}.

AQFT is, on the other hand, a mathematically rigorous, axiomatic approach based from the outset on locality principles; it is for this reason that it also goes by the name of `local quantum physics'. One of the main motivations for an axiomatic approach to QFT is that it allows one to attempt to answer to the question `what counts as a model of QFT' by separating general results from results about specific models. Another is that, because the axioms reflect basic physical principles, the physical significance of results is often clearer \cite{Swa17}. AQFT can be formulated on curved spacetimes (which are, however, fixed in advance: that is, one does usually not consider the backreaction of the quantum fields on the spacetime metric); this has made into a useful tool to study general problems involving quantum fields in curved spacetimes such as the Hawking effect and cosmological problems \cite{FR15}.

On the other hand, partly due to its general and axiomatic approach, it is difficult to construct interesting explicit models of the AQFT axioms. No interacting models in 1+3 dimensions are yet under control in AQFT (nor in other rigorous approaches such as Wightman's), including important physical models like QCD, QED or the standard model itself \cite{FR15}.
The generality and rigour of AQFT is useful to us, however, because it enables one to discuss general measurement processes, as we will do in Section~\ref{sec:fewster-verch}.

Section~\ref{sec:prelim-aqft-axioms} below outlines the general framework and main axioms of AQFT, while in Section~\ref{sec:prelim-aqft-operations-locality} we discuss its notions of operation and locality of operations. In Section~\ref{sec:prelim-aqft-hybrid-model}, I introduce a concrete model of AQFT which will turn out useful to us later.

% \todow{The fact that AQFT is an algebraic approach, based on observables rather than states in Hilbert spaces, might be an aspect that makes it more conceptually challenging than its counterparts (Lagrangian and Wightman).
% - But: GNS
% - Leads to inequivalent reps, even in simple cases ('one of the main difficulties of interpreting mainstream QFT' Ruetsche 2012) --> further motivation for AQFT(?)
% - Closely related: impossibility of interaction picture, by Haag's theorem (see FR20 intro). Maybe read Emch 1972.}

%\todow{Maybe say somewhere: it is fundamentally operational. We will later discuss what it says about e.g. ontology of wavefunction collapse.}

\subsubsection{The framework and axioms}\label{sec:prelim-aqft-axioms}
We start from a time-oriented, globally hyperbolic Lorentzian manifold $M$;\footnote{Another type of approach to AQFT does not assume the existence of one fixed, `big' spacetime; see Appendix~\ref{app:semicolons}.} typically one also assumes that $M$ has at most finitely many connected components \cite{FV20, Few20, FV15}. We define a \defn{(local) region} of $M$ to be any subset $O\subseteq M$ which is open and causally convex.

Just like the algebraic approach to nonrelativistic quantum mechanics, the focus of AQFT lies in algebras of observables.\footnote{See footnote~\ref{fn:sa-observables}. Also, given a state, one can construct a representation of the observables as bounded operators on a Hilbert space via the GNS construction. See \eg\ \cite{FR20}; we will not need this construction here.} Locality is then baked into the formalism by associating to each region $O$ in spacetime a subalgebra which can be viewed as containing those observables that correspond to the part of the system lying in $O$, and can therefore in principle(!)\ be measured by a procedure confined to $O$.\footnote{We will see that this interpretation should not be taken too literally, since the ideal measurement of some observables can lead to superluminal signalling; we will discuss in Section~\ref{sec:fewster-verch} what actual `measurement' can be taken to mean.}

Concretely, an AQFT on $M$ is characterised by a unital *-algebra $\Af(M)$, also notated as $\Af$ and referred to as the \defn{(global) observable algebra}, together with an assignment $O\mapsto \Af(O)$ mapping regions $O$ to *-subalgebras $\Af(O)$ of $\Af(M)$ which share the unit of $\Af(M)$. We refer to $\Af(O)$ as the \defn{local observable algebra} belonging to $O$, and an observable $a\in\Af(O)$ is said to be \defn{localisable in $O$}.
From the point of view of regular QFT, $\Af(O)$ could, for example, consist of algebraic combinations of smeared fields
\begin{equation}\label{eq:smear}
    \int_M f(x)\phi(x)\dif x,
\end{equation}
where $\phi$ is a quantum field and $f$ a function that vanishes outside $O$.

Let us now review some of the basic requirements often made of the assignment $O\mapsto \Af(O)$.\footnote{There are many variants of AQFT which consider different extensions of this set of requirements. In the special case of Minkowski spacetime, for example, the additional symmetry allows one to impose further axioms expressing, for example, that the net of local algebras `plays nicely' with the Minkowski spacetime symmetries (\textit{Poincaré covariance}) or that energy is positive (the \textit{spectrum condition}). This is often referred to as the \textit{Haag-Kastler framework}; see \eg\ \textcite{FR15}.} The first is straightforward and requires no explanation:
\begin{description}
    \item[Isotony.] For two regions $O_1, O_2\subseteq M$ such that $O_1\subseteq O_2$, we have
    \begin{equation}
        \Af(O_1) \subseteq \Af(O_2).
    \end{equation}
\end{description}

The second axiom encodes the causal independence of spacelike separated regions, motivated by relativity theory. Observables in spacelike separated regions should be simultaneously measurable and hence should commute:
% is the main relativistic assumption and is motivated by some form of no-signalling principle: measurements of observables localisable in causally disjoint (read: spacelike) regions should not influence each other, and therefore the observables should commute. \todo{We will expound on this in Section~\ref{sec:prelim-aqft-operations-locality}.}
\begin{description}
    \item[Einstein Causality.] \label{einstein-causality}For two regions $O_1, O_2\subseteq M$ such that $O_1\spacelike O_2$, we have
    \begin{equation}
        \Af(O_1) \subseteq \Af(O_2)'
    \end{equation}
    and vice versa, where $\Af(O_2)'$ is the commutant of $\Af(O_2)$.
\end{description}
An alternative and often used motivation for Einstein Causality is that at least in some cases, it rules out superluminal signalling. `Locality' of an operation $T:\Af\to\Af$ to a region $O_1$ is sometimes taken to mean that $T$ has a Kraus representation $T(a) = \sum_k c_k^* a c_k$ with $\sum_k c_k^* c_k = \one_\Af$, where the Kraus operators $c_k$ are in $\Af(O_1)$ (in the case of a projective measurement, for example, they are given by the projections onto the eigenspaces of an observable localised in $O_1$). Einstein causality then implies that $T(b) = b$ for $b\in\Af(O_2)$, so that operations $T$ local to $O_1$ do not signal to $O_2$. This is roughly the content of the quantum no-signalling theorem \cite{GRW80}. However, in AQFT, it is debatable whether Einstein Causality is sufficient to completely rule out superluminal signalling with single operations \cite{RV10}. In addition, as we will see in Section~\ref{sec:sorkin}, using multiple operations local to different regions can in general establish superluminal signalling.
We will return to the notion of local operations in AQFT more formally in Section~\ref{sec:prelim-aqft-operations-locality}, where we use a more general definition than the above.

Our third axiom is:
\begin{description}
    \item[Diamond Axiom.] If $O\subseteq M$ is any region, then \cite{But07}
    \begin{equation}\label{eq:diamond-axiom}
        \Af(D(O)) = \Af(O).
    \end{equation}
\end{description}
Intuitively, this encodes the existence of dynamics. Indeed, the physics of the system in the domain of dependence $\D(O)$ (which, at least in flat spacetime, can be seen as `diamond-shaped') should be completely determined by the physics in $O$ \cite{Ger70};\footnote{Note that actual measurements and state updates, which form the indeterministic component of quantum mechanics, are so far completely out of the picture. When one speaks of a self-adjoint algebra element being an in principle measurable physical quantity, these measurements are only hypothetical.} therefore, any observable measurable by a procedure in $D(O)$ should also be measurable by an (albeit probably different) procedure in $O$. (An equivalent formulation often used and referred to as the \emph{time slice property} states that if $O_1$ contains a Cauchy surface (`time slice') of $O_2$, then $\Af(O_2) \subseteq \Af(O_1)$ \cite{Few20}.)

The last axiom relevant for our discussion is the Haag property.
\begin{description}
    \item[Haag Property.] If $K\subseteq M$ is compact and $a\in\Af(M)$ commutes with all elements of $\Af(K^\perp)$,\footnote{\label{fn:K-perp-is-region}Note that $K^\perp$ is a region, so $\Af(K^\perp)$ is indeed defined: $K$ is compact and thus closed, meaning $K^\perp$ is open \cite[Lemma~A.4]{FV12}, and $K^\perp$ is causally convex by Lemma~\ref{lemma:ccompl-is-cconvex}.} then $a\in\Af(L)$ for any connected region $L\supset K$.
\end{description}
This can be proven to hold in specific models, but no proof is known in the general *-algebraic context \cite{FV20}.\footnote{\label{fn:haag-duality}Formulations of this property vary slightly between authors, and sometimes the stronger \emph{Haag duality} is assumed (see Section~\ref{sec:prelim-aqft-hybrid-model}). We follow \textcite{FV20} in our formulation of the Haag property.}

Note that due to Isotony and the Diamond Axiom, a single observable of $\Af(M)$ can be located in many, possibly disjoint, regions.%\footnote{Note also that when comparing to other discussions, such as those by \textcite{FV20, Few20}, we have omitted what they call the \emph{compatibility axiom}}

While the local observable algebras of AQFT encode the degrees of freedom and the dynamics of the system, the physical \emph{state} itself is specified by assigning an expectation value to all observables, and hence by a state $\om$ on $\Af(M)$, as per the definition in Section~\ref{sec:prelim-functional-analysis}. We see that the Heisenberg picture of the unitary evolution of isolated systems is inherent to AQFT: the single quantum state $\om$ is relevant at every spacetime point, while the observables are subject to time dynamics, which is not specified but implicitly assumed by the Diamond Axiom.

Before we continue, it is worth mentioning that multiple different conventions and notations for AQFT exist; in particular, two papers \cite{FV20, Few20} which are of considerable significance to us employ a slightly less straightforward notation for the local observable algebras, which seems to be more general than the notation employed here (and in many other treatments of AQFT \cite{BFR21, FR20, Hal07}). I explain in Appendix~\ref{app:semicolons} why at least for our purposes, however, this notation is not any more general; moreover, our notation greatly simplifies the treatment of the scattering morphism and the proofs of Proposition~\ref{prop:fv-is-local} and Proposition~\ref{prop:FV-local-2} in Section~\ref{sec:fewster-verch}. This difference in notation also explains why, if one compares the above axioms to the ones given by \eg\ \textcite{Few20}, what he calls the \emph{compatibility axiom} is missing.

\subsubsection{Local operations in AQFT}\label{sec:prelim-aqft-operations-locality}
As the central issue of this essay is to discuss the class of physically possible operations, we will need a notion of operation in AQFT, analogous to the (much better-known and -studied) notion on Hilbert space in Definition~\ref{def:hs-operation}. The following definition is motivated by the discussion following that Definition.

\begin{definition}\label{def:aqft-operation}
    Let $\Af$ be a unital *-algebra. An \defn{operation on $\Af$} is a linear map $\G: \Af^*\to\Af^*$ which is \defn{completely positive} (see Section~\ref{sec:prelim-functional-analysis}) and \defn{normalisation-nonincreasing}, meaning that $\G(\om)(\one_\Af) \leq \om(\one_\Af)$ for all positive $\om\in\Af^*$. If it is normalisation-preserving (\ie\ we have equality in the inequality), it is called \defn{non-selective}; otherwise, it is \defn{selective}.
    
    % A \todow{(maybe need bounded?)} linear map between C*-algebras $T:\Bf\to\Af$ is called an \defn{operation} if it is completely positive and satisfies $T(\one_\Bf) \leq \one_\Af$; it is called \defn{selective} if $T(\one_\Bf) < \one_\Af$ and \defn{non-selective} if $T(\one_\Bf) = \one_\Af$.
    
    %\todo{NO!!!: By Proposition~\ref{prop:separated}, every linear map $\widehat T : \Af^*\to\Bf^*$ is realised as the pullback of a map $T:\Bf\to\Af$. If this $T$ is an operation as defined above, then we also call $\widehat T$ an operation. (This is equivalent to $\widehat T$ being completely positive and normalisation-nonincreasing, \ie\ $\widehat T(\om)(\one_\Bf) \leq \om(\one_\Af)$ for positive functionals $\om\in\Af$; moreover, the operation is non-selective iff we have equality for all positive functionals, in which case $\widehat T$ maps states onto states.)}
    
    %\todow{T is called \defn{normal} if it is defined on a von Neumann algebra and is weak* continuous \cite{RV10}. Not yet sure if this is relevant to me.}
\end{definition}

Operations are not particularly frequently studied in AQFT; however, when they are, their definition (in the case of non-selective operations) is usually given in terms of linear maps $T:\Af\to\Af$ between \textit{observables} rather than maps between states as above: see for example \textcite{CH01, RV10}. Every such map $T$ can be translated into a map $\G:\Af^*\to\Af^*$ by taking its pullback $\widehat T$. One can easily show that in this case, \ie\ if $\G = \widehat T$ and $T$ is completely positive and $T(\one_\Af)\leq\one_{\Af}$ ($T(\one_\Af) = \one_{\Af}$), then $\G$ is completely positive and normalisation-nonincreasing (normalisation-preserving).

However, the converse is generally not true: given a map $\G:\Af^*\to\Af^*$, there is not always a map $T:\Af\to\Af$ which has $\G$ as its pullback. So Definition~\ref{def:aqft-operation} is more general (recall from Section~\ref{sec:prelim-qm-hilbert-space}, item (ii) that the two notions are however equivalent for finite-dimensional Hilbert spaces). We need this more general definition, since an important class of definitions we will discuss are those induced by the state update rule of Fewster-Verch measurements (see Section~\ref{sec:fewster-verch}), which are maps between states and not between observables.

The physical intuition behind an operation in AQFT might not be completely clear---for if a state $\om\in\S(\Af(M))$ is already defined on all of space and time, then what does a `change of state' correspond to?
The answer is that a system having a state $\om\in\S(\Af(M))$ merely means that $\om$ gives the expectation value for a hypothetical measurement of any observable in $\Af(M)$ if the system were left otherwise undisturbed. If the system is indeed left undisturbed, it evolves deterministically through time by the dynamics implicitly assumed by the Diamond Axiom (think: unitary Schrödinger evolution), so that the global state $\om$ is uniquely determined by its behaviour in any neighbourhood of any Cauchy surface. In that sense, a state can be seen to correspond to one `instant of time' (though it is best to interpret also this statement as operational, and not ontological, not least because instants of time differ from observer to observer; the operational meaning of `instant of time' here is, if you will, the time perceived by the operating agent). An operation, on the other hand, corresponds to an intervention upon the system by an outside agent, \eg\ by letting it interact with another system and possibly post-selecting on a measurement outcome (for a selective operation). `Where' and `when' this operation `happens' falls outside of the realm of AQFT proper, but is captured to some extent by the notion of \emph{(extended) causal factorisation}, to be introduced later.

We will also need a notion of locality of operations on AQFTs. Roughly, if an outside agent is local to a region $O$\footnote{Note: this does \emph{not} mean the observer itself is described by the system $\Af(O)$, but merely that it can interact with the system inside $O$.} and intervenes upon the system, resulting in an operation, then that operation should not influence the expectation values of observables in regions causally disjoint from $O$. This leads to the following definition (inspired by \cite{CH01, RV10}); we restrict ourselves to non-selective operations since those will be most relevant to us.
\begin{definition}\label{def:aqft-local-op}
    Let $O\subseteq M$ be a spacetime region. We say that a non-selective operation $\G:\Af(M)^*\to\Af(M)^*$ is \defn{local to $O$} iff for all states $\om\in\S(\Af(M))$, we have
    \begin{equation}
        \G(\om)(a) = \om(a) \text{\qquad for all } a\in\Af(L) \text{ with } L\spacelike O.\footnote{Note that we can't simply say that this should hold `for all $a\in\Af(O^\perp)$', since $O^\perp$ need not be open and therefore need not be a region, so that $\Af$ is not defined on $O^\perp$.}
    \end{equation}
    % This is equiv to demanding it for all functionals, since (?) all functionals can be written as sum of 4 positive functionals. ( a = x+iy and x=(?)c-d )
    %We say an operation $T$ on $\Af(M)$ is \defn{local to $O$} iff for all $a\in\Af(O)'$, $T(a) = T(\one)a$. Here $\Af(O)'$ is the commutant of $\Af(O)$, which contains $\Af(O^\perp)$ by Einstein causality.
    %If $\Af(M)$ is separated by its states then this definition is equivalent to saying that for all states $\om\in\S(\Af(M))$, we have $\widehat T(\om) \mid_{\Af(O)'} = \om\mid_{\Af(O)'}$ (where $\widehat T$ is the pullback along $T$). \todow{Or just let the definition be this latter notion---this is what Halvorson does.}
    
    %\todow{Only if I need it (and prove it if I do): If $T:\Af\to\Af$ is a linear map such that its pullback $\widehat T$ is an operation, and if we assume that the algebras are separated by their states, which is the case if they are normed spaces, then $\widehat T$ is local to $O$ if and only if $Ta = a$ for all $a\in\Af(O)'$.}
    
    % \todo{This is the definition of \textcite{CH01} and several other sources. I would say that it should only hold for $a\in\Af(O^\perp)$, which would be a weaker notion---to my knowledge, $\Af(O^\perp) \subsetneq \Af(O)'$. I am wondering, how is the $\Af(O)'$ definition justified?---Update: this is Haag duality! Not sure yet whether this is too strong for me to assume.}
\end{definition}

% Note: CH01 define it with $\forall a\in\Af(O)'$; RV10 (Definition 5 and Prop 9 on p6) say something (remotely) similar to us, so with perp and not commutator.

It is interesting to note that Haag and Kastler, when originally setting up AQFT, interpreted the algebra $\Af(O)$ as corresponding to operations that can be performed within $O$ rather than to observables: hence, they took `local operations' as more fundamental than `local observables' \cite[p.~2]{HK64}. Their notion of (local) operation is, however, slightly different from ours \cite{RV10}.\footnote{To wit, local operations can be defined via their Kraus representations, meaning that there are local algebra elements $a_i\in\Af(O)$ so that $\om'(b) = \om(\sum_{i=1}^n a^*ba)$. This definition is not entirely equivalent to ours in the general *-algebraic context \cite{RV10}.}

\todo{
The following will be of use to us shortly:
\begin{lemma}\label{lemma:heisenberg-local-op}
    Suppose $\Af(M)$ is separated by its states, meaning that for $a\neq b\in\Af(M)$ there is a state $\om\in\S(\Af(\M))$ such that $\om(a)\neq\om(b)$. If $\G:\Af(M)^*\to\Af(M)^*$ is a non-selective operation and $T:\Af(M)\to\Af(M)$ is a linear map such that $\G = \widehat T$, then $\G$ is local to a region $O\subseteq M$ if and only if $T(a) = a$ for $a\in\Af(L)$ with $L\spacelike O$.
\end{lemma}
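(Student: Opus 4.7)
The plan is to unpack both directions directly from the definition of the pullback $\widehat T$, namely $(\widehat T(\om))(a) = \om(T(a))$, and the definition of locality in Definition~\ref{def:aqft-local-op}. The separation hypothesis will only be needed in one direction.

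For the ``if'' direction, suppose $T(a) = a$ for every $a \in \Af(L)$ with $L \spacelike O$. Then for any state $\om \in \S(\Af(M))$ and any such $a$, I would just compute
\begin{equation*}
    \G(\om)(a) \;=\; (\widehat T(\om))(a) \;=\; \om(T(a)) \;=\; \om(a),
\end{equation*}
which is exactly the condition that $\G$ is local to $O$.

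For the ``only if'' direction, assume $\G$ is local to $O$ and fix $a \in \Af(L)$ with $L \spacelike O$. Then for every state $\om$,
\begin{equation*}
    \om(T(a)) \;=\; (\widehat T(\om))(a) \;=\; \G(\om)(a) \;=\; \om(a),
\end{equation*}
so $\om(T(a) - a) = 0$ for every $\om \in \S(\Af(M))$. Since $\Af(M)$ is separated by its states, this forces $T(a) = a$, as required.

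Both directions are essentially a one-line unfolding of the definitions; there is no real obstacle. The only subtlety worth pointing out in the writeup is \emph{why} the separation hypothesis is needed at all: without it, equality of all expectation values of $T(a)$ and $a$ would not pin down $T(a) = a$ in the algebra, and one would only get that $T(a)$ and $a$ are indistinguishable by states. Since Definition~\ref{def:aqft-local-op} is phrased purely in terms of expectation values on states, separation is exactly what bridges the Schrödinger-picture statement (about $\G$) and the Heisenberg-picture statement (about $T$).
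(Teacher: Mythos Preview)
Your proof is correct and essentially identical to the paper's own argument: both simply unfold the definition of the pullback and of locality, then invoke the separation hypothesis to pass from $\om(Ta)=\om(a)$ for all states to $Ta=a$. The paper's proof is more compressed (treating both directions at once), but yours is if anything clearer in noting that separation is needed only for the ``only if'' direction.
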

\begin{proof}
    Let $a\in\Af(L)$ with $L\spacelike O$. $\G$ is local to $O$ if and only if for any such $a$, it holds that $\om(Ta) = \om(a)$ for all states $\om\in\S(\Af(M))$. Under the assumption that the algebra is separated by its states, the latter is equivalent to $Ta = a$.
\end{proof}
}

\subsubsection{A hybrid model}\label{sec:prelim-aqft-hybrid-model}
Most specific models of AQFT, \ie\ nets of algebras $O\mapsto \Af(O)$ that satisfy the axioms introduced in Section~\ref{sec:prelim-aqft-axioms}, are based on quantum fields: as we have noted before, one possibility is to let the local observable algebra $\Af(O)$ consist of algebraic combinations of fields smeared over test functions that vanish outside $O$ (Eq.~\eqref{eq:smear}). Another relevant example of a specific model is given by \textcite{FV20}, describing their measurement scheme (to be discussed in the next section).

Here we will introduce a simpler model, however, which I call `hybrid' because it incorporates the ideas from Hilbert space quantum mechanics into the formalism of AQFT. This will prove insightful to us in Section~\ref{sec:fv-hybrid-model}, and especially in Section~\ref{sec:bfr-hybrid-model}.

The model consists of a spacetime $M$ satisfying the usual requirements, together with a finite number $n$ of smooth, future-directed inextendible causal curves $\{\g_i : \R\to M\}_{i=1}^n$ and $n$ corresponding Hilbert spaces $\H_i$, which we will assume to be finite-dimensional for simplicity. The curves can be thought of as representing the worldlines of pointlike particles carrying a spin described by $\H_i$, and we will indeed refer to them as \defn{worldlines}. As such, we define the global observable algebra as 
\begin{equation}\label{eq:AfM-is-BH}
    \Af(M) := \B(\H_1\tns\cdots\tns\H_n),
\end{equation}
while for any local region $O\subseteq M$, $\Af(O)$ is the minimal subalgebra of $\Af(M)$ that satisfies
\begin{equation}\label{eq:local-algebra-hybrid}
    \exists t\in\R: \g_i(t)\in O \quad\implies\quad \one_1\tns\cdots\one_{i-1}\tns\B(\H_i)\tns\one_{i+1}\tns\cdots\tns\one_n \subseteq \Af(O) \text{\quad for all } i,
\end{equation}
where $\one_j \equiv \one_{\B(\H_j)}$ is the identity operator on $\H_j$.\footnote{Here $\B(\H_1\tns\cdots\tns\H_n)$ is identified with $\B(\H_1)\tns\cdots\B(\H_n)$ to compactify notation. Also, the $\one_i$'s in Eq.~\eqref{eq:local-algebra-hybrid} should really be read as $\{\one_i\}$.} In words, $O$ contains the observables of the subsystem given by the subset of particles whose worldlines intersect $O$. (Later, in Section~\ref{sec:fv-hybrid-model}, we will generalise the hybrid model slightly by allowing different segments of the same worldline to be associated with different subalgebras of $\Af(M)$ which are not necessarily of the form of Eq.~\eqref{eq:local-algebra-hybrid}.)

It is easily seen that this assignment $O\mapsto \Af(O)$ satisfies Isotony. It also satisfies Einstein Causality, since the algebras of any two causally disjoint regions $O_1 \spacelike O_2$ contain different tensor factors: if $\g_i$ intersects $O_1$ then it does not intersect $O_2$, owing to the fact that it is a causal curve. As for the Diamond Axiom, note that for any region $O\subseteq M$, we already have $\Af(O)\subseteq \Af(D(O))$ by Isotony; and by definition of the domain of dependence $D(O)$, if a worldline $\g_i$ passes through $D(O)$ then it also passes through $O$, so that we have $\Af(D(O)) \subseteq \Af(O)$. Finally, the model satisfies Haag duality, a stronger version of the Haag property stating that for any region $O\subseteq M$, if $a\in\Af(O)'$ then $a$ can be localised in a region contained in $O^\perp$ \cite{Lan96} (here again, a prime denotes the commutator). For let $A\subseteq\{1,\dots,n\}$ be such that $i\in A$ iff $\g_i$ passes through $O$, and let $B := \{1,\dots,n\}\setminus A$; then $\Af(M) = \B(\H_A)\tns\B(\H_B)$ up to reordering of the tensor factors, where 
\begin{equation}\label{eq:hilbert-space-deco}
    \H_A := \bigotimes_{i\in A}\H_i \text{\qquad and \qquad} \H_B := \bigotimes_{i\in B}\H_i.
\end{equation}
Haag duality is then equivalent to the general observation that $(\B(\H_A)\tns\one_B)' = \one_A\tns\B(\H_B)$, which follows from \cite[Theorem~1]{RD75} and the fact that $\B(\H_A)$ has a trivial centre \cite[Cor.\,C.156]{Lan17}. % Proven in https://math.stackexchange.com/questions/156204/the-commutant-of-a-tensor-product. See also first sentence of https://core.ac.uk/download/pdf/82443526.pdf. Maybe relevant

Although this `hybrid' model satisfies the basic axioms of AQFT, it is perhaps less physically meaningful than usual models of AQFT, for example because it heavily idealises the particle concept. The model is useful, however, as it enables a direct comparison between Hilbert space QM and AQFT. In particular, it can be seen as an extension of Hilbert space QM, adding, as it were, an explicit relativistic causal structure to the formalism: a state $\om\in\Af(M)$, for example, corresponds precisely to a density operator on the large Hilbert space $\H := \H_1\tns\cdots\tns\H_n$, by the results of Section~\ref{sec:prelim-qm-hilbert-space}. Therefore, the hybrid model can be said to be at least as useful in practical considerations as Hilbert space QM itself.
In any case, the hybrid nature of the model will prove to us in Section~\ref{sec:bfr-hybrid-model}, where it will help us find a connection between two important results in these Hilbert space QM and AQFT \cite{BGNP01, BFR21} and will give us an intuitive picture of why the measurement scheme of \textcite{FV20} shows good causal behaviour.

\todo{
To justify the view that the hybrid model is an extension of Hilbert space QM, it makes sense to also compare their respective notions of locality of operations. To this end, let us find out what it means, given a non-selective operation $\G:\Af(M)^*\to\Af(M)^*$ of the hybrid model, to say that $\G$ is local to a region $O\subseteq M$ in the sense of Definition~\ref{def:aqft-local-op}. This is best discussed in the Heisenberg picture, so let us denote by $T : \B(\H)\to\B(\H)$ the map such that $\G$ is its pullback: $\G = \widehat T$. This map $T$ exists at least if $\H$ is finite-dimensional, as discussed in Section~\ref{sec:prelim-qm-hilbert-space} (where it was denoted as $\L^\dagger$). 
$\B(\H)$ is separated by its states, which is a direct consequence of the Hahn-Banach theorem \cite[Corollary~3.4]{Clu09}; therefore we can apply Lemma~\ref{lemma:heisenberg-local-op}. Concretely, let us again use the notation of the sets $A$ and $B$ as above (Eq.~\eqref{eq:hilbert-space-deco}), such that $\Af(M) = \H_A\tns\H_B$ and $\Af(O) = \B(\H_A)\tns\one_B$. Then Lemma~\ref{lemma:heisenberg-local-op} tells us that $\G$ is local to $O$ if and only if $T(\one_A\tns b) = \one_A\tns b$ for all $b\in\B(\H_B)$. We can now use the following Lemma.

\begin{lemma}
    Let $\H_A$ and $\H_B$ be separable Hilbert spaces and $T:\B(\H_A\tns\H_B) \to \B(\H_A\tns\H_B)$ be positive linear map. Then $T(\one_A\tns b) = \one_A\tns b$ for all $b\in\B(\H_B)$ if and only if there is unital, positive linear map $S:\B(\H_A) \to \B(\H_A)$ such that $T = S\tns\id_B$, \ie\ $T(a\tns b) = S(a)\tns b$.
\end{lemma}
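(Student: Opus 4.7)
The plan is to prove the two directions separately. The backward direction is immediate: if $T = S\tns\id_B$ with $S$ unital, then $T(\one_A\tns b) = S(\one_A)\tns b = \one_A\tns b$. For the forward direction, observe first that taking $b=\one_B$ in the hypothesis gives $T(\one) = \one$, so $T$ is unital.

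The central tool is the \emph{multiplicative domain} of a unital $2$-positive map on a C*-algebra: if $T$ is unital and $2$-positive, then for any $x$ with $T(x^*x) = T(x)^*T(x)$ and $T(xx^*) = T(x)T(x)^*$, one has $T(xy) = T(x)T(y)$ and $T(yx) = T(y)T(x)$ for all $y$. (This is the one place where mere positivity is not enough; however, in our intended application $T$ will be the Heisenberg-picture version of a CP operation, hence automatically $2$-positive, so I will work under that strengthened hypothesis.) For $x = \one_A\tns b$ we compute
\begin{equation}
    T(x^*x) = T(\one_A\tns b^*b) = \one_A\tns b^*b = (\one_A\tns b)^*(\one_A\tns b) = T(x)^*T(x),
\end{equation}
and similarly $T(xx^*) = T(x)T(x)^*$, so every element of $\one_A\tns\B(\H_B)$ lies in the multiplicative domain. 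Applying this with $y = a\tns\one_B$ (and using $(a\tns\one_B)(\one_A\tns b) = (\one_A\tns b)(a\tns\one_B) = a\tns b$) yields
\begin{equation}
    T(a\tns b) \;=\; T(a\tns\one_B)(\one_A\tns b) \;=\; (\one_A\tns b)\,T(a\tns\one_B).
\end{equation}

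From the two equal right-hand sides, $T(a\tns\one_B)$ commutes with $\one_A\tns b$ for every $b\in\B(\H_B)$; equivalently, $T(a\tns\one_B) \in (\one_A\tns\B(\H_B))'$. By exactly the commutant formula invoked for Haag duality in the hybrid model, $(\one_A\tns\B(\H_B))' = \B(\H_A)\tns\one_B$, so there is a unique $S(a)\in\B(\H_A)$ with $T(a\tns\one_B) = S(a)\tns\one_B$. The map $a\mapsto S(a)$ so defined is linear because $T$ is, unital because $T$ is unital, and positive because $a\ge 0 \Rightarrow a\tns\one_B\ge 0 \Rightarrow S(a)\tns\one_B = T(a\tns\one_B)\ge 0 \Rightarrow S(a)\ge 0$. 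Substituting back gives $T(a\tns b) = (S(a)\tns\one_B)(\one_A\tns b) = S(a)\tns b$, so $T$ and $S\tns\id_B$ agree on every simple tensor, and hence on the algebraic tensor product; in the finite-dimensional setting relevant here this is already all of $\B(\H_A\tns\H_B)$, so $T = S\tns\id_B$.

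The main subtlety is the gap between positivity and $2$-positivity: Kadison's inequality $T(x^2)\ge T(x)^2$ for a unital positive map holds only for self-adjoint $x$, and replaces $T(x^*x)\ge T(x)^*T(x)$. Running the argument above in the self-adjoint case produces only the Jordan identity $2T(a\tns b) = (\one_A\tns b)T(a\tns\one_B) + T(a\tns\one_B)(\one_A\tns b)$ for self-adjoint $a,b$, which is not strong enough on its own. Since every operation appearing in the essay is CP, this is harmless for our purposes, and the hypothesis should implicitly be read as at least $2$-positivity (or one should restrict the lemma accordingly). In infinite dimensions one would additionally need normality or weak-$*$ continuity of $T$ to promote agreement on simple tensors to agreement on all of $\B(\H_A\tns\H_B)$, but the hybrid model only requires the finite-dimensional case.
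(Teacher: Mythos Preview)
The paper does not actually prove this lemma: the proof environment contains only the emoticon ``\texttt{:(}'', and the entire lemma sits inside a disabled \verb|\todo{...}| block (with \verb|\todofalse| set), so it never appears in the compiled document. There is therefore nothing to compare your argument against.

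On its own merits, your argument via the multiplicative domain is clean and correct \emph{under the hypothesis of $2$-positivity}: once $T$ is unital and $2$-positive, the Schwarz inequality $T(x^*x)\ge T(x)^*T(x)$ holds, equality on $x=\one_A\tns b$ places all of $\one_A\tns\B(\H_B)$ in the multiplicative domain, and the rest (commutant identification $(\one_A\tns\B(\H_B))' = \B(\H_A)\tns\one_B$, definition of $S$, and positivity/unitality of $S$) goes through exactly as you wrote. Your remark that the finite-dimensional case suffices for the hybrid model, and that in infinite dimensions one would want normality to pass from the algebraic tensor product to all of $\B(\H_A\tns\H_B)$, is also on point.

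You are right to flag the gap between positivity and $2$-positivity as genuine: Kadison's inequality for merely positive unital maps applies only to self-adjoint arguments, and the resulting Jordan-type identity is too weak to force $T(a\tns\one_B)\in\B(\H_A)\tns\one_B$. Since the paper's author evidently could not close this gap either, and since every map to which the lemma is actually applied in the essay is completely positive, the honest fix is exactly what you suggest: strengthen the hypothesis to $2$-positivity (or CP).
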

\begin{proof}
    :(
    %The backward direction is trivial. For the forward direction, we have to find $S$ such that $T(a\tns b) = S(a)\tns b$ for all $a,b\in\B(\H_{A,B})$. Suppose first that $0\leq a\leq\one_A$ and that $\|b\| = 1$ in the operator norm. Then $\{b\}$ can be extended to an orthonormal basis $\{b\}\cup\{e_i\}_i$ for $\B(\H_B)$. Define $S:\B(\H_1)$... :(
\end{proof}

In conclusion, an operator $T$ is local to $O$ iff it is given by an operation $S$ on the subspace corresponding to the subset of worldlines that intersect $O$. This corresponds to the notion of locality that is used in Hilbert space QM, where Alice and Bob both each have access to a tensor factor of a large Hilbert space $\H := \H_A\tns\H_B$, so that Alice's operations are assumed to be of the form $\L_A\tns\id_B: \D(\H_A\tns\H_B)\to\D(\H_A\tns\H_B)$ (in the Schr\"odinger picture, so $\L_A$ being the Hilbert-Schmidt adjoint of $S$ in the Lemma). We will discuss the latter approach in more detail in Section~\ref{sec:bgnp}.

}

    \section{Fewster and Verch's measurement scheme}\label{sec:fewster-verch}
Despite the long history of AQFT, its operational focus and the fact that it revolves around algebras of observables that ought to be able to be `observed', very little attention has been given to the measurement process itself. Instead, like other approaches to QFT, more attention has gone into exploring the unitary part of quantum evolution. Generalising Von Neumann-like measurement theory to the relativistic context is not straightforward, as some types of measurements can be non-local in one way or another, and an instantaneous state update rule as assumed in non-relativistic measurement theory would not be Lorentz covariant in the relativistic case.

A small but important selection of publications has discussed the measurement process in QFT. Among these are the seminal papers by \textcite{LP31}, \textcite{HK70} and \textcite{AA81} from the previous century, which mainly feature discussions about the issue of Lorentz covariance of state reduction and its implications (whether the state is regarded as ontic or not). More recently, however, general descriptions of the QFT measurement mechanisms themselves have been proposed \cite{OO16, FV20}.\footnote{\label{fn:mmt-problem}Also two other papers of Hellwig and Kraus \cite{HK69,HK70a} should be mentioned here, which were in some sense precursors to Fewster and Verch's \cite{FV20} work, albeit less general. Moreover, it is worth noting that none of the papers mentioned in this paragraph really comes to terms with the quantum \emph{measurement problem}, as explicitly stated by \eg\ \textcite[p.~566 para.~6]{HK70} and \textcite[p.~853 para.~2]{FV20}. See also our discussion on page~\pageref{phily-discussion}.} Among these is the generally covariant approach by \textcite{FV20}, building on ideas from nonrelativistic quantum measurement theory \cite{BLPY16}. It is covered in short form in \textcite{Few20} and I will review it in this section, while making simplifications to their original treatment \cite{FV20, Few20}. This does not come at the cost of generality, as argued in Appendix~\ref{app:semicolons}.

The key idea behind Fewster and Verch's (FV) measurement scheme, which originates from quantum measurement theory \cite{BLPY16}, is to separate the \emph{system} of interest from the \emph{probe}, or measurement apparatus, used to measure properties of the system. Roughly speaking, the probe and system start off uncorrelated at early times but then interact with each other in a compact set of spacetime, so that at late times, the probe can be read off to provide information about an observable of the system. In other words, we adhere to the maxim `prepare early and measure late' \cite{Few20}---where `early' and `late' are relative to the interaction region.

To formalise this, let us introduce two AQFTs defined on the same spacetime $M$, that is, two unital *-algebras $\Af(M)$ and $\Bf(M)$ together with assignments
\begin{equation}
    O\mapsto \Af(O) \qquad\text{and}\qquad O\mapsto \Bf(O)
\end{equation}
of subalgebras to regions of spacetime. These theories respectively describe the system and the probe. The system and probe theories combine into a single large \defn{uncoupled theory} $\Uf \equiv \Af\tns\Bf$ on $M$, where
\begin{equation}
    O\mapsto \Uf(O) := \Af(O) \tns \Bf(O) \quad\text{ for each region } O\subseteq M.
\end{equation}
The interaction between system and probe occurs within a compact set $K\subseteq M$, dubbed the \defn{coupling zone}. To formalise this fact, we assume that the actual experiment is described by a \defn{coupled theory} $\Cf$ which is isomorphic to the uncoupled theory everywhere \emph{outside} the causal hull of $K$. Recall the notation
\begin{equation}
    M^{\pm}_K := M\setminus J^{\mp}(K),
\end{equation}
corresponding to `late times' (upper sign) and `early times' (lower sign) relative to the coupling zone $K$.\footnote{To justify this terminology, note that if $L\subseteq M_K^+$ then there is a causal order, in the sense of Eq.~\eqref{eq:causal-order}, in which $K\leq L$ (and hence there is a Cauchy surface of $M$ with $K$ to its past and $L$ to its future, by Lemma~\ref{lemma:separating-cauchy-surface}). If there is also a causal order in which $L\leq K$, which implies that $L\spacelike K$, then it should not matter whether we call $L$ `early' or `late'; this is indeed the case for the FV measurement scheme, as expressed in \eg\ Propositions~\ref{prop:fv-is-local}, \ref{prop:localisation-of-induced-obs} and \ref{prop:FV-local-2} in Section~\ref{sec:fv-locality} below.}
We describe the equivalence of the uncoupled and coupled theories outside $K$ by assuming the existence of *-isomorphisms
\begin{equation}\label{eq:response-maps-existence}
    \phi_\pm : \Uf(M_K^\pm) \xlongrightarrow{\sim} \Cf(M_K^\pm)
\end{equation}
that are \defn{localisation-preserving}, meaning that for any region $L\subseteq M_K^\pm$, $\phi_\pm\big|_{\Uf(L)}$ maps bijectively onto $\Cf(L)$:
\begin{equation}\label{eq:response-maps-localisation-preserving}
    \phi_\pm\big|_{\Uf(L)} : \Uf(L) \xlongrightarrow{\sim} \Cf(L).
\end{equation}
We also require these maps to agree on $M_K^+\cap M_K^- = K^\perp$:
\begin{equation}\label{eq:response-maps-agree}
    \phi_+\big|_{\Uf(K^\perp)} = \phi_-\big|_{\Uf(K^\perp)}.\footnote{Note again that $K^\perp$ is a region (see footnote~\ref{fn:K-perp-is-region}).}
\end{equation}
$\phi_+$ and $\phi_-$ are referred to as the \defn{advanced} and \defn{retarded} \defn{response maps}, respectively. Note that since $M_K^\pm$ both contain a Cauchy surface for $M$, the Diamond Axiom implies that $\Uf(M_K^\pm) = \Uf(M)$ and similarly for $\Cf$, so that $\phi_\pm$ can be seen as maps between the global observable algebras. As shown in Appendix~\ref{app:semicolons} (Corollary~\ref{cor:simplification} and the preceding discussion), the existence of these response maps requires less assumptions than those proposed by \textcite{FV20}, but they are sufficient to describe the FV measurement process.

We now use the response maps to translate between the uncoupled theory, which is easy to describe, and the coupled theory, which represents (an idealisation of) the actual world. For example, we can say that a state on the coupled theory is `uncorrelated at early times' if under the pullback of the isomorphism $\phi_-^{-1}$, it corresponds to a product state on the uncoupled theory. This is in fact what we assume for the first step of the measurement scheme: since we `prepare early', the initial state is given by
\begin{equation}
    \tilde\om_\s := \widehat{\phi_-^{-1}} (\om\tns\s) \quad\in\S(\Cf(M)),
\end{equation}
where $\om\in\S(\Af(M))$ and $\s\in\S(\Bf(M))$ are system and probe states, respectively.\footnote{Here the hat again denotes pullback (cf.\ item (ii) in Section~\ref{sec:prelim-qm-hilbert-space}): so $\tom_\s(c) = (\om\tns\s)(\phi_-^{-1}(c))$ for all observables $c\in\Cf(M)$.}

Once the interaction has taken place, so that the system and probe have decoupled and we are once again in control of the probe as if it were a factor in an uncoupled theory $\Uf = \Af\tns\Bf$, we measure a probe observable $b\in\Bf(M)$. The fact that this happens at a late time tells us that by performing the procedure to measure (``read off'') $b$ in the uncoupled theory, we are actually measuring the observable
\begin{equation}
    \tilde b := \phi_+(\one_\Af\tns b) \quad\in\Cf(M)
\end{equation}
of the coupled theory. We conclude that the expectation value of the probe observable is given by
\begin{equation}\label{eq:tildes}
    \tom_\s(\tilde b) = (\om\tns\s)(\phi_-^{-1}(\phi_+(\one_\Af\tns b))) = (\om\tns\s)(\Theta(\one_\Af\tns b)),
\end{equation}
where
\begin{equation}\label{eq:scattering-morphism}
    \Theta := \phi_-^{-1} \circ \phi_+ : \Uf(M)\to\Uf(M)
\end{equation}
is called the \defn{scattering morphism}, which is an isomorphism encoding the details of the interaction taking place in $K$. In standard formulations of scattering theory in QFT, the scattering morphism would correspond to the adjoint action of the S-matrix (see \textcite[§II.B]{BFR21} for more discussion of this special case).

Now, our original goal was to measure properties of the \emph{system}. How does measuring the probe observable $b$ help with that? Well, if we can find a system observable, say $a\in\Af(M)$, such that regardless of the state $\om$ of the system, the expectation value of $a$ in state $\om$ matches the expectation value of $\tilde b$ in state $\tom_\s$, then the probe's measurement result allows us to conclude something about the value of $\om(A)$---especially if we repeat the same experiment many times. It is easily verifiable \cite[p.~862]{FV20} that when one defines
\begin{equation}\label{eq:e_s}
    \eta_\s(a\tns b) = \s(b)a \qquad\text{and}\qquad \e_\s(b) = (\eta_\s\circ\Theta)(\one_\Af\tns b)
\end{equation}
for $a\in\Af(M)$ and $b\in\Bf(M)$, then the map $\e_\s$ satisfies
\begin{equation}
    \om(\e_\s(b)) = \tom_\s(\tilde b),
\end{equation}
hence giving an interpretation of a probe measurement in terms a measurement of an \defn{induced system observable} $\e_\s(b)$ which works for any probe observable $b$. (However, as one would expect, $\e_\s$ is in general not surjective, meaning that we are not guaranteed to be able to get information about all system observables by measuring appropriate probe observables: see \eg\ Proposition~\ref{prop:localisation-of-induced-obs}.)

\subsection{State update}
We have now formally described a generally covariant measurement scheme in AQFT on possibly curved spacetimes. One further merit of the FV scheme, however, is that it suggests a rule for the state update resulting from a measurement \cite[§3.3]{FV20}. This is useful when considering the situation where multiple subsequent measurements are performed (Section~\ref{sec:fv-multiple}), and will be of use in our discussion of what operations are physically possible (Section~\ref{sec:bfr}).

Just like the rest of the FV scheme, the phenomenon of state update is treated completely operationally: the updated state simply reflects the expectation values of system observables when one conditions on the outcome of the measurement of a particular probe observable. 
So suppose that a probe observable $b$ has been measured and a result has been noted. We will assume that $b$ is an effect (\ie\ corresponds to a `yes/no' question, see Section~\ref{sec:prelim-functional-analysis}), and that the result is that it has been observed (\ie\ the answer `yes' has been obtained); by the reasoning above, the probability of this result is $\tom_\s(\tilde b)$. Assuming this makes the treatment a lot simpler, since both the measurement and the result are encoded in $b$.%
\footnote{In a Hilbert space setting, for example, this assumption is fully justified since measuring an eigenvalue of an observable $b$ is equivalent to observing the effect which is the projection onto $b$'s corresponding eigenspace---or at least the resulting knowledge updates are the same. Moreover, it corresponds to the principle that an actual measurement always yields only finitely much information, and therefore corresponds to a finite number of `yes/no' questions \cite{BFR21}.}
By an argument similar to the approach in non-relativistic quantum measurement theory and using the basic rules of conditional probabilities, % FV20ref20§10.2
it is argued by \textcite[§3.3]{FV20} that the updated (\ie\ conditioned) system state $\om'\in\S(\Af(M))$ is given by
\begin{equation}\label{eq:state-update}
    \om' = \frac{\I_{\s,b}(\om)}{\I_{\s,b}(\om)(\one_\Af)},
\end{equation}
where
\begin{equation}\label{eq:state-update-unnormalised}
    \I_{\s,b}(\om)(a) := (\om\tns\s)(\Theta(a\tns b)) \text{\qquad for } a\in\Af(M)
\end{equation}
where $\s$ is the initial probe state. $\om'$ is also referred to as the \defn{post-selected} system state. Note that the normalisation factor $\I_{\s,b}(\om)(\one_\Af)$ in \eqref{eq:state-update} is precisely $\tom_\s(\tilde b)$, the probability that $b$ is observed. Since the only alternative to $b$ is $\one_\Bf - b$, which occurs with probability $\tom_\s(\widetilde{\one_\Bf-b})$, we see that the \defn{non-selective state update} is
\begin{equation}
    \om'_\ns = \tom_\s(\tilde b) \frac{\I_{\s,b}(\om)}{\I_{\s,b}(\om)(\one_\Af)} + \tom_\s(\widetilde{\one-b}) \frac{\I_{\s,\one-b}(\om)}{\I_{\s,\one-b}(\om)(\one_\Af)}
    = \I_{\s,b}(\om) + \I_{\s,\one-b}(\om) = \I_{\s,\one}(\om)
\end{equation}
where $\one \equiv \one_\Bf$. This state satisfies
\begin{equation}\label{eq:state-update-ns}
    \om'_\ns(a) = \I_{\s,\one}(\om) = (\om\tns\s)(\Theta(a\tns\one_\Bf)) \text{\qquad for } a\in\Af(M),
\end{equation}
so it is simply the partial trace of $\widehat\Theta(\om\tns\s)$.\footnote{Compare this equation to the expectation value of the probe observable in Eq.~\eqref{eq:tildes}; the only subtle difference is that $b$ is intended to be actually measurable, whereas the expectation value of the system observable $a$ discussed here refers to hypothetical measurements. However, a slightly more involved derivation shows that also when a further \emph{FV measurement} is carried out in a region disjoint of $J^-(K)$, the expectation values of measurements of probe observables agree with assuming that the system state is given by Eq.~\eqref{eq:state-update-ns}, \cite[Eq.~(25)]{BFR21}. (This makes use of the assumption of causal factorisation, however, which we have yet to introduce.) So the significance of $\om'_\ns(a)$ is not merely with respect to hypothetical measurements. On the other hand, this argument works only for those system observables which are FV measurable, which is not necessarily the complete algebra.} In particular, it is \emph{independent of the probe observable} $b$ that was measured.

This is contrary to what one is used to in simple treatments of nonrelativistic finite-dimensional Hilbert space quantum measurement when considering measurements directly on the system, instead of considering a probe: in the case of a projective measurement, for example, the non-selective state update (given by the projection postulate as expressed in Lüders' rule \cite{BL09}) depends on the basis in which the system is measured. If one non-selectively measures a probe entangled with the system, however, subsequently tracing out the probe erases the information about the measurement---which is precisely the content of the no-signalling theorem \cite{GRW80} (cf.\ the discussion on page~\ref{einstein-causality}).%\footnote{In the case of an (in)complete projective measurement $(\one\tns P_i)_{i\in I}$ of the system-probe state $\rho_{AB}\in\D(\H)$, for example, Lüders' rule gives $\rho_{AB} \mapsto \rho'_{AB} := \sum_{i\in I} (\one\tns P_i)\rho_{AB}(\one\tns P_i)$; if $(\kk{e_j})_{j\in J}$ is the (finite) basis along which the probe is measured, \ie\ $P_i = \sum_{j\in J_i} \projk{e_j}$ with $\bigcup_{i\in I} J_i = J$ and all $J_i$ pairwise disjoint, then one directly obtains $\Tr_B \rho'_{AB} = \sum_{j\in J} \langle e_j | \rho_{AB} \kk{e_j} = \Tr_B\rho_{AB}$, which is independent of $P_i$.}

% \todo{It is useful to discuss this point, as it will be relevant to us later. In simple treatments of (nonrelativistic) quantum measurement, where one does not consider the interaction between the system and a probe but rather applies the projection postulate \todo{(in the form of Lüders rule \cite{busch-compendium})} directly to the system, the non-selective state update does depend on the measured observable (because it determines the basis along which decoherence occurs).
% However, if one performs a non-selective measurement (or any other `trace-preserving' operation) on the \emph{probe} and subsequently traces out the probe, then the effects of the non-selective measurement will not be noticeable in the final system state. However little or much decoherence the non-selective measurement causes on the probe part, tracing out that part makes all that irrelevant: the state update depends purely on the system-probe \emph{interaction}.}
% Just like it doesn't matter if you do decoherence and then trace, it also doesn't matter if you do partial decoherence and then partial trace.
% As an exercise, think about the nonrelativistic example where the system-probe state after interaction is given by $(\k0\k0+\k1\k1)/\sqrt2$, the probe is measured in the $(\k0,\k1)$- or $(\k+,\k-)$-basis and subsequently traced out.

\

Before we continue, some philosophical comments are in order. First of all,\label{phily-discussion} one might wonder why it is necessary to introduce the probe theory at all---if one can measure a probe observable, why not directly measure a system observable of interest instead? Those who have indeed been wondering this might well have expected the FV scheme to solve the measurement problem%
\footnote{You are correct to infer from the choice of introduction of this paragraph that your humble author was, initially, one of them\ldots.}
(briefly, the question ``what makes a measurement a measurement?'' \cite{Bru17}, \ie\ why does it induce state reduction?). This ambitious goal is, however, out of reach and not amongst the purposes of the FV scheme \cite[p.~853]{FV20}. The scheme instead gives a formal, operational account of a practical measurement procedure. While the system of interest can be small (\ie\ entirely within the quantum realm), the probe typically has both a microscopic part (which interacts with the system) and a macroscopic part, which contains dials that can be read off by observers. What exactly this `reading off' entails is beyond the scope of the discussion; indeed, the measurement problem is simply postponed from the system to the (macroscopic part of) the probe. This is beneficial, since when the system and probe have decoupled, it gives one the opportunity to do whatever is necessary to `read off' the dials, without unintentionally disturbing or needing further access to the system.

Our second comment is that the formalism is an idealisation: in particular, it might be impossible to fully decouple the system and probe, in which case the identification of the coupled and uncoupled theories is not an exact reflection of reality. What's more, even the coupled theory does not describe the actual, complete world: after all, it does not the describe the macroscopic observers themselves, who in addition to reading off the measurement result might unintentionally disturb the system and probe before, during and after the interaction. We can assume, however, that \emph{for all practical purposes}, the FV scheme describes real-world quantum measurements.

Finally,\label{phily-discussion-update-boundary} the formalism implies no particular view on the question `where' state update happens, be it along a Cauchy surface or the past lightcone of the coupling region, as proposed by \textcite{HK70}. Indeed, the updated state $\om'$ generally differs from $\om$ even in $J^-(K)$, meaning that it prescribes different probabilities to system events that already happened before the interaction with the probe. This does not mean that the act of measuring a probe effect retroactively changes the physics in $J^-(K)$; it simply means that in hindsight, with newly obtained knowledge in the form of the measurement result, hypothetical events in $J^-(K)$ are ascribed a different probability. This ties in to our earlier comments (after Definition~\ref{def:aqft-operation}) and we will return to it at the end of Section~\ref{sec:fv-multiple}.

\

Returning to the formalism, let us introduce some useful terminology and summarise what we have learnt about the FV scheme as we go.
First note that, as we have seen above, an FV measurement on a given system $\Af$ on a spacetime $M$ is uniquely determined by a sextuple $(\Bf, \Cf, K, \phi_\pm, \s, b)$, where $\Bf$ is the probe theory, $\Cf$ is the coupled theory, $K$ is the compact coupling region, $\phi_\pm : \Uf(M_K^\pm)\to\Cf(M_K^\pm)$ are the response maps with $\Uf = \Af\tns\Bf$, $\s\in\S(\Bf(M))$ is the initial state of the probe, and $b\in\Bf(M)$ is the measured probe observable. The first three members of this sextuple determine the scattering morphism $\Theta:\Uf(M)\to\Uf(M)$ resulting from the interaction and providing the interpretation of probe measurements as system measurements, while $\s$ and $b$ determine the state update resulting from such a probe measurement. We call any such sextuple an \defn{FV measurement on $\Af$} if it arises from the above constructions, and we call it \defn{non-selective} if $b=\one_\Bf$.

We find that the \emph{unnormalised} state update $\om \mapsto \I_{\s,b}(\om)$ of an FV measurement is an operation on $\Af$, in the sense of Definition~\ref{def:aqft-operation}. That it is CP follows from the fact that $\I_{\s,b} = \Tr_\Bf \circ \widehat\Theta$, where $\Tr_\Bf$ is partial trace over $\Bf$ and $\widehat\Theta$ is the pullback of the scattering morphism, and the fact that both of these maps are CP, which is easily verifiable.\footnote{\label{fn:pullback-is-operation}For example, $\Theta$ is a *-isomorphism, so it is positive: $\Theta(a^*a) = \Theta(a)^*\Theta(a)$ for $a\in\Af\tns\Bf$. Since $\Theta\tns\id_\Cf$ is also a *-isomorphism for any further unital *-algebra $\Cf$, it is also completely positive; therefore $\Theta$ is CP, which implies that $\widehat\Theta$ is CP, as we noted below Definition~\ref{def:aqft-operation}.}
The fact that $\I_{\s,b}$ is normalisation-nonincreasing follows from the fact that $b$ is an effect (in particular, $b\leq \one_\Bf$). Let us define:
\begin{definition}
    An operation $\G$ on $\Af(M)$ is called \defn{FV-induced in $O$} (or simply \defn{FV-induced}) if there exists an FV measurement $(\Bf, \Cf, K, \phi_\pm, \s, b)$ with coupling zone $K$ contained in $O$ such that $\G = \I_{\s,b}$. We have: if the FV measurement is non-selective ($b=\one_\Bf$), in which case we call $\G$ \defn{non-selectively} FV-induced, then $\G$ is a non-selective (\ie\ normalisation-preserving) operation. Moreover, if $\H$ is a Hilbert space, we call an operation $\L:\D(\H)\to\D(\H)$ FV-induced iff $\widehat{\L^\dagger}$ is FV-induced---recall this notation from Section~\ref{sec:prelim-qm-hilbert-space}.
\end{definition}

In this essay, we are interested in the problem of delineating the class of physically realisable operations. Although the FV scheme was mainly designed to describe measurement processes specifically, it seems to also provide a very general description of how local external agents can perform non-selective operations: the agent simply lets a probe interact locally with the system and subsequently forgets (traces out) the probe, yielding the operation given by Eq.~\eqref{eq:state-update-ns}. Hence, it might well be that the class of physically realisable non-selective operations is precisely formed by the non-selectively FV-induced operations. In any case, we will show in the next subsection and in Section~\ref{sec:bfr} that non-selectively FV-induced operations show good causal behaviour.

\subsection{Locality of FV measurements}\label{sec:fv-locality}
We now outline some locality results which follow directly from the formalism introduced above.

First of all, because the response maps agree on $\Uf(K^\perp)$ according to Eq.~\eqref{eq:response-maps-agree}, we find that the scattering morphism acts trivially on this algebra (cf.\ \cite[Prop.\,3.1(a)]{FV20}): that is,
\begin{equation}\label{eq:FV-local-1}
    \Theta(c) = c \qquad\text{ for any } c\in\Uf(K^\perp).
\end{equation}
% \begin{theorem}[{\cite[Proposition~3.1(b)]{FV20}}]\label{thm:FV-local-1}
%     The scattering map $\Theta$ acts trivially on $\Uf(K^\perp)$: that is, for any observable $c$ of $\Uf$ localisable in a region causally disjoint from $K$, we have $\Theta(c) = c$.
% \end{theorem}
% \begin{proof}
%     We have 
%     \begin{equation}
%         \Theta\big|_{\Uf(K^\perp)} = \phi_{M^-}^{-1}\circ \phi_{M^+}\big|_{\Uf(K^\perp)} = \phi_{M^-}^{-1}\circ \phi_{K^\perp}
%     \end{equation}
%     by the restriction property~\eqref{eq:restriction-property}. By the same property, $\phi_{M^-}^{-1}$ maps $\Cf(K^\perp)$ onto $\Uf(K^\perp)$, so 
%     \begin{equation}
%         \phi_{M^-}^{-1}\big|_{\Cf(K^\perp)} = \left(\phi_{M^-}\big|_{\Uf(K^\perp)}\right)^{-1} = \phi_{K^\perp}^{-1},
%     \end{equation}
%     so $\Theta\big|_{\Uf(K^\perp)} = \phi_{K^\perp}^{-1}\circ \phi_{K^\perp}$ and the result follows.
% \end{proof}
By specialising this to the case where $c = a\tns\one_\Bf$ for some $a\in\Af(K^\perp)$, we immediately obtain from the non-selective state update rule~\eqref{eq:state-update-ns}:
\begin{proposition}\label{prop:fv-is-local}
    An operation on $\Af$ which is non-selectively FV-induced in $O$ is local to $O$ in the sense of Definition~\ref{def:aqft-local-op}.
\end{proposition}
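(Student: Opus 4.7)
The plan is to unpack the definition of locality, apply the non-selective state update rule~\eqref{eq:state-update-ns}, and reduce everything to the triviality statement~\eqref{eq:FV-local-1} about the scattering morphism on $\Uf(K^\perp)$.

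Concretely, let $\G = \I_{\s,\one_\Bf}$ be non-selectively FV-induced in $O$, so that by definition the coupling zone satisfies $K\subseteq O$. I would fix an arbitrary state $\om\in\S(\Af(M))$ and an observable $a\in\Af(L)$ with $L\spacelike O$, and aim to show $\G(\om)(a)=\om(a)$. Starting from
\begin{equation*}
    \G(\om)(a) = (\om\tns\s)\bigl(\Theta(a\tns\one_\Bf)\bigr),
\end{equation*}
the whole task reduces to showing that $\Theta$ acts as the identity on $a\tns\one_\Bf$.

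The key geometric step is the inclusion $L\subseteq K^\perp$: since $K\subseteq O$ we have $J^\pm(K)\subseteq J^\pm(O)$, hence $O^\perp\subseteq K^\perp$, and $L\spacelike O$ gives $L\subseteq O^\perp\subseteq K^\perp$. By Isotony, $a\in\Af(L)\subseteq\Af(K^\perp)$, and since the local probe algebra shares the unit of $\Bf(M)$ we have $\one_\Bf\in\Bf(K^\perp)$, so $a\tns\one_\Bf\in\Uf(K^\perp)$. Applying~\eqref{eq:FV-local-1} then gives $\Theta(a\tns\one_\Bf)=a\tns\one_\Bf$, and the computation closes with
\begin{equation*}
    \G(\om)(a) = (\om\tns\s)(a\tns\one_\Bf) = \om(a)\,\s(\one_\Bf) = \om(a),
\end{equation*}
using that $\s$ is normalised.

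There is no real obstacle: once~\eqref{eq:FV-local-1} is in hand, the proof is essentially a bookkeeping exercise in the definitions. The only point that deserves explicit mention is the inclusion $O^\perp\subseteq K^\perp$, which is what couples the geometric hypothesis $K\subseteq O$ in the definition of ``FV-induced in $O$'' to the hypothesis $L\spacelike O$ in the definition of locality to $O$.
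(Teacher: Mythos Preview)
Your proof is correct and follows essentially the same approach as the paper, which derives the result in one sentence by specialising~\eqref{eq:FV-local-1} to $c=a\tns\one_\Bf$ with $a\in\Af(K^\perp)$ and invoking the non-selective state update rule~\eqref{eq:state-update-ns}. You have simply made explicit the geometric step $L\subseteq O^\perp\subseteq K^\perp$ (which the paper leaves implicit) and unpacked the final computation; there is no substantive difference in strategy.
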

Hence, the act of performing a non-selective FV measurement cannot be used to influence results of measurements in regions causally disjoint to $K$, as one would hope. This is generally not true for \emph{selective} probe measurements. However, as Theorem~3.4 of \cite{FV20} shows, the expectation value of an observable $a\in\Af(K^\perp)$ changes under the state update rule only if it is correlated with the measured system observable $\e_\s(b)$. Therefore, this change simply corresponds to an inference by the observer, based on their observing $b$. In particular, a separate observer contained in $K^\perp$ will not know that $b$ has been observed and will not be able to make this inference.
Still, Proposition~\ref{prop:fv-is-local} is in itself not sufficient to completely rule out superluminal signalling in the FV scheme, as we will see in Section~\ref{sec:sorkin}.

Specialising~\eqref{eq:FV-local-1} instead to $c = \one_\Af\tns b$ where $b\in\Bf(K^\perp)$, one obtains from straightforward calculations and the Haag property:
\begin{proposition}[{\cite[Theorem~3.3]{FV20}}]\label{prop:localisation-of-induced-obs}
    Let $b\in\Bf(M)$ be a probe observable and $\e_\s(b)$ the induced system observable. Then for any connected region $L$ containing the coupling zone $K$, $\e_\s(b)$ is localisable in $L$. Furthermore, if $b\in\Bf(K^\perp)$ then $\e_\s(b) = \s(b)\one_\Af$, so no information can be retrieved about the system by measuring $b$.
\end{proposition}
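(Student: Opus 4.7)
The two claims can be handled separately. The second is nearly immediate from Eq.~\eqref{eq:FV-local-1}: since every local algebra shares the unit, $\one_\Af\tns b\in\Uf(K^\perp)$ whenever $b\in\Bf(K^\perp)$, so the scattering morphism fixes this element. Substituting into~\eqref{eq:e_s} then yields $\e_\s(b)=\eta_\s(\one_\Af\tns b)=\s(b)\one_\Af$ directly.

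For the first claim, the plan is to invoke the Haag Property. Since $\e_\s(b)\in\Af(M)$ by construction and $K$ is compact, it suffices to show that $\e_\s(b)$ commutes with every element of $\Af(K^\perp)$; the Haag Property then delivers $\e_\s(b)\in\Af(L)$ for every connected region $L\supset K$. So the whole task reduces to verifying that commutativity.

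Fix $a\in\Af(K^\perp)$. Because $\one_\Bf$ lies in every local algebra, we also have $a\tns\one_\Bf\in\Uf(K^\perp)$, and~\eqref{eq:FV-local-1} then gives $\Theta(a\tns\one_\Bf)=a\tns\one_\Bf$. I would then combine this with two routine facts: first, that the partial evaluation $\eta_\s$ of~\eqref{eq:e_s} satisfies $\alpha\cdot\eta_\s(X)=\eta_\s((\alpha\tns\one_\Bf)\,X)$ and $\eta_\s(X)\cdot\alpha=\eta_\s(X\,(\alpha\tns\one_\Bf))$ for all $\alpha\in\Af(M)$ and $X\in\Uf(M)$ (clear on simple tensors, extending linearly); and second, that $\Theta$ is a $*$-homomorphism, so products can be pulled inside it. Together with the identity $(a\tns\one_\Bf)(\one_\Af\tns b)=a\tns b=(\one_\Af\tns b)(a\tns\one_\Bf)$ in $\Uf(M)$, a short computation collapses both $a\cdot\e_\s(b)$ and $\e_\s(b)\cdot a$ to $\eta_\s(\Theta(a\tns b))$, giving commutativity.

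The only real obstacle is bookkeeping: all the substantive content is packaged inside the Haag Property (which is assumed) and inside~\eqref{eq:FV-local-1} (already established). One simply has to take care that the element on which $\Theta$ acts, namely $a\tns\one_\Bf$, really does lie in $\Uf(K^\perp)$, which hinges on the convention that each $\Bf(O)$ contains the global unit $\one_\Bf$.
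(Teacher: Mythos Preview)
Your proof is correct and follows the same route the paper indicates (``straightforward calculations and the Haag property''): you use Eq.~\eqref{eq:FV-local-1} on $\one_\Af\tns b$ for the second claim, and on $a\tns\one_\Bf$ together with the multiplicativity of $\Theta$ and the $\Af$-bimodule property of $\eta_\s$ to verify the commutant condition needed for the Haag Property in the first claim. The paper merely sketches this, so your write-up in fact supplies the details it omits.
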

So roughly, one cannot measure system observables without interacting with a region in which they are localised. This reflects part of the heuristic principle that the local observable algebras consist of those observables which can be measured by a process confined to that region.\footnote{It should come as no surprise that Proposition~\ref{prop:localisation-of-induced-obs} reflects that principle, however, since the FV framework was designed with the principle in mind.} Conversely, however, it is an open question whether all observables $a$ in a region $O$ can be measured by an FV measurement with coupling zone contained in $O$ \cite[§VIII]{BFR21}---that is, whether there is an FV measurement $(\Bf, \Cf, K\subseteq O, \phi_\pm, \s, b)$ such that $a = \e_\s(b)$.

\

The following theorem will be of significance to us later in this essay. It can be seen as a weakened version of the statement that $\Theta$ is localisation-preserving (which is generally not true).

\begin{proposition}[{\cite[Proposition~3.1(c)]{FV20}}]\label{prop:FV-local-2}
    If $\Theta$ is the scattering morphism of an FV measurement with coupling zone $K$ and $L^{\pm} \subseteq M_K^{\pm}$ are two regions such that $L^+ \subseteq D(L^-)$, then $\Theta(\Uf(L^+)) \subseteq 
    \Uf(L^-)$.
\end{proposition}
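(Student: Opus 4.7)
The proof should be essentially a direct unwinding of the definitions, using the three pillars of the FV setup: the localisation-preserving property of the response maps, isotony, and the Diamond Axiom (applied to the coupled theory $\Cf$).

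First I would compute the image of $\Uf(L^+)$ under $\phi_+$. Since $L^+\subseteq M_K^+$, the localisation-preserving condition (Eq.~\eqref{eq:response-maps-localisation-preserving}) gives $\phi_+(\Uf(L^+))=\Cf(L^+)$. Symmetrically, since $L^-\subseteq M_K^-$, we have $\phi_-(\Uf(L^-))=\Cf(L^-)$, hence $\phi_-^{-1}(\Cf(L^-))=\Uf(L^-)$.

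The second step is to show $\Cf(L^+)\subseteq\Cf(L^-)$, which is where the hypothesis $L^+\subseteq D(L^-)$ enters. The Diamond Axiom for the coupled theory yields $\Cf(D(L^-))=\Cf(L^-)$, and then Isotony combined with $L^+\subseteq D(L^-)$ gives
\begin{equation}
\Cf(L^+)\subseteq\Cf(D(L^-))=\Cf(L^-).
\end{equation}
Chaining everything via $\Theta=\phi_-^{-1}\circ\phi_+$ then gives
\begin{equation}
\Theta(\Uf(L^+))=\phi_-^{-1}(\Cf(L^+))\subseteq\phi_-^{-1}(\Cf(L^-))=\Uf(L^-),
\end{equation}
as required.

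The only subtlety I anticipate—and which I expect to be the main obstacle—is the application of the Diamond Axiom to $D(L^-)$, since the axiom as stated concerns regions, and $D(L^-)$ need not be open or causally convex in general. If this is an issue, the natural workaround is to use the equivalent time-slice formulation quoted after Eq.~\eqref{eq:diamond-axiom}: if $L^-$ contains a Cauchy surface of some region $N\supseteq L^+$, then $\Cf(N)\subseteq\Cf(L^-)$, and isotony finishes the argument. Such an $N$ can be obtained by taking (the interior of) $D(L^-)$, which in a globally hyperbolic spacetime is itself causally convex and, for open $L^-$, open. Apart from this point, every step is a mechanical use of the axioms and of the defining equation $\Theta=\phi_-^{-1}\circ\phi_+$, so no further ingredient should be needed.
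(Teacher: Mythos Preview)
Your proof is correct and follows essentially the same route as the paper: apply the Diamond Axiom for $\Cf$ to get $\Cf(L^+)\subseteq\Cf(L^-)$, then use the localisation-preserving property of $\phi_\pm$ to translate this into $\Theta(\Uf(L^+))\subseteq\Uf(L^-)$. Your extra caution about whether $D(L^-)$ is a region is not needed here, since the paper states the Diamond Axiom directly in the form $\Af(D(O))=\Af(O)$ for any region $O$; in fact the paper's proof compresses your step~2 to a one-liner without passing through $\Cf(D(L^-))$ explicitly.
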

\begin{proof}
    By the Diamond Axiom applied to the coupled theory, we have $\Cf(L^+) \subseteq \Cf(L^-)$. Using the property that the response maps $\phi_\pm$ are localisation-preserving \eqref{eq:response-maps-localisation-preserving} on their respective domains $\Uf(M_K^\pm)$, we therefore get that $\Theta\big|_{\Uf(L^+)}$ is given by the composition
    \begin{equation}
        \Uf(L^+) \xlongrightarrow{\phi_+} \Cf(L^+) \lhook\joinrel\longrightarrow \Cf(L^-) \xlongrightarrow{\phi_-^{-1}} \Uf(L^-),
    \end{equation}
    so that indeed $\Theta(\Uf(L^+)) \subseteq \Uf(L^-)$.
\end{proof}

Note that the results of this section are much more easily obtained in our notation than in Fewster and Verch's original notation \cite[Proposition~A.1]{FV20}.

\subsection{Multiple measurements}\label{sec:fv-multiple}
For our discussion to follow, and to substantiate the general applicability of the FV framework, it will be useful to consider what happens when multiple FV measurements are made on the same system, possibly by multiple observers. \textcite{Few20} and \textcite{FV20} discuss this for the case of two measurements; \textcite{BFR21} generalise it to an arbitrary finite number. We will briefly discuss the results from \cite{BFR21} without focussing too much on the argumentation and proofs.

Consider $n$ distinct precompact regions $\{O_i\}_{i=1}^n$ of $M$, which can be seen as regions containing measuring apparatuses of $n$ distinct observers. An assumption made from the outset in the aforementioned papers and which we will also make here and in Section~\ref{sec:bfr}, is that this set of regions admits a causal order, as defined in Section~\ref{sec:prelim-lorentz}.
Let's denote this causal order by $\leq$ and label the regions such that $O_i \leq O_j \iff i \leq j$. Note that the results that we are about to discuss are independent of the particular choice of causal order, as they should be.

Suppose each party $i$ performs an FV measurement $(\Bf_i, \Cf_i, K_i\subseteq O_i, (\phi_\pm)_i, \s_i, b_i)$, yielding the scattering morphism $\Theta_i: (\phi_-)_i^{-1}\circ (\phi_+)_i$, a *-automorphism of $(\Af\tns\Bf_i)(M)$. We would like to be able to describe the combined results of these $n$ measurements as the result of one `large' measurement in a consistent way, having probe theory $\Bf := \bigotimes_{i=1}^n \Bf_i$ and coupling region $\bigcup_{i=1}^n K_i$. This is encoded in the following important assumption.

\begin{definition}\label{def:causal-factorisation}
    A collection of FV measurements as above satisfies \defn{causal factorisation} if there exist a theory $\Cf$ and response maps $\phi_\pm: \left(\Af\tns\bigotimes_{i=1}^n\Bf_i\right)(M) \to \Cf$ with coupling region $\bigcup_{i=1}^n K_i$ which yield the scattering morphism
    \begin{equation}\label{eq:causal-factorisation}
        \Theta = \ul\Theta_1 \circ \ul\Theta_2\circ\cdots\circ\ul\Theta_n : \left(\Af\tns\bigotimes_{i=1}^n\Bf_i\right)(M) \to \left(\Af\tns\bigotimes_{i=1}^n\Bf_i\right)(M)
    \end{equation}
    Here $\ul\Theta_i$ denotes $\Theta_i$ with identities $\id_{\Bf_j}$ for $j\neq i$ tensored in at the appropriate slots.\footnote{Note that this general, $n$-party definition of causal factorisation follows from the bipartite case, originally introduced in \textcite{FV20}, when making further physically motivated assumptions \cite{BFR21}.}
\end{definition}

This essentially means that $\Theta_1$ `happens first', while $\Theta_n$ `happens last' (note that the scattering morphisms map from late times to early times, which is why the order displayed in Definition~\ref{def:causal-factorisation} appears reversed).
In particular, causal factorisation implies that the global unnormalised state update, given by Eq.~\eqref{eq:state-update-unnormalised}, factors as
\begin{equation}\label{eq:big-state-update}
    \I_{\tns_i\s_i,\tns_i b_i} = \I_{\s_n,b_n}\circ\I_{\s_{n-1},b_{n-1}}\circ\cdots\circ\I_{\s_1,b_1}.
\end{equation}
This is independent of the particular causal order, if multiple orders exist \cite{FV20,BFR21}: in the case of two measurements, for example, if $O_1$ and $O_2$ are causally disjoint, then
\begin{equation}
    \I_{\s_1\tns\s_2,b_1\tns b_2} = \I_{\s_1,b_1}\circ\I_{\s_2,b_2} = \I_{\s_2,b_2}\circ\I_{\s_1,b_1},
\end{equation}
as is easily seen from Eq.~\eqref{eq:state-update-unnormalised} and the locality property of the scattering morphisms \eqref{eq:FV-local-1}.

A natural question to ask is what the expectation value of an observable $b_i\in\Bf_i$ of a given observer $i$ is, given that all observers perform their FV measurements and the initial system state is $\om$.\footnote{Again, the terminology `initial state' is rather confusing when it refers to a state defined on the entire spacetime; it should be seen as the state the system would be in if all parties were to abstain from performing their FV measurements. Cf.\ the discussion on operations after Definition~\ref{def:aqft-operation}.} Let's denote this value by $E_i(b_i; \om)$. By considering the simplified case in which $b_i$ is an effect, one can argue that \cite[Eq.~(21)]{BFR21}
\begin{equation}
    E_i(b_i; \om) = (\om\tns\bigotimes_{i=1}^n\s_i)(\Theta(\one_\Af\tns\ul{b_i})),
\end{equation}
where $\ul{b_i} \in \bigotimes_{i=1}^n\Bf_i(M)$ is $b_i$ tensored with units $\one_{\Bf_j}$ for $j\neq i$.
Using this, the assumption of causal factorisation implies that \cite[Eq.~(33)]{BFR21}
\begin{equation}\label{eq:expected-b-i}
    E_i(b_i;\om) = \left((\I_{\s_{i-1},b_{i-1}}\circ\I_{\s_{i-2},b_{i-2}}\circ\cdots\circ\I_{\s_1,b_1})(\om)\right)(\e_{\s_i}(b_i));
\end{equation}
that is, the expectation value of $b_i$ is what it would have been in the absence of the other observers, if the initial state were given by the state updates of those measurements that happen before $i$ in the causal order. In particular, $E_i(b_i;\om)$ does not depend on the measurements that happen after $i$ in the causal order. Eq.~\eqref{eq:expected-b-i} is again independent of the particular choice of causal order, because Eq.~\eqref{eq:big-state-update} is.

Causal factorisation \eqref{eq:causal-factorisation} is a natural assumption given the physical idea behind performing measurements, and is closely related to the aforementioned maxim of `prepare early and measure late'---compare, for example, Eqs.~\eqref{eq:tildes} and \eqref{eq:expected-b-i}. It can also be argued for as a generalisation of similar results in conventional QFT \cite{BFR21}. In any case, it should be verified in concrete models of system-probe couplings where possible (see \eg\ \textcite[§4]{FV20} and the end of Section~\ref{sec:fv-hybrid-model}).

Finally, for the wavefunction-realist who is interested in the question where state update happens (\ie\ what is the quantum state that `occupies' each given region of spacetime), the results of this section give a partial answer: if Eq.~\eqref{eq:expected-b-i} is extended to give expectation values of hypothetical measurements, it suggests that the updated system state is valid everywhere except in the causal past of the regions where an FV measurement has taken place, and hence that state update takes place `along the past lightcone' of the coupling region. This is also what \textcite{HK70} propose. They remark, however, that one could just as well take the state update to happen along the \emph{future} lightcone. This corresponds to Proposition~\ref{prop:fv-is-local}, saying that the FV state update does not affect observables localisable in the causal complement of the coupling region. In accordance with our comment on page~\pageref{phily-discussion-update-boundary}, however, Eq.~\eqref{eq:expected-b-i} does not require this interpretation: Fewster and Verch themselves say that ``there seems to be no purpose in envisaging a transition from $\om$ to $\om'$ occurring along or near some surface in spacetime'' \cite[p.~867]{FV20}.

\subsection{The hybrid model}\label{sec:fv-hybrid-model}
In this section we will apply the ideas of the FV framework to the hybrid model of AQFT introduced in Section~\ref{sec:prelim-aqft-hybrid-model}. This adds some intuition to the abstract treatment above, but will be especially useful to us later in Section~\ref{sec:bfr-hybrid-model}. Specifically, we will be interested in characterising the set of non-selective operations that are FV-induced.

We assume that both the system and probe theories are given by a hybrid model; for illustration, we consider the case in which both theories involve only one worldline, $\g_A$ and $\g_B$ respectively, associated with Hilbert spaces $\H_A$ and $\H_B$. It follows that the uncoupled theory has global algebra $\Uf(M) = \B(\H_A\tns\H_B)$. The interesting case is where the coupling region $K$ of the FV measurement intersects both worldlines. The question now is what a coupled theory $\Cf$ and its response maps $\phi_\pm$ can look like. Recall that although $\phi_\pm$ are in the first instance defined on the algebras $\Uf(M_K^\pm)$, by the Diamond Axiom they are also *-isomorphisms between the global algebras $\Uf(M)\to\Cf(M)$.

First note that without loss of generality, we can assume that the retarded response map $\phi_-$ is the identity on $\Uf(M)$. We can do this by replacing our coupled theory $\Cf$ by $\Cf'$, defined by 
\begin{equation}
    \Cf'(L) := \phi_-^{-1}(\Cf(L)) \text{ \quad for all regions } L\subseteq M,
\end{equation}
which has response maps $\phi_-' := \id_\Uf$ and $\phi_+' := \phi_-^{-1}\circ \phi_+$. Because $\phi_-$ is a \mbox{*-isomorphism} and is localisation-preserving on all of $M$ as a map from $\Cf \to \Cf'$ by construction, it preserves all the relevant structure to ensure that $\Cf'$ is again an AQFT and that $\phi_\pm'$ satisfy the defining properties of response maps \eqref{eq:response-maps-existence}--\eqref{eq:response-maps-agree}. Moreover, the scattering morphism $\Theta' = \phi_+' = \phi_-^{-1}\circ\phi_+$ stays the same, meaning that measurements and (significantly for us) the state update rule \eqref{eq:state-update-ns} are not affected. So let us drop the primes and assume that $\Cf(M) = \Uf(M) = \B(\H_A\tns\H_B)$ and $\phi_- = \id_\Uf$.
Note that this does not mean that the coupling is trivial, since $\Cf(L)$ can generally differ from $\Uf(L)$ for $L\subseteq M_K^+$.

This non-triviality is manifested in the non-triviality of $\phi_+:\Uf\to\Cf$, which is assumed localisation-preserving on $M_K^+$ \eqref{eq:response-maps-localisation-preserving}; indeed, $\phi_+$ uniquely determines the coupling theory.
This is illustrated in Figure~\ref{fig:smiley-face}. We see that we must slightly generalise our treatment of the hybrid model in Section~\ref{sec:prelim-aqft-hybrid-model} so as to accommodate the fact that in the coupled theory, different segments of the same worldline can be associated with different subalgebras of $\B(\H_A\tns\H_B)$ (however, if a segment of a worldline does not intersect any other of the worldlines, then the associated subalgebra must be constant along the segment, as we will see shortly).

\begin{figure}
    \centering
    \tikzfig{smiley-face}
    \caption{The correspondence between the uncoupled (left) and coupled (right) theories outside of the causal hull of the coupling region, given by the response maps $\phi_\pm$. The labels besides the worldlines mean that if a region $L$ intersects a segment of a worldline, then $\Af(L)$ contains the subalgebra indicated by the label: for example, $\Uf(L) = \one_A\tns\B(\H_B)$ while $\Cf(L) = \phi_+(\one_A\tns\B(\H_B))$. The future time direction in any of the figures is up. (Any resemblance to a real person is purely coincidental.)}
    \label{fig:smiley-face}
\end{figure}

We are now interested in characterising the non-selective FV-induced operations amongst the set of all non-selective operations on $\H_A\tns\H_B$. Recall the non-selective FV state-update rule \eqref{eq:state-update-ns}, in this case
\begin{equation}\label{eq:state-update-hybrid}
    \om'(a) = (\om\tns\s)(\Theta(a\tns\one_B)) = (\om\tns\s)(\phi_+(a\tns\one_B)) \text{\qquad for } a\in\B(\H_A),
\end{equation}
where $\s\in\S(\B(\H_B))$ is a probe state. That is, we take the tensor product with a probe state, apply $\phi_+$, and subsequently trace out the probe.

First of all, note that when not restricting $\phi_+$ any further, Eq.~\eqref{eq:state-update-hybrid} covers all non-selective operations on $\H_A\tns\H_B$. This follows from Stinespring's dilation theorem, which is most widely known in its density-operator formulation \cite{Nielsen-Chuang}, stating that any non-selective operation $\L:\D(\H_A)\to\D(\H_A)$ can be written as
\begin{equation}\label{eq:stinespring}
    \L(\rho) = \Tr_B(u^*(\rho\tns\tau)u) \text{\qquad for all } \rho\in\D(\H_A)
\end{equation}
for some environment system $\H_B$, a density operator $\tau\in\D(\H_B)$ and a unitary \break${u\in\B(\H_A\tns\H_B)}$ depending on $\L$. Using the equivalence of algebraic states and density operators of Proposition~\ref{prop:dh-sbh} and the discussion below it, we find that setting $\phi_+(c) := ucu^*$, which is indeed a *-isomorphism, exactly makes the state-update rule \eqref{eq:state-update-hybrid} into the algebraic equivalent of $\L$.

We therefore have to more closely consider what restrictions are imposed on the response map $\phi_+$, apart from it being a *-isomorphism. This requires looking at the coupled theory inside the causal hull of the coupling region $\ch(K)$. Recall that we assume that the coupled theory, also inside the coupling region, is described by the hybrid model and that $\g_A$ and $\g_B$ are the only worldlines.

\begin{wrapfigure}[16]{l}[1cm]{0pt}
    \centering
    \tikzfig{D-O-covering}
    \caption{}
    \label{fig:D-O-covering}
\end{wrapfigure}

First of all, consider the case in which $\g_A$ and $\g_B$ do not intersect.
Using Lemma~\ref{lemma:covering} in Appendix~\ref{app:geometry}, we can cover $\g_A \cap K$ by a set of future domains of dependence $\{D^+(O_i)\}_{i=1}^n$ which do not intersect $\g_B$, where $O_i\subseteq M$ are regions such that
\begin{equation}\label{eq:caps}
    O_{i+1}\cap\g_A \subseteq \D(O_i)\cap\g_A \text{\qquad for } 1\leq i\leq n-1,
\end{equation}
as illustrated in Figure~\ref{fig:D-O-covering}.
By \eqref{eq:caps}, $L_i := O_{i+1}\cap D(O_i)$ contains all sections of worldlines that $O_{i+1}$ does, so by construction of the hybrid model, $\Cf(L_i) = \Cf(O_{i+1})$. Combining this with the Diamond Axiom, we get
\begin{equation}
    \Cf(O_{i+1}) = \Cf(L_i) \subseteq \Cf(D(O_i)) = \Cf(O_i).
\end{equation}
Applying this $n$ times, we see that the local observable algebra of a small neighbourhood of the future endpoint of $\g_A\cap K$ is a subalgebra of the local algebra of a small neighbourhood of the past endpoint of $\g_A\cap K$. Since the same argument holds for a time-inverted version of $\g_A$, these algebras must actually be equal. In conclusion, if a worldline segment (like $\g_A\cap K$) does not intersect any other worldline segment, the subalgebra associated to that worldline must stay constant along the segment.

Back in the context of the FV framework (Figure~\ref{fig:smiley-face}), this means that in case $\g_A$ and $\g_B$ do not intersect in $\Cf$,
\begin{align}
               \phi_+(\B(\H_A)\tns\one_B) &= \B(\H_A)\tns\one_B \text{\quad and\quad} \phi_+(\one_A\tns\B(\H_B)) = \one_A\tns\B(\H_B), \\
    \text{so \quad} \phi_+ &= \psi_A\tns\psi_B \label{eq:psi+-deco}
\end{align}
where $\psi_A$ and $\psi_B$ are *-automorphisms on $\B(\H_A)$ and $\B(\H_B)$, respectively. That is, there is no interaction between the system $\B(\H_A)\tns\one_B$ and the probe $\one_A\tns\B(\H_B)$, as should be expected from a local theory like AQFT when the pointlike carriers of the two systems do not coincide at any point in spacetime.

In this case, therefore, state update \eqref{eq:state-update-hybrid} amounts to nothing special: due to \eqref{eq:psi+-deco}, $\om'(a) = \om(\psi_A(a))$, which can be seen as a simple unitary evolution of the system itself. Indeed, any *-automorphism $\psi_A$ of $\B(\H_A)$ can be written as $\psi_A(a) = uau^*$ where $u\in\B(\H)$ is unitary, which is a consequence of Wigner's theorem (cf.\ \cite[Prop.\,5.25]{Lan17}).%\maybe{ Or can we rule out the anti-unitary case if it's a *-automorphism?}

\

We now turn to the case in which $\g_A$ and $\g_B$ do intersect---though for simplicity we assume that they intersect only once (Figure~\ref{fig:two-intersecting-lines}). A similar argument to the above applies to the segments of the worldlines that do \emph{not} intersect other worldlines, so that the subalgebras are constant along those sections. At the intersection point, however, non-trivial interactions can take place, mixing the subalgebras $\B(\H_A)\tns\one_B$ and $\one_A\tns\B(\H_B)$. In fact we find that any *-automorphism $\phi_+ : \B(\H_A\tns\H_B)\to\B(\H_A\tns\H_B)$ is possible in this case. To see this, define
\begin{equation}
    \g_A^{(-)} := \g_A\big|_{(\infty, u)} \text{\quad and\quad} \g_A^{(+)} := \g_A\big|_{(u, \infty)},
\end{equation}
\begin{wrapfigure}{l}[1cm]{0pt}
    \centering
    \tikzfig{two-intersecting-lines}
    \caption{}
    \label{fig:two-intersecting-lines}
\end{wrapfigure}
where $\g_A(u)$ is the intersection point, and define $\g_B^{(\pm)}$ similarly. We then simply define the coupled theory $\Cf$ by letting $\Cf(L)$ be the minimal subalgebra of $\B(\H_A\tns\H_B)$ containing $\B(\H_A)\tns\one_B$ if it intersects $\g_A^{(-)}$, $\phi^+(\B(\H_A)\tns\one_B)$ if it intersects $\g_A^{(+)}$, $\one_A\tns\B(\H_B)$ if it intersects $\g_B^{(-)}$ and $\phi^+(\one_A\tns\B(\H_B))$ if it intersects $\g_B^{(+)}$.

This theory does not violate the Diamond Axiom, since the subalgebras only change in the intersection point: for example, if $D(O)$ contains the intersection point then $O$ necessarily intersects both $\g_A$ and $\g_B$, as those are assumed to be inextendible (see the discussion before Eq.~\eqref{eq:AfM-is-BH} in Section~\ref{sec:prelim-aqft-hybrid-model} and Figure~\ref{fig:two-intersecting-lines}), so that $\Cf(D(O)) = \B(\H_A\tns\H_B) = \Cf(O)$.
Moreover, this coupling theory leads to the desired locality-preserving scattering morphism $\phi_+: \Uf(M_K^+)\to\Cf(M_K^+)$ of Figure~\ref{fig:smiley-face} and hence to the state update rule~\eqref{eq:state-update-hybrid}.

\

Summarising this section, we have seen that
Stinespring's dilation theorem \eqref{eq:stinespring} implies that Eq.\,\eqref{eq:state-update-hybrid}, with $\phi_+$ ranging over automorphisms of $\B(\H_A)$, covers all non-selective operations on $\D(\H_A)$. 
Furthermore, if we assume that the system theory is an instance of the hybrid model \emph{which has only one worldline} $\g_A$ with associated Hilbert space $\H_A$, then any such automorphism $\phi_+$ can be realised as the response map of an FV measurement: in particular, this can be done by using a hybrid model for the probe theory and letting its worldline $\g_B$ intersect $\g_A$ within the coupling region. Because Eq.\,\eqref{eq:state-update-hybrid} is precisely the nonselective state-update rule for this FV measurement, we conclude that \emph{all non-selective operations $\L$ on $\D(\H_A)$ are FV-induced}. In contrast, any FV measurement in which $\g_A$ intersects no probe worldline within the coupling region can only induce \emph{unitary} operations $\L$ due to \eqref{eq:psi+-deco} (namely, $\L = \psi_A^\dagger$ and $\psi_A(a) = uau^*$ with $u$ unitary).

In Section~\ref{sec:bfr-hybrid-model} we will see that the situation is less straightforward when the system theory contains multiple worldlines.

% and assume that the system involves only one worldline $\g_A$ with associated Hilbert space $\H_A$, then Stinespring's dilation theorem \eqref{eq:stinespring} implies that \emph{all non-selective operations $\L$ on $\D(\H_A)$ are FV-induced}. In particular, such a $\L$ can be induced (`implemented') by any coupled theory with a probe worldline which intersects $\g_A$; in contrast, a coupled theory in which $\g_A$ intersects no other worldlines can only induce \emph{unitary} operations $\L$ due to \eqref{eq:psi+-deco} (namely, $\L = \psi_A^\dagger$ and $\psi_A(a) = uau^*$ with $u$ unitary).
% In Section~\ref{sec:bfr-hybrid-model} we will see that the situation is less straightforward when the system theory contains multiple worldlines.

Finally, note that the hybrid model satisfies causal factorisation (Definition~\ref{def:causal-factorisation}). Roughly, given $n$ probe theories $\Bf_i$ with worldlines $\{\g_{B_i}^j\}_{j=1}^{n_i}$, associated Hilbert spaces $\{\H_{B_i}^j\}_{j=1}^{n_i}$, and coupling regions $K_1 \leq \cdots \leq K_n$, define the coupled theory of the combined measurement as including all worldlines and having global observable algebra $\H_A\tns\bigotimes_{i=1}^n\bigotimes_{j=1}^{n_i} \B(\H_{B_i}^j)$. Then causal factorisation \eqref{eq:causal-factorisation} follows from the fact that for any $1\leq i\leq n-1$, all interactions of $\H_A$ with $\H_{B_i}^j$, since they take place at intersection points of $\g_A$ and $\g_{B_i}^j$ in $\ch(K_i)$, must happen before interactions of $\H_A$ with $\H_{B_{i+1}}^j$, because of the assumption that $K_1 \leq \cdots \leq K_n$ is a causal order.

    \section{Superluminal signalling with quantum operations}\label{sec:sorkin}
In previous sections, we have already encountered several locality aspects that are inherent to AQFT and the FV framework. In particular, we showed that FV-induced operations are local to their coupling regions (Proposition~\ref{prop:fv-is-local}), implying that no superluminal signalling can occur between two spacelike regions when all operations under consideration correspond to FV measurements performed in those regions.
% we were able to conclude that no superluminal signalling can occur between two causally disjoint parties when the operations carried out on the system are restricted to one FV measurement by either party.
%In previous sections, we have already encountered several locality aspects that are inherent to AQFT and the FV framework. In particular, \todo{we have seen how the Einstein Causality axiom prohibits, at least in some cases(?), superluminal signalling between 
%is motivated by the no-signalling theorem and thus rules out signalling between causally disjoint regions when restricting to local operations confined to those regions---}. Specialising this to FV-induced operations, which were shown to be local to their coupling region (Proposition~\ref{prop:fv-is-local}), we were able to conclude that no superluminal signalling can occur between two causally disjoint parties when the operations carried out on the system are restricted to one FV measurement by either party.

The situation becomes less trivial, however, when one considers three or more operating parties positioned in different regions. This debate was largely sparked by a seminal paper by Rafael Sorkin \cite{Sor93} showing that transferring the ideas of ideal measurements and the state update rule to the relativistic case without restricting the set of allowed operations generally leads to superluminal signalling. I will briefly summarise some of his results here, but will make slight (though unimportant) changes so as to be able to use the language and notation of this essay.

The `naive' state update rule\label{sorkins-rule} Sorkin considers comes down to the same intuition which motivates the assumption of causal factorisation discussed in Section~\ref{sec:fv-multiple} (although it is less rigorously formulated than causal factorisation, since Sorkin considers more general notions of measurement and operation). He considers a finite set of distinct precompact subsets $\{O_i\}_{i=1}^n$ of spacetime (not assumed causally convex) and restricts attention to those cases which admit a causal order $\leq$ in the sense of Eq.~\eqref{eq:causal-order}, where the $O_i$ are labelled such that $O_i\leq O_j \iff i\leq j$. He then proposes that when a measurement or operation is performed in each of the $O_i$, the state updates should be executed in the order given by $\leq$. In particular, the expectation value of a measurement by observer $j$ can only depend on the operations carried out in regions $O_k$ with $k<i$ (noting that the result does not depend on the specific causal order if multiple orders are possible).

The problem arises when one considers three parties Alice, Bob, and Charlie, where $O_A$ and $O_C$ are causally disjoint, while $O_B$ is such that $\bar O_B\subseteq M^+_{\bar O_A}$, $\bar O_B\cap J^+(\bar O_A)\neq\emptyset$, $\bar O_B\subseteq M^-_{\bar O_C}$ and $\bar O_B\cap J^-(\bar O_C)\neq\emptyset$, as visualised in Figure~\ref{fig:sorkin-scenario}. This implies not only that there is a causal order on $\{O_A,O_B,O_C\}$ (cf.\ Eq.~\eqref{eq:causal-order} and the succeeding discussion on page~\pageref{eq:causal-order}) but also that this causal order is unique, or `strict', if you will. I will sometimes refer to such a configuration as a \defn{Sorkin scenario}.
It then turns out that if Alice and Bob apply operations $\L_A$ and $\L_B$ local to their regions $O_A$ and $O_B$, which according to Sorkin's state update rule must be executed in the order $\L_B\circ\L_A$, then Alice's choice of operation $\L_A$ can in general influence expectation values of observables in Charlie's region, thereby establishing superluminal signalling.

This can happen even if Alice's and Bob's operations are non-selective, which we will assume throughout the remainder of this essay.\label{schoolboy} This is motivated by the assumption that Alice and Charlie do not communicate through any other means than possibly Alice's quantum operation (otherwise the question whether that operation establishes signalling would not be very interesting). In particular, Alice cannot communicate any measurement result to Charlie for him to perform post-selection on, meaning her operation will be non-selective to Charlie. As for Bob, although he is partly in Charlie's past and can therefore classically communicate to Charlie, Bob's measurement apparatus generally spans his entire region $O_B$. This means that any measurement result will generally only be available to Bob at a point outside of Charlie's past lightcone (\eg\ at the tip of his future domain of dependence, see Figure~\ref{fig:sorkin-scenario}), so that he cannot communicate the result to Charlie.

The possibility of superluminal signalling might not come as a surprise, as Bob's measurement can be highly non-local: Sorkin even considers the possibility that Bob's region $O_B$ is equal to (a neighbourhood of) an entire Cauchy surface. However, an example presented by Sorkin shows that superluminal signalling can already occur in the simple case of an ideal measurement on a two-qubit system.

\begin{figure}
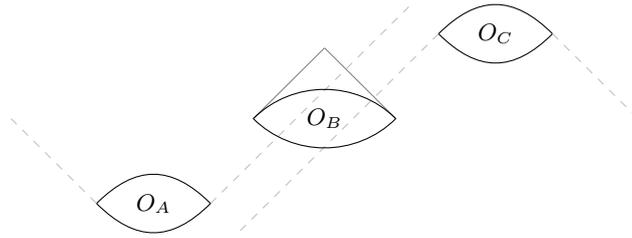

    \centering
    \tikzfig{figures/sorkin}
    \caption{An example of a scenario as considered in \textcite{Sor93}, capable of superluminal signalling when the rules for observable expectation values and state update are naively generalised to the relativistic setting. Here, as usual, the vertical axis represents time and the horizontal axis represents a space dimension.}
    \label{fig:sorkin-scenario}
\end{figure}

In this example, Alice and Charlie each have access to one qubit system, $\H_A$ and $\H_B$, while Bob has both of these qubits under his control. Suppose the initial system state is the pure state $\k{00}\in\H:=\H_A\tns\H_B$ and Alice can choose whether to leave this state intact or to flip her bit to obtain $\k{10}$. Next, Bob performs a non-selective incomplete Bell measurement; specifically, he measures the observable $\projk{\phi^+}$, where
\begin{equation}
    \kk{\phi^+} = \frac{1}{\sqrt2} \left(\k{00}+\k{11}\right)
\end{equation}
is the usual Bell state. This amounts to applying the operation
\begin{equation}\label{eq:bob-op}
    \rho\quad\longmapsto\quad \projk{\phi^+}\,\rho\,\projk{\phi^+} + \left(\one_\H - \projk{\phi^+}\right) \rho \left(\one_\H - \projk{\phi^+}\right).
\end{equation}
for $\rho\in\D(\H)$. Suppose Alice does change the state to $\k{10}$ before Bob performs his measurement. Since $\k{10}$ is orthogonal to $\kk{\phi^+}$, Bob will obtain result `no' on his $\kk{\phi^+}$-measurement with certainty and the system will be left intact in the state $\k{10}$ (as can easily be verified using Eq.~\eqref{eq:bob-op}). Charlie's subsystem will then be represented by the density operator
\begin{equation}
    \proj{0},
\end{equation}
obtained by tracing out Alice's subsystem. If Alice instead leaves the system in its original state $\k{00}$, Bob's non-selective measurement \eqref{eq:bob-op} collapses the combined system to the state $\frac12 \left(\projk{\phi^+} + \projk{\phi^-}\right)$, so that Charlie's subsystem ends up in the completely mixed state 
\begin{equation}
    \frac12 \left(\proj{0}+\proj{1}\right).
\end{equation}
Thus, if Charlie measures his system in the $\{\k{0},\k{1}\}$-basis and obtains the result $\k{1}$, this unambiguously indicates that Alice did not interfere with the system, thereby establishing superluminal signalling.\footnote{If Charlie measures $\k{0}$, on the other hand, he cannot draw a conclusion about Alice's action; but of course, if they repeat their experiment on a large ensemble of systems, Charlie can be increasingly confident about the message Alice has tried to convey. If they wish to establish superluminal signalling, these experiments should be conducted on a timescale smaller than Alice and Charlie's spatial separation.}

\

One might worry that this example is a too simplified representation of reality to be meaningful---particularly because it is formulated using the language of nonrelativistic Hilbert space quantum theory and assumes the qubits to be located entirely within Alice's and Charlie's respective regions, ignoring the wave-like nature of the electrons or photons carrying encoding qubits.
It would be a gross understatement to say, however, that this language is widely believed to pertain to physical reality to at least some extent---and this example shows that some spacetime realisations of certain Hilbert space operations, \ie\ CPTP maps between spaces of density operators, contradict relativistic causality and hence should be physically impossible.

Superluminal signalling also crops up when considering Sorkin scenarios in QFT, however. This is shown, for instance, in a second example given by \textcite{Sor93} and in an elaborate wave-packet analysis by \textcite{BBBD14}. Hence, Sorkin's paper has sparked a very general programme, which consists in delineating the class of quantum (field) operations that do not violate relativistic causality.

There are many different kinds of approaches to this problem, as we have already discussed in the introduction. An important difference between these approaches is how one formulates the notion of locality and spacelike separation between two parties. While this is hard-wired into the AQFT formalism by the Einstein causality condition (p.~\pageref{einstein-causality}), Hilbert space quantum theory uses what one could call the \emph{tensor product framework} of locality, which assumes that the global system is described by the tensor product of multiple Hilbert spaces, such that each party can only act on their respective Hilbert space. This is what we have done in the above two-qubit example. As I will argue below, deriving general conclusions about the (im)possibility of operations in this framework requires much care. I will illustrate this by briefly reviewing the approach taken by \textcite{BGNP01}.

% {\color{gray}\begin{definition}
%     Let's call a triple $(O_A,O_B,O_C)$ a \defn{Sorkin scenario} if
%         \begin{itemize}[(i)]
%             \item $O_A \prec O_B$, i.e.\ $J^+(O_A) \cap O_B \neq \emptyset$: part of $O_A$ is in the past of $O_B$; and
%             \item $O_B \prec O_C$, i.e.\ $J^+(O_B) \cap O_C \neq \emptyset$: part of $O_B$ is in the past of $O_C$; but
%             \item $O_1 \spacelike O_3$, i.e. $O_3 \subseteq O_1^\perp$: $O_1$ and $O_3$ are completely spacelike separated.
%         \end{itemize}
%         This is sketched in Figure … (though the situation is more general than in Figure …; it is not excluded that, for example, some parts of $O_A$ causally succeed $O_B$).
% \end{definition}
% }

% (see BFR21 p5q3 for their scenario - see also p4q1.)

\subsection{The tensor product framework}\label{sec:bgnp}
What follows is by no means an exhaustive review of \cite{BGNP01}; my purpose here is to only briefly discuss the significance of the paper to the problem of delineating the class of physically possible operations. The paper itself exhibits many other interesting results of both practical and foundational significance.%; see also \textcite{King21} for an in-depth treatment of some of these.

\textcite{BGNP01} consider a scenario in which two parties, Alice and Charlie, each have access to a quantum system described by finite-dimensional Hilbert spaces $\H_A$ and $\H_C$, respectively. We denote their tensor product by $\H_{AC} := \H_A\tns\H_C$. The objects of study are quantum operations on $\H_{AC}$; for the reasons discussed on page~\pageref{schoolboy}, we restrict our attention to \emph{non-selective} operations, \ie\ linear CPTP maps on $\D(\H_{AC})$, referred to as \emph{superoperators} in \cite{BGNP01}.

First of all, we are interested in whether such operations alone can be used by Alice to signal to Charlie, as formalised in the following definition.
\begin{definition}\label{def:semicausal}
    A non-selective operation $\L_{AC}: \D(\H_{AC}) \to \D(\H_{AC})$ is called \defn{no-signalling from $\H_A$ to $\H_C$} iff for all non-selective operations $\L_A: \D(\H_A)\to\D(\H_A)$ and all initial states $\rho_{AC}\in\D(\H_{AC})$, we have
    \begin{equation}
        \Tr_A \L_{AC}(\rho_{AC}) = \Tr_A \L_{AC}(\L_A\tns\id_C(\rho_{AC})),
    \end{equation}
    where $\id_C:\D(\H_C)\to\D(\H_C)$ is the identity map. It is called \defn{no-signalling} if it is both no-signalling from $\H_A$ to $\H_C$ and from $\H_C$ to $\H_A$, and \defn{signalling} if it is not no-signalling.
\end{definition}

\textcite{BGNP01} call these operations `semicausal' instead of `no-signalling from $\H_A$ to $\H_C$', and `causal' instead of `no-signalling'. I prefer the terminology of Definition~\ref{def:semicausal}, because the definition is asymmetric in $A$ and $C$; moreover, no assumption about the causal order or relative spatiotemporal positions of Alice and Charlie is made at this stage, which use of the term `semicausal' would suggest.
An example of an operation which satisfies none of these definitions (hence enabling signalling in either direction) is the non-selective incomplete Bell measurement considered above. 
Finally, note that \textcite[p.~3]{BGNP01} also consider the use of auxiliary systems by both Alice and Charlie; this makes the definition no more general, however, as shown by their Theorem~1.

Another of the main definitions of \textcite{BGNP01} is that of a localisable operation, \ie\ one which can be implemented using local operations and shared entanglement in the form of auxiliary systems. In this case, we cannot ignore the auxiliary systems.
\begin{definition}\label{def:localisable}
    A non-selective operation $\L_{AC}:\D(\H_{AC})\to\D(\H_{AC})$ is \defn{localisable} iff there are auxiliary systems $\H_R, \H_S$, a shared state $\rho_{RS} \in \H_{RS}$ and non-selective operations $\L_{AR}:\D(\H_{AR})\to\D(\H_{AR})$ and $\L_{CS}:\D(\H_{CS})\to\D(\H_{CS})$ such that\footnote{Here, of course, tensor factors of $\rho_{AC}\tns\rho_{RS}$ should be permuted such that they are acted upon by the respective tensor factors of $\L_{AR}\tns\L_{CS}$.}
    \begin{equation}
        \L_{AC}(\rho_{AC}) = \Tr_{RS}\left( \L_{AR}\tns\L_{CS}(\rho_{AC}\tns\rho_{RS}) \right) \text{\qquad for all } \rho_{AC}\in\D(\H_{AC}).
    \end{equation}
\end{definition}

One of the main purposes of \textcite{BGNP01} is to characterise the sets of no-signalling operations and localisable operations and explore the hierarchy in which they are situated. Let us refer to these sets as `NoSig' and `Loc', respectively (so that they are both subsets of the set of linear CPTP maps on $\B(\H_{AC})$). One can easily argue, for example, that Loc $\subseteq$ NoSig: namely, if some localisable operator were signalling, then Alice and Charlie would be able to signal by using just shared randomness and local operations---which would be possible even if they are spacelike separated, therefore breaking relativistic causality. (An example of a localisable and hence no-signalling operation is complete projective Bell basis measurement \cite[p.\,6]{BGNP01}---which is interesting, because we saw before that Eq.~\eqref{eq:bob-op}), which is an incomplete Bell basis measurement, is signalling.)

On the other hand, not every no-signalling operation is localisable, as shown by example in \textcite[sec.~V]{BGNP01}. One of the main results of \textcite{BGNP01} is, however, that \emph{an operation is semicausal iff it is semilocalisable}---the latter term will be introduced in Definition~\ref{def:semilocalisable} in Section~\ref{sec:bfr-hybrid-model}, where we will discuss this result in more detail.
The authors of \cite{BGNP01} also study stronger notions of localisability, such as those satisfied by operations which can be established by local operations without shared resources or with just shared randomness (instead of shared entanglement). In the hierarchy discussed before, these are of course subsets of Loc.

What is interesting for us is whether and where the class of \emph{physically possible} operations, denoted `Phys', lies in this hierarchy.\footnote{Phys refers to the operations which are somehow physically implementable, by Alice and Charlie and/or any third party, irrespectively of how much shared randomness or entanglement is used. However, this class is meant to be heuristic; as the following discussion suggests, it is debatable whether it can actually be formally defined within the tensor product framework.} Prima facie, one would expect to arrive at a conclusion like
\begin{equation}\label{eq:hierarchy}
    \text{Loc} \ \subseteq\  \text{Phys} \ \subseteq\ \text{NoSig}.
\end{equation}
When one makes the additional assumptions that Alice's and Charlie's Hilbert spaces are both realised by pointlike and spacelike separated physical systems such that Alice and Charlie each have access to exactly one of them, then such a hierarchy might be justified. We will discuss this situation in more detail in Section~\ref{sec:bfr-hybrid-model} using the hybrid model of AQFT.
When generalising away from these idealised assumptions, however, the situation becomes more ambiguous, as we will see shortly. The main worry is that the tensor product Hilbert space framework is only a partial representation of reality---in particular, it is inherently nonrelativistic, so that one has to superimpose a spacetime structure onto the formalism. This can lead to a Hilbert space operation having multiple fundamentally different physical realisations. Let me give some examples.

First of all, it is in general not true that all localisable operations are physically possible. For example, if Alice's system has some spatial extent, as would in fact be true for any truly physical realisation of her quantum system, then her local operation $\L_{AR}$ in Definition~\ref{def:localisable} might well be impossible because it could enable superluminal signalling within Alice's system. This could be realised by letting Alice's system be the tensor product of two subsystems
$\H_A = \H_{A_1}\tns\H_{A_2}$, assumed spacelike separated. Moreover, apart from violation of relativistic causality, there might be further physical phenomena restricting the class of physically possible operations that apply even to the idealised case of pointlike systems (see Section~\ref{sec:conclusion} for a discussion of some of these). Therefore it seems that the first inclusion in \eqref{eq:hierarchy} is not fully justified.

Furthermore, one can argue that some of the operations classified as `signalling' by Definition~\ref{def:semicausal} might actually admit a physical implementation. This is because the physical systems involved might admit some concrete physical (hence relativistic) realisation which has the property that although the operation enables signalling between the systems, there are further physical restrictions which make it impossible to configure \emph{spacelike} parties having access to the respective systems so as to establish \emph{superluminal} signalling.

In the case where both systems $\H_A$ and $\H_C$ are qubits (as in the example of non-selective incomplete Bell measurement), one could think, for example, of realising them as the spins of two electrons which are located very close together, perhaps are even bound to the same atom. In this case, it is not at all straightforward how one could position two spacelike parties in the scenario which each have access to only one of the electrons.\footnote{Of course, one now also has the issue of how Bob is practically able to perform his operation $\L_{AC}$ on two such closely located electrons, but that's another issue altogether.} One would have to take into account the wave-like nature of the \emph{positions} of the electrons themselves;\footnote{One could even consider placing the electrons (or perhaps this would be easier to do with photons) in a superposition of `electron $A$ is at $x$ and electron $B$ is at $y$' and `electron $A$ is at $y$ and electron $B$ is at $x$', where $x\neq y$ are space coordinates.} indeed, the possibility of extracting superluminal signalling would in that case depend on as yet unknown results in the study of indefinite causal orders \cite{Bru14}.\footnote{This programme involves, among other things, studying the effect of spatial superpositions on relativistic spacetime structure and quantum causality. \textcite[Fig.~1]{ZCPB19} provide an example of a situation in which the nature of the relativistic causal relation between two systems profoundly depends on their quantum properties.}

Generally, while we \emph{can} say that signalling between Alice and Charlie in a Sorkin scenario is impossible, Bob's operation is only one of the ingredients that goes into the Sorkin scenario (others being positioning Alice and Charlie in such a way as to have access to exactly one of the subsystems), and so it is not necessarily that operation which renders the scenario impossible.

Another possibility is to consider a qudit with $d=4$, described by a four-dimensional Hilbert space $\H$ and realised in a single, indivisible system with negligible size.\footnote{This can be achieved, for example, by trapping a particle in a potential well such that there is a large gap between the fourth and fifth lowest eigenvalues of the Hamiltonian and restricting attention to the first four eigenstates, which then form the basis vectors of the Hilbert space \cite{LAC16}.} Then one can arbitrarily decompose the Hilbert space into two two-dimensional Hilbert spaces $\H_A$ and $\H_C$ such that $\H = \H_{A}\tns\H_{C}$; but obviously no operation performed on $\H$ can lead to superluminal signalling. Of course, this example is lame since $\H_A$ and $\H_C$ do not correspond to subsystems in the physical sense, but it illustrates once again that whether a given operation in the tensor product framework violates relativistic causality or not depends in part on the particular physical implementation of the involved Hilbert space, regardless of its tensor product factorisation.

In conclusion, we see that in sufficiently general situations, the Hilbert space tensor product framework is somewhat ambiguous on the question what operations are physically realisable, because part of the answer lies outside of this formalism. This is not to say that the framework is not useful: the assumption that the system can be split into two pointlike subsystems each of which can be accessed by either Alice or Charlie is in many cases justifiable. In the next section, we return to the setting of AQFT and the FV scheme. This approach is more complete in the sense that it does not require the aforementioned assumption: instead, the system is described by only one algebra $\Af$ and locality, in the form of Einstein causality, is integrated into the formalism from the outset.

    \section{FV-induced operations are causally well-behaved}\label{sec:bfr}
In Section~\ref{sec:bfr-generalisation} below we review and slightly generalise a result proven by \textcite[Thm.\,2]{BFR21}, showing that there is no superluminal signalling in a Sorkin scenario when restricting Alice's and Bob's operations to FV-induced operations.
Some intuition behind the proof of this result will be given in Section~\ref{sec:bfr-hybrid-model}, where we study the Sorkin scenario in the hybrid model. In addition, we show that in this model and under suitable conditions, the converse result holds: any operation that is causally well-behaved can be implemented in the FV scheme.

\subsection{Generalisation of a theorem of Bostelmann et al.}\label{sec:bfr-generalisation}
\textcite[Thm.\,2]{BFR21} show that for any observable $c\in\Af(O_C)$, if Alice's and Bob's operations are FV-induced in $O_A$ and $O_B$, respectively, then the expectation value of Charlie's $c$-measurement is independent of Alice's operation.
Because we are (at least I am!) interested in characterising the class of physically possible operations, this result actually does not tell us much: for if it turns out that not all possible operations are FV-induced, then if Bob uses an FV-induced operation, Alice might still be able to superluminally signal by using a non-FV-induced operation. Therefore we will generalise Theorem~2 of \cite{BFR21} slightly by only requiring Alice's operation to be local to $O_A$ (see Definition~\ref{def:aqft-local-op}). The generalisation is simple enough, but interpreting it requires extending the assumption of causal factorisation introduced in Definition~\ref{def:causal-factorisation} to general local operations (as Sorkin did, see page~\ref{sorkins-rule}---but we will try to be slightly more rigorous).

In particular, we adopt the following variant of Eq.~\eqref{eq:expected-b-i} (which, recall, is a result of causal factorisation \eqref{eq:causal-factorisation}). For causally ordered $O_1\leq O_2 \leq \cdots \leq O_n$, if party $i$ measures an observable $c\in\Af(O_i)$ (using the FV approach or by whatever other means, if those turn out to exist) while the other parties ($j\neq i$) perform non-selective operations $\G_j:\S(\Af)\to\S(\Af)$ that are local to $O_j$ (Definition~\ref{def:aqft-local-op}), then the expectation value of $i$'s $c$-measurement, given that the initial system state is $\om$, is
\begin{equation}\label{eq:extended-causal-factorisation}
    E_i(c;\om) = \left(\G_{i-1}\circ\cdots\circ\G_{1}(\om)\right)(c).
\end{equation}
We call this \defn{extended causal factorisation}, and we assume it holds for the class of physically possible operations, for much the same reasons as discussed in Section~\ref{sec:fv-multiple}, pertaining to the operational meaning of operation as being a process that takes an input at early times and has an output at late times (but it should be checked, as far as possible, for suitable candidates of this class). Again, we see that for a region contained in $O_i^\perp$, it does not matter whether we consider it to be `early' or `late' with respect to the operation $\G_i$, since $\G_i$ is local to $O_i$ and hence does not have influence on observables localisable in regions contained in $O_i^\perp$. Accordingly, Eq.~\eqref{eq:extended-causal-factorisation} does not depend on the choice of causal order if multiple orders are possible.

We are now ready to state the following theorem, slightly generalised from \textcite[Thm.~2]{BFR21}.
\begin{theorem}\label{thm:bfr-thm2}
    Let $O_A, O_B,O_C\subseteq M$ be precompact regions which admit a causal order $O_A\leq O_B\leq O_C$, while $\bar O_A\spacelike\bar O_C$. Let $\G_B$ be an operation that is non-selectively FV-induced in $O_B$. Then for any non-selective operation $\G_A$ local to $O_A$ and any initial state $\om\in\S(\Af)$, we have
    \begin{equation}
        \left(\G_B\circ\G_A(\om)\right)(c) = \left(\G_B(\om)\right)(c)  \text{\qquad for all } c\in\Af(O_C).
    \end{equation}
\end{theorem}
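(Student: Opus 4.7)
The plan is to expand $\G_B$ using its FV structure, then use Proposition~\ref{prop:FV-local-2} to relocalise the observable $\Theta(c\tns\one_\Bf)$ into a region spacelike to $O_A$, after which locality of $\G_A$ finishes the argument. Concretely, writing $\G_B=\I_{\s,\one_\Bf}$ for the underlying FV measurement with coupling zone $K\subseteq O_B$, the non-selective state-update rule~\eqref{eq:state-update-ns} expresses both $(\G_B(\om))(c)$ and $(\G_B\circ\G_A(\om))(c)$ in terms of $\Theta(c\tns\one_\Bf)$ paired against $\om\tns\s$ and $\G_A(\om)\tns\s$ respectively. If $\Theta(c\tns\one_\Bf)$ can be shown to lie in $\Uf(L^-) = \Af(L^-)\tns\Bf(L^-)$ for some region $L^-\spacelike O_A$, then writing it as a finite sum of simple tensors $\sum_i a_i\tns b_i$ with $a_i\in\Af(L^-)$ and applying Definition~\ref{def:aqft-local-op} factor by factor (so that $\G_A(\om)(a_i)=\om(a_i)$) completes the proof.

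For the relocalisation I will set $L^+:=O_C$ and $L^-:=M^-_K\cap\bar O_A^{\,\perp}$, the closure ensuring openness of $L^-$. Both are regions by Lemma~\ref{lemma:ccompl-is-cconvex} and the usual fact that intersections of causally convex sets are causally convex. Since $\bar O_A^{\,\perp}\subseteq O_A^\perp$ we indeed have $L^-\spacelike O_A$. The containments $L^\pm\subseteq M^\pm_K$ are immediate: $L^-\subseteq M^-_K$ by construction, and $L^+=O_C\subseteq M^+_K$ follows from $K\subseteq O_B$, the order $O_B\leq O_C$, and the definition of $\leq$ in~\eqref{eq:causal-order}.

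The main obstacle, and the real geometric content, is verifying the hypothesis $L^+\subseteq D(L^-)$ of Proposition~\ref{prop:FV-local-2}. I will extract two facts from the hypotheses: (a) $J^+(K)\cap J^-(\bar O_A)=\emptyset$, which follows by transitivity from $\bar K\subseteq\bar O_B\subseteq M^+_{\bar O_A}$; and (b) no past-directed causal curve from $O_C$ ever enters $J^+(\bar O_A)$, since $\bar O_A\spacelike\bar O_C$ forbids any point of $\bar O_A$ from causally preceding any point of $\bar O_C$. Given a past-inextendible causal curve $\g$ from $p\in O_C$, compactness of $K$ forces $\g$ to leave $J^+(K)$ at some finite parameter $s_K$; by (a) and the closedness of $J^-(\bar O_A)$, continuity of $\g$ places $\g(s)$ outside $J^-(\bar O_A)$ on a neighbourhood of $s_K$, while (b) ensures $\g(s)\notin J^+(\bar O_A)$ throughout. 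Hence $\g(s)\in L^-$ for $s$ slightly past $s_K$, so $O_C\subseteq D^+(L^-)\subseteq D(L^-)$; Proposition~\ref{prop:FV-local-2} then delivers $\Theta(c\tns\one_\Bf)\in\Uf(L^-)$, completing the argument.
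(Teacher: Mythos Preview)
Your proof is correct and follows exactly the paper's strategy: set $L^+=O_C$, $L^-=\bar O_A^{\,\perp}\cap M_K^-$, apply Proposition~\ref{prop:FV-local-2}, then expand $\Theta(c\tns\one_\Bf)$ in simple tensors and use locality of $\G_A$. The only difference is that the paper outsources the geometric inclusion $O_C\subseteq D(L^-)$ to Lemma~\ref{lemma:geom} (quoted from \cite{BFR21}), whereas you prove it inline; your argument for this is sound, though the phrase ``compactness of $K$ forces $\g$ to leave $J^+(K)$'' tacitly uses global hyperbolicity (so that $J^+(K)\cap J^-(p)$ is compact and a past-inextendible causal curve cannot remain in it), and the case $p\notin J^+(K)$---where $p\in L^-$ directly---should perhaps be mentioned explicitly.
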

The proof of the theorem itself does not use extended causal factorisation. However, only by using extended causal factorisation \eqref{eq:extended-causal-factorisation} can we interpret it as saying that the FV-induced operation $\G_B$ does not enable Alice to signal to Charlie using her local operation $\G_A$.

The proof uses the following geometric lemma, proven in \textcite{BFR21} using a variant of Lemma~\ref{lemma:separating-cauchy-surface}.
\begin{lemma}[{\cite[Lemma~4]{BFR21}}]\label{lemma:geom}
    Let $O_A,O_B,O_C\subseteq M$ be as in Theorem~\ref{thm:bfr-thm2} and let $K\subset O_B$ be compact. Then
    \begin{equation}
        O_C \subseteq D\left(\bar{O_A}^\perp \setminus J^+(K)\right).
    \end{equation}
\end{lemma}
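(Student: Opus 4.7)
The plan is to prove the stronger statement $O_C \subseteq D^+(\bar O_A^\perp \setminus J^+(K))$. Fix $p \in O_C$ and a past-inextendible smooth causal curve $\gamma:I\to M$ through $p$, parametrized so that $I$ is an interval ending at $0$ on the right and $\gamma(0)=p$; the aim is to exhibit some $t^*\in I$ with $\gamma(t^*)\in \bar O_A^\perp \setminus J^+(K)$, at which point $p$ will lie in $D^+(\bar O_A^\perp \setminus J^+(K))\subseteq D(\bar O_A^\perp \setminus J^+(K))$. From the hypothesis $\bar O_A \spacelike \bar O_C$ together with $p\in O_C\subseteq\bar O_C$, I first note $p\in \bar O_A^\perp$, while $O_A\leq O_B$ gives $\bar O_B\cap J^-(\bar O_A)=\emptyset$.

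I would then record three causal observations. (i) $J^+(K)\cap J^-(\bar O_A)=\emptyset$: a common point $q$ would produce $k\in K\subseteq \bar O_B$ and $a\in \bar O_A$ with $k\leq q\leq a$, forcing $k\in \bar O_B\cap J^-(\bar O_A)$, contradicting $O_A\leq O_B$. (ii) $\gamma(t)\notin J^+(\bar O_A)$ for every $t\in I$: if $a\leq\gamma(t)$ for some $a\in \bar O_A$, concatenation with the future-directed segment $\gamma|_{[t,0]}$ gives $a\leq p$, contradicting $p\in\bar O_A^\perp$. Combined with the definition of $\bar O_A^\perp$, this yields the equivalence $\gamma(t)\in \bar O_A^\perp \iff \gamma(t)\notin J^-(\bar O_A)$. (iii) $\gamma$ eventually escapes $J^+(K)$ in the past: for any $t\in I$, $\gamma|_{[t,0]}$ is future-directed so $\gamma(t)\in J^-(p)$, hence $\{t\in I:\gamma(t)\in J^+(K)\}\subseteq \gamma^{-1}(J^+(K)\cap J^-(p))$; the set $J^+(K)\cap J^-(p)$ is compact in the globally hyperbolic spacetime $M$, and past-inextendible causal curves leave every compact set, so this parameter set is bounded away from the left endpoint of $I$.

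The argument then closes by a connectedness step. Define
\begin{equation*}
    R:=\{t\in I:\gamma(t)\in J^-(\bar O_A)\}, \qquad T:=\{t\in I:\gamma(t)\in J^+(K)\};
\end{equation*}
both are closed in $I$, using the closedness of $J^\pm$ of compact sets in globally hyperbolic spacetimes. Observation (i) gives $R\cap T=\emptyset$; by (ii), existence of a suitable $t^*$ is equivalent to $R\cup T\neq I$. Were $R\cup T=I$, connectedness of the interval $I$ would force one of $R,T$ to equal $I$ and the other to be empty: $R=I$ would put $p=\gamma(0)\in J^-(\bar O_A)$, contradicting $p\in\bar O_A^\perp$, while $T=I$ would contradict (iii). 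Hence $\gamma(t^*)\in \bar O_A^\perp \setminus J^+(K)$ for some $t^*$, and since $\gamma$ and $p$ were arbitrary the conclusion follows. The main obstacle will be marshalling the global-hyperbolicity technicalities cleanly—closedness of $J^+(K)$ for compact $K$, compactness of $J^+(K)\cap J^-(p)$, and the ``no compact image'' property of past-inextendible causal curves—all standard and citable but worth flagging, since the rest of the argument is just a short connectedness observation on top of the two causal-disjointness facts (i) and (ii).
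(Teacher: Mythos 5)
Your argument is correct, and it takes a genuinely different route from the one the paper points to. The paper does not spell out a proof at all: it defers to \textcite[Lemma~4]{BFR21} and notes that the argument there runs through a variant of Lemma~\ref{lemma:separating-cauchy-surface}. In that route one observes that the causal order gives $K\subseteq \bar O_B\subseteq M^+_{\bar O_A}$, applies the separating-Cauchy-surface lemma to the compact sets $\bar O_A$ and $K$ to obtain a Cauchy surface $\Sigma\subseteq M^+_{\bar O_A}\cap M^-_K$, and then argues that every past-inextendible causal curve from $p\in O_C$ meets $\Sigma$ at a point which lies outside $J^-(\bar O_A)\cup J^+(K)$ by the choice of $\Sigma$ and outside $J^+(\bar O_A)$ by exactly your observation (ii). You instead dispense with the Cauchy surface entirely and run a connectedness argument on the parameter interval: the sets $R=\gamma^{-1}(J^-(\bar O_A))$ and $T=\gamma^{-1}(J^+(K))$ are closed and disjoint (your (i)), neither can be all of $I$ (since $\gamma(0)=p\in\bar O_A^\perp$, and by non-imprisonment), so they cannot cover the connected interval $I$. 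Both proofs consume the same causal inputs --- the disjointness facts (i) and (ii) and the standard global-hyperbolicity package (closedness of $J^\pm$ of compacta, compactness of $J^+(K)\cap J^-(p)$, escape of inextendible causal curves from compact sets) --- but yours is self-contained, avoids re-deriving a foliation, and delivers the slightly sharper conclusion $O_C\subseteq D^+\bigl(\bar O_A^\perp\setminus J^+(K)\bigr)$ directly. The only points worth making explicit in a final write-up are the ones you already flag: that truncating an arbitrary past-inextendible curve through $p$ to its portion ending at $p$ is legitimate for testing membership in $D^+$, and that precompactness of $O_A$ is what makes $J^-(\bar O_A)$ closed.
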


\begin{proof}[Proof of Theorem~\ref{thm:bfr-thm2}]
    Denote the probe theory of Bob's FV-measurement by $\Bf$ and let $c\in\Af(O_C)$. Then $c\tns\one_{\Bf} \in (\Af\tns\Bf)(O_C)$; but we need to find out where $\Theta(c\tns\one_\Bf)$ can be localised, where $\Theta$ is the scattering morphism of Bob's measurement.
    From Lemma~\ref{lemma:geom}, we see that $O_C \subseteq D(\overline{O_A}^\perp \cap M_K^-)$, with $K$ the coupling region of Bob's measurement. Then Proposition~\ref{prop:FV-local-2} applied to $L^+ = O_C$ and $L^- = \overline{O_A}^\perp\cap M_K^-\subseteq M_K^-$ implies that
    \begin{equation}
        \Theta(c\tns\one_\Bf) \in (\Af\tns\Bf)(\overline{O_A}^\perp \cap M_K^-),
    \end{equation}
    so in particular, $\Theta(c\tns\one_\Bf)$ can be localised in a region causally disjoint to $O_A$.\footnote{Note that taking the closure of $O_A$ is necessary because in AQFT one usually considers the assignment of regions to subalgebras to only be defined for open regions, while $O_A^\perp$ is closed \cite[Lemma~A.8]{FV12}.} By the locality of $\G_A$, then,
    \begin{equation}\label{eq:bfr-proof}
        (\G_A(\om)\tns\s)(\Theta(c\tns\one_\Bf)) = (\om\tns\s)(\Theta(c\tns\one_\Bf))
    \end{equation}
    for any initial system state $\om\in\S(\Af)$ and probe state $\s\in\S(\Bf)$. [This follows from the fact that $(\G_A(\om)\tns\s)(a\tns b) = (\G_A(\om))(a)\s(b) = \om(a)\s(b) = (\om\tns\s)(a\tns b)$ for $a\in\Af(L)$ with $L\subseteq O_A^\perp$ and $b\in\Bf(M)$, using the locality of $\G_A$ and noting that $\Theta(c\tns\one_\Bf)$ is a linear combination of such $a\tns b$ terms.] From Eq.~\eqref{eq:state-update-ns} for the non-selective FV state update $\I_{\s,\one_\Bf}$, here denoted by $\G_B$, we see that \eqref{eq:bfr-proof} is synonymous to
    \begin{equation}
        \left(\G_B\circ\G_A(\om)\right)(c) = \left(\G_B(\om)\right)(c),
    \end{equation}
    as required.
\end{proof}

% One might wonder whether Theorem~\ref{thm:bfr-thm2} also holds when there is a causal order such that $O_A\leq O_B\leq O_C$ and $O_A\spacelike O_C$, without taking closures.

Making further use of extended causal factorisation we can release some of the constraints on $O_B$, leading to perhaps the most succinct formulation of the idea that `all FV-induced operations show good causal behaviour'.

\begin{corollary}\label{cor:fv-is-causal}
    Let $O_B\subseteq M$ be a region and $\G_B$ an operation that is non-selectively FV-induced in $O_B$. Then assuming extended causal factorisation, for any two spacelike regions $O_A$ and $O_C$ such that the triple $\{O_A,O_B,O_C\}$ is causally orderable, $O_B$ does not enable signalling between $O_A$ and $O_C$, \ie\ local operations performed in $O_A$ do not influence the expectation values $E_C(c;\om)$ of observables $c\in\Af(O_C)$ and vice versa, for any initial state $\om\in\S(\Af)$.
\end{corollary}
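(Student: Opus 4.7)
The plan is to reduce the corollary to Theorem~\ref{thm:bfr-thm2} by a case analysis on the position of $O_B$ in the causal order, combined with the locality of Alice's operation. It suffices to show that expectation values of observables in $O_C$ are independent of $\G_A$; the reverse direction (no signalling from $O_C$ to $O_A$) follows immediately by interchanging the roles of Alice and Charlie, since the hypotheses are symmetric in $A$ and $C$.

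Fix any causal order on $\{O_A, O_B, O_C\}$; by extended causal factorisation, $E_C(c;\om)$ does not depend on the choice. By Eq.~\eqref{eq:extended-causal-factorisation}, only the operations of parties at positions strictly before Charlie's appear in the resulting expression, so if Alice's position is at or after Charlie's we are immediately done. Suppose then that Alice precedes Charlie. There are three sub-cases corresponding to Bob's position. If $O_A\leq O_B\leq O_C$, then $E_C(c;\om) = (\G_B\circ\G_A(\om))(c)$, which equals $(\G_B(\om))(c)$ by Theorem~\ref{thm:bfr-thm2}. If $O_A\leq O_C\leq O_B$, then $E_C(c;\om) = (\G_A(\om))(c) = \om(c)$, using that $\G_A$ is local to $O_A$, $c\in\Af(O_C)$ and $O_C\spacelike O_A$ (Definition~\ref{def:aqft-local-op}). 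If $O_B\leq O_A\leq O_C$, then $E_C(c;\om) = (\G_A\circ\G_B(\om))(c) = (\G_B(\om))(c)$ by the same locality argument applied to the state $\G_B(\om)$. In each case the result is independent of $\G_A$, giving the desired conclusion.

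The argument is essentially formal, with the only nontrivial input being Theorem~\ref{thm:bfr-thm2}; the remaining cases reduce to the locality of $\G_A$. The main thing to take care of is exhaustiveness: one must check that in every configuration admitting a causal order, the resulting position of Bob falls under one of the above three possibilities (or is handled trivially because Alice does not precede Charlie), which follows directly from the linearity of the causal order on the triple. A minor technical point is that Theorem~\ref{thm:bfr-thm2} is stated with the closure condition $\bar O_A\spacelike\bar O_C$, slightly stronger than the $O_A\spacelike O_C$ of the corollary; we read the statement of the corollary as tacitly including this strengthening, it being natural to require the separation to hold for the closures of the precompact regions under consideration.
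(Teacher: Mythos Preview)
Your proposal is correct and follows essentially the same case analysis as the paper's own proof: first dispose of the orderings with $O_C\leq O_A$ by extended causal factorisation, then for the remaining three orderings use Theorem~\ref{thm:bfr-thm2} when $O_A\leq O_B\leq O_C$ and the locality of $\G_A$ in the two cases where Bob is first or last. Your formula $(\G_A\circ\G_B(\om))(c)$ in the $O_B\leq O_A\leq O_C$ case is in fact the correct composition order under Eq.~\eqref{eq:extended-causal-factorisation}; the paper writes $(\G_B\circ\G_A)(\om)$ there, which appears to be a typo. Your remark about the closure discrepancy $\bar O_A\spacelike\bar O_C$ versus $O_A\spacelike O_C$ is also apt and not addressed in the paper.
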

\begin{proof}
    Denote the causal order by $\leq$. Let's consider the case in which Alice tries to signal to Charlie; so Alice performs $\G_A$ local to $O_A$, Bob performs $\G_B$ and Charlie measures $c\in\Af(O_C)$.
    
    The case of $O_A\leq O_B\leq O_C$ is captured by Theorem~\ref{thm:bfr-thm2}. If $O_C\leq O_A$, then extended causal factorisation~\eqref{eq:extended-causal-factorisation} directly implies that $E_C(c;\om)$ does not depend on $\G_A$. If $O_A\leq O_C\leq O_B$ then \eqref{eq:extended-causal-factorisation} tells us that $E_C(c;\om) = (\G_A(\om))(c) = \om(c)$ because $\G_A$ is local to $O_A$; and finally, if $O_B\leq O_A\leq O_C$ then $E_C(c;\om) = ((\G_B\circ\G_A)(\om))(c) = (\G_B(\om))(c)$, for the same reason. Hence $E_C(c;\om)$ is independent of $\G_A$ and so Alice cannot signal to Charlie; similarly, Charlie cannot signal Alice.
\end{proof}

\subsection{The hybrid model}\label{sec:bfr-hybrid-model}
In Section~\ref{sec:bgnp}, I argued that the Hilbert space tensor product framework of locality is generally not complete enough to determine whether certain operations are physically realisable or not, because the answer can depend on the particular realisation of the Hilbert spaces as relativistic quantum systems. In this section we discuss the problem of superluminal signalling in the hybrid model, which is one way of embedding Hilbert spaces into spacetime (perhaps the most straightforward and simple way, but also a highly idealised one, from a QFT perspective). We will not arrive at a characterisation of the physically realisable operations, as that eventual goal is far beyond the scope of this essay; instead, we will characterise the set of FV-induced operations in the hybrid model (Theorem~\ref{thm:scattering-morphism-hybrid}), which provides an intuitive picture of why these operations, as opposed to general local operations, show good causal behaviour, as we proved for the general case in Theorem~\ref{thm:bfr-thm2}. In addition, we will exploit the hybrid character of this model to make a connection between the FV framework and the purely Hilbert space treatment of \textcite{BGNP01}, arriving via this route at the interesting converse result that in the hybrid model, all operations that show good causal behaviour can in fact be implemented in the FV framework, under a suitable geometrical condition on $O_A$ and $O_C$.

Consider three regions $O_{A,B,C}\subseteq M$ such that there is a causal order $O_A\leq O_B\leq O_C$ and $O_A\spacelike O_C$.\footnote{I am not taking closures of $O_{A,C}$ here, unlike in Theorem~\ref{thm:bfr-thm2}; this is sufficient for the hybrid model.} We will assume Alice's and Charlie's systems to be described by one pointlike quantum system each; so the system theory is given by the global algebra $\Af(M) = \B(\H_A\tns\H_C) =: \B(\H_{AC})$ and two smooth future-directed inextendible causal curves (worldlines) $\g_A$ and $\g_C$, intersecting $O_A$ and $O_C$ respectively and (in the absence of any influence such as a measurement by Bob) associated to the subalgebras $\B(\H_A)\tns\one_C$ and $\one_A\tns\B(\H_C)$. Moreover, we will assume, for simplicity, that Bob's FV measurement is described by a probe theory with only one worldline, $\g_B$, associated to $\H_B$, and a coupled theory which is also an instance of the hybrid model. The following argument will not change too much, however, if Bob's probe consists of multiple pointlike systems (\ie\ multiple worldlines). We will again assume that $\phi_- = \id_\Uf$, so that both the uncoupled and coupled theory have as their global algebra $\B(\H)$ with $\H:= \H_A\tns\H_B\tns\H_C$.

We can carry over many of the arguments from Section~\ref{sec:fv-hybrid-model} to this situation. For example, we see that if the probe $\g_B$ does not intersect $\g_A$ (or $\g_C$), then by the Diamond Axiom of the coupled theory, Bob's scattering morphism $\Theta=\phi_+$ leaves the subalgebra $\B(\H_A)\tns\one_{BC}$ (resp.\ $\one_{AB}\tns\B(\H_C)$) invariant (recall Figure~\ref{fig:D-O-covering}). Here, we say that a *-automorphism $\psi:\B(\H)\to\B(\H)$ \defn{leaves} a *-subalgebra $\mathfrak R\subseteq\B(\H)$ \defn{invariant} iff $\psi(\mathfrak R) = \mathfrak R$.
Therefore, if we want to have any hope of establishing signalling from Alice to Bob, $\g_B$ should intersect both $\g_A$ and $\g_C$; moreover, it should intersect $\g_A$ in Alice's future $J^+(O_A)$ and $\g_C$ in Charlie's past $J^-(O_C)$. (This is not always possible; here we will assume it is, amounting to our `suitable geometrical condition' announced above.)

Importantly, the nature of the Sorkin scenario imposes restrictions on how that can be realised. By definition of $J^+(O_A)$, if $\g_B$ intersects $\g_A$ at, say, $\g_B(t_A)\in J^+(O_A)$ for some $t_A\in\R$, then $\g_B(t) \in J^+(O_A)$ for all $t > t_A$, due to the fact that $\g_B$ is causal. Since $O_A\spacelike O_C$ and thus $J^+(O_A) \cap J^-(O_C) = \emptyset$, we see that $\g_B$ must intersect $\g_C$ \emph{before} it intersects $\g_A$ (\ie\ it intersects $\g_C$ at $\g_B(t_C)$ where $t_C<t_A$ and $\g_B$ is future-directed). This is illustrated in Figure~\ref{fig:the-pretty-one}; again we assume $\g_B$ intersects each of the worldlines only once.

\begin{figure}
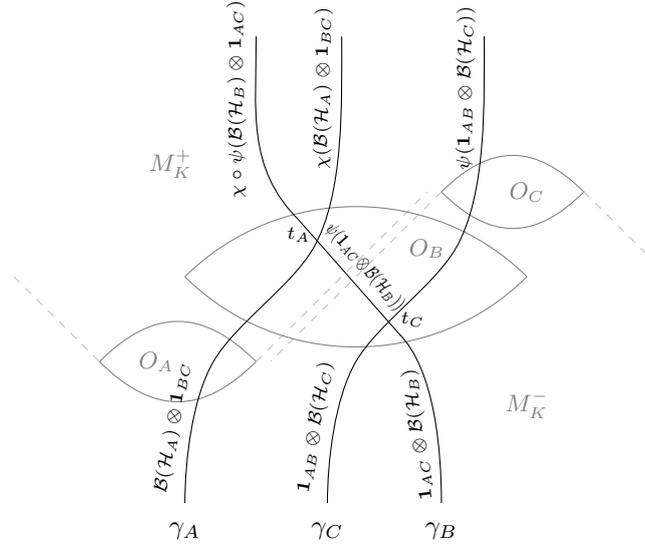

    \centering
    \tikzfig{the-pretty-one}
    \caption{Illustration of the coupled theory in the Sorkin scenario of the hybrid model, in the presence of a *-automorphism $\psi$ at $\g_B(t_C)$ and $\chi$ at $\g_B(t_A)$. (See the caption of Figure~\ref{fig:smiley-face} for an explanation of the algebra labels next to the worldlines.) In the \emph{uncoupled} theory, the algebras do not change along the worldlines. In this illustration of the coupled theory, the algebras indicated at late times are precisely those obtained by applying the response map $\phi_+=\Theta$ of Eq.~\eqref{eq:scattering-morphism-hybrid} to the corresponding algebras of the uncoupled theory: see Eq.~\eqref{eq:phi+-uf-oc} for the case of $\g_C$.}
    \label{fig:the-pretty-one}
\end{figure}

Now, what is the set of FV-induced operations? For this we again need to consider what response maps $\phi_+$ (or, if we do not assume $\phi_-=\id_\Uf$, scattering morphisms $\Theta$) are allowed. The subalgebras associated to worldline segments in the coupled theory are shown in Figure~\ref{fig:the-pretty-one}. Recall that due to the Diamond Axiom, these subalgebras are invariant along line segments which do not intersect any other worldline. To the past of $\g_C$, the algebras coincide with the uncoupled theory, for we assume $\phi_- = \id_\Uf$.

Non-trivial interactions now occur at the intersection points $\g_B(t_C)$ and $\g_B(t_A)$. At $\g_B(t_C)$, the algebras associated with worldlines $\g_B$ and $\g_C$ transform by a *-automorphism $\psi:\B(\H)\to\B(\H)$ which, due to the Diamond Axiom, must leave $\one_A\tns\B(\H_{BC})$ invariant; we will assume it can be written as 
\begin{equation}
    \psi = \id_A\tns\psi_{BC}, \text{\quad where \ } \psi_{BC}:\B(\H_{BC})\to\B(\H_{BC}),
\end{equation}
is a *-automorphism, reflecting the fact that $\psi$ is an interaction between the systems at $\g_B$ and $\g_C$.\footnote{\label{fn:assume-identity}The most general form would be $\psi_A\tns\psi_{BC}$. Recall from Section~\ref{sec:fv-hybrid-model} (below Eq.~\eqref{eq:psi+-deco}) that the eventual response map $\phi_+$ may indeed include *-automorphisms of the algebras associated to individual worldlines (in this case $\psi_A$), but that these can be seen as simply corresponding to unitary evolutions of isolated quantum systems. In particular, they do not contribute to signalling. We will take care of these additional automorphisms later, in Eq.~\eqref{eq:phi-with-ups}, since intuitively, it does not make much sense to include $\psi_A$ in an interaction that happens at a point which is not on $\g_A$.}
This implies that the segment of $\g_B$ between $t_C$ and $t_A$ and that of $\g_C$ after $\g_B(t_C)$ have the algebras that are indicated in Figure~\ref{fig:the-pretty-one}.

Similarly, at $\g_B(t_A)$ an interaction $\chi:\B(\H)\to\B(\H)$ occurs between the systems described by the algebras $\B(\H_A)\tns\one_{BC}$ at $\g_A$ and $\psi(\one_{AC}\tns\B(\H_B))$ at $\g_B$ (here the tensor factors are permuted for concision). We therefore assume $\chi$ to be the identity on $\psi(\one_{AB}\tns\B(\H_C))$:
\begin{equation}
    \chi\circ\psi(\one_{AB}\tns c) = \psi(\one_{AB}\tns c) \text{\qquad for all } c\in\B(\H_C)
\end{equation}
This is true if and only if $\psi^{-1}\circ\chi\circ\psi$ is the identity on $\one_{AB}\tns\B(\H_C)$, so we can write
\begin{equation}
    \chi = \psi\circ (\chi_{AB}\tns\id_C)\circ\psi^{-1},  \text{\quad where \quad} \chi_{AB}:\B(\H_{AB})\to\B(\H_{AB})
\end{equation}
is also a *-automorphism.

In addition to these interactions $\psi$ and $\chi$, the response map $\phi_+$ can involve evolutions of the systems when isolated, which can `occur' at any point along a worldline (instead of only at intersections), see footnote~\ref{fn:assume-identity}. Overall, this leads to an extra term $\ups = \ups_A\tns\ups_B\tns\ups_C$ where $\ups_{A,B,C}$ are *-automorphisms of $\B(\H_{A,B,C})$, which should be applied before the interaction terms $\psi$ and $\chi$. We thus find that $\phi_+$ may be expressed as
\begin{equation}\label{eq:phi-with-ups}
    \phi_+ = \chi\circ\psi\circ\ups = (\id_A\tns\psi_{BC})\circ(\chi_{AB}\tns\id_C)\circ(\ups_A\tns\ups_B\tns\ups_C).
\end{equation}
This shows that $\ups_A\tns\ups_B\tns\ups_C$ could just as well be incorporated into $\psi_{BC}$ and $\chi_{AB}$, so we assume that they are identities.

Slightly more involved arguments would show that \eqref{eq:phi-with-ups} also holds when Bob's probe consists of any finite number of systems $\g_{B_i}$, with combined Hilbert space $\H_B := \tns_i\H_{B_i}$. Summarising, we have shown the following.
\begin{theorem}\label{thm:scattering-morphism-hybrid}
    Assume all relevant theories are instances of the hybrid model and suppose a system theory is described by worldlines $\g_A$ with Hilbert space $\H_A$ and $\g_C$ with Hilbert space $\H_C$ which intersect regions $O_A$ and $O_C$, respectively, where $O_A\spacelike O_C$.
    \begin{enumerate}[\normalfont (i)]
        \item If $O_A\leq O_B\leq O_C$ and $O_A\spacelike O_C$, then any FV measurement whose probe theory is described by a Hilbert space $\H_B$ and whose coupling region is contained in $O_B$ has a scattering morphism $\Theta : \B(\H_{ABC})\to\B(\H_{ABC})$ which can be written as
        \begin{equation}\label{eq:scattering-morphism-hybrid}
            \Theta = (\id_A\tns\psi_{BC})\circ(\chi_{AB}\tns\id_C),
        \end{equation}
        where $\psi_{BC}$ and $\chi_{AB}$ are *-automorphisms of $\B(\H_{BC})$ and $\B(\H_{AB})$, respectively.
        
        \item Conversely, any $\Theta$ of the form \eqref{eq:scattering-morphism-hybrid} can be implemented by an FV measurement. In general (namely if $\psi_{BC}$ and $\chi_{AB}$ cannot be decomposed as tensor products of *-automorphisms on the individual spaces $\B(\H_{A,B,C})$), the coupling region $K$ of this FV measurement needs to contain a causal curve that intersects both $\g_C$ and $\g_A$, and therefore might not admit a causal order $O_A\leq K\leq O_C$ (cf.\ footnote~\ref{fn:singularity}).
    \end{enumerate}
\end{theorem}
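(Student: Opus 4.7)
For (i), I first reduce to the case $\phi_- = \id_\Uf$, as justified at the start of Section~\ref{sec:fv-hybrid-model}, so that $\Theta = \phi_+$ is itself a *-automorphism of $\B(\H_{ABC})$. The key geometric observation is that since $O_A \spacelike O_C$, we have $J^+(O_A) \cap J^-(O_C) = \emptyset$; hence because $\g_B$ is a future-directed causal curve, it cannot intersect $\g_A$ in $J^+(O_A)$ and then later reach $\g_C$ in $J^-(O_C)$. Consequently, if $\g_B$ meets both worldlines at all, the intersection with $\g_C$, at parameter $t_C$ say, must occur strictly before the intersection with $\g_A$ at $t_A$. I treat the case of exactly one intersection with each for notational simplicity; the general case is analogous.

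I then invoke the Diamond Axiom together with the covering argument of Lemma~\ref{lemma:covering} (exactly as in Section~\ref{sec:fv-hybrid-model}) to conclude that along every segment of $\g_A$, $\g_B$, or $\g_C$ not meeting another worldline, the associated subalgebra in the coupled theory is constant. Therefore $\phi_+$ decomposes as the composition of *-automorphisms representing isolated unitary evolution on each worldline together with two interactions $\psi$ at $\g_B(t_C)$ and $\chi$ at $\g_B(t_A)$. Since $\g_B(t_C)$ does not lie on $\g_A$, the Diamond Axiom forces $\psi$ to leave $\B(\H_A) \tns \one_{BC}$ invariant, so $\psi = \id_A \tns \psi_{BC}$ for some *-automorphism $\psi_{BC}$ of $\B(\H_{BC})$. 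Similarly, $\chi$ at $\g_B(t_A)$ must fix the $\g_C$-algebra as it then stands, namely $\psi(\one_{AB} \tns \B(\H_C))$, forcing $\chi = \psi \circ (\chi_{AB} \tns \id_C) \circ \psi^{-1}$ for some *-automorphism $\chi_{AB}$ of $\B(\H_{AB})$. Composing and absorbing the isolated unitary evolutions into $\psi_{BC}$ and $\chi_{AB}$ yields $\phi_+ = (\id_A \tns \psi_{BC}) \circ (\chi_{AB} \tns \id_C)$, as claimed.

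For (ii), given $\psi_{BC}$ and $\chi_{AB}$, I construct an FV measurement realising the stated $\Theta$. By the geometric assumption, I pick a smooth future-directed inextendible causal worldline $\g_B$ passing through a point on $\g_C$ in $J^-(O_C)$ at some parameter $t_C$ and, later, through a point on $\g_A$ in $J^+(O_A)$ at $t_A > t_C$, and take the probe theory $\Bf$ to be the hybrid model with this single worldline and Hilbert space $\H_B$. The coupling region $K$ is any compact causally convex set containing both intersection points. The coupled theory $\Cf$ is defined exactly as depicted in Figure~\ref{fig:the-pretty-one}: for any region $L$, $\Cf(L)$ is the minimal subalgebra of $\B(\H_{ABC})$ containing, for each worldline segment $L$ intersects, the segmentwise subalgebra obtained by propagating $\B(\H_A) \tns \one_{BC}$ along $\g_A$, $\one_{AB} \tns \B(\H_C)$ along $\g_C$ up to $t_C$ and $(\id_A \tns \psi_{BC})(\one_{AB} \tns \B(\H_C))$ afterwards, and $\one_{AC} \tns \B(\H_B)$ along $\g_B$ up to $t_C$, $(\id_A \tns \psi_{BC})(\one_{AC} \tns \B(\H_B))$ between $t_C$ and $t_A$, and its image under the full $\Theta$ after $t_A$. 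Setting $\phi_- = \id_\Uf$ and defining $\phi_+ : \Uf(M_K^+) \to \Cf(M_K^+)$ as the localisation-preserving *-isomorphism determined by these assignments, the AQFT axioms can be verified as in Section~\ref{sec:prelim-aqft-hybrid-model}, and by construction $\Theta = \phi_+$ has the required form.

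The main obstacle, flagged in the theorem itself, is geometric: unless $\psi_{BC}$ and $\chi_{AB}$ happen to factor as tensor products over the individual Hilbert spaces---so that the two interactions can be spatially separated---the probe worldline $\g_B$ must connect $\g_C$ to $\g_A$ across the Sorkin scenario, and any compact causally convex $K$ accommodating this need not lie in a region $O_B$ admitting a causal order $O_A \leq O_B \leq O_C$. This does not obstruct the existence of an FV measurement realising $\Theta$, but it does affect the interpretation in terms of a single observer Bob sandwiched between Alice and Charlie.
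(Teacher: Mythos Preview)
Your proposal is correct and follows essentially the same route as the paper: the argument leading up to the theorem in Section~\ref{sec:bfr-hybrid-model} also reduces to $\phi_- = \id_\Uf$, uses the geometric fact $J^+(O_A)\cap J^-(O_C)=\emptyset$ to force $t_C<t_A$, applies the Diamond Axiom via Lemma~\ref{lemma:covering} to get constancy along non-intersecting segments, extracts $\psi=\id_A\tns\psi_{BC}$ and $\chi=\psi\circ(\chi_{AB}\tns\id_C)\circ\psi^{-1}$, and absorbs the residual single-worldline automorphisms $\ups$ into $\psi_{BC},\chi_{AB}$; part~(ii) is likewise handled by the explicit segmentwise construction of $\Cf$ depicted in Figure~\ref{fig:the-pretty-one}. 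The only point to tighten is the inference ``$\psi$ leaves $\B(\H_A)\tns\one_{BC}$ invariant, so $\psi=\id_A\tns\psi_{BC}$'': invariance alone gives $\psi=\psi_A\tns\psi_{BC}$ (as the paper notes in footnote~\ref{fn:assume-identity}), and the $\psi_A$ factor is what you then absorb into the isolated evolutions---your later sentence about absorbing does cover this, but the ``so'' is too quick as stated.
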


Note that Eq.~\eqref{eq:scattering-morphism-hybrid} agrees with the labels on the segments of $\g_{A,B,C}$ at late times in Figure~\ref{fig:the-pretty-one} (where we use $\Theta = \phi_-$): for $\g_C$, for example, we have
\begin{align}
    \Cf(O_C) = \phi_+(\Uf(O_C)) &= (\id_A\tns\psi_{BC})\circ(\chi_{AB}\tns\id_C)(\one_{AB}\tns\B(\H_C)) \\
                                &= (\id_A\tns\psi_{BC})(\one_{AB}\tns\B(\H_C)) = \psi(\one_{AB}\tns\B(\H_C)). \label{eq:phi+-uf-oc}
\end{align}
This equation and Figure~\ref{fig:the-pretty-one} show very intuitively why Alice cannot signal to Charlie: Bob's probe only interacts with Alice's system after it has interacted with Charlie's.

Let us see how exactly this fits into the proof of Theorem~\ref{thm:bfr-thm2} (that a \emph{general} FV measurement shows good causal behaviour). Recall that the essential property of FV measurements that makes that proof work is that the scattering morphism satisfies Proposition~\ref{prop:FV-local-2}---in particular, $\Theta(\Uf(O_C)) \subseteq \Uf(L^-)$ whenever $L^-\subseteq M_K^-$ and $O_C\subseteq D(L^-)$. Figure~\ref{fig:the-pretty-one} again shows us intuitively why this is so in the case of the hybrid model: if $O_C\subseteq D(L^-)$ then $D(L^-)$ must contain the intersection point $\g_B(t_C)$, and hence $L^-$ must intersect both $\g_B$ and $\g_C$ by definition of the domain of dependence $D(L^-)$. Therefore $\Uf(L^-)$ contains at least $\one_A\tns\B(\H_{BC})$, which in turn contains $\Theta(\Uf(O_C)) = (\id_A\tns\psi_{BC})(\one_{AB}\tns\B(\H_C))$ (Eq.~\eqref{eq:phi+-uf-oc}).

On the other hand, although a general operation can always be written in the form
\begin{equation}
    \om_{AC}'(a) = (\om_{AC}\tns\s_B)(u (a\tns\one_B) u^*)  \text{\quad for } a\in\B(\H_{AC})
\end{equation}
by Stinespring's dilation theorem, the *-automorphism $a\mapsto uau^*$ of $\B(\H_{AC})$ is not necessarily of the form \eqref{eq:scattering-morphism-hybrid}, and hence does not necessarily satisfy Proposition~\ref{prop:FV-local-2}. Any operation that does enable superluminal signalling, such as the incomplete Bell measurement of Eq.~\eqref{eq:bob-op}, will inevitably need to mix Alice's system into Charlie's, so that $u (\one_{AB}\tns\B(\H_C)) u^* \not\subseteq \one_A\tns\B(\H_{BC})$ and Proposition~\ref{prop:FV-local-2} is not satisfied.

\

Apart from providing intuition as to why FV-induced operations show good causal behaviour, the hybrid model allows us to compare the FV framework with the purely Hilbert space approach of the previously discussed paper by \textcite{BGNP01}. A definition of central importance to that paper, stated in terms of operations on density operators, is the following \cite[Eq.~(19)]{BGNP01}:
\begin{definition}\label{def:semilocalisable}
    A non-selective operation $\L_{AC} : \D(\H_{AC})\to\D(\H_{AC})$ is \defn{semilocalisable} (with communication from $\H_C$ to $\H_A$) iff there is an auxiliary system $\H_B$ and a state $\rho_B\in\D(\H_B)$ such that
    \begin{equation}\label{eq:semilocalisable}
        \L_{AC}(\rho_{AC}) = \Tr_B\left( (\L_{AB}\tns\id_C)\circ(\id_A\tns\L_{BC}) (\rho_{AC}\tns\rho_B) \right) \text{\quad for all } \rho_{AC}\in\D(\H_{AC}).
    \end{equation}
\end{definition}
In words, $\L_{AC}$ is an operation that can be implemented by local operations and quantum communication (of $\H_B$) from Charlie to Alice---cf.\ Definition~\ref{def:localisable} of \emph{localisable} operations.

Without going into much detail, we note that any quantum operation $\L_{AC}:\D(\H_{AC})\to\D(H_{AC})$ is semilocalisable with communication from $\H_C$ to $\H_A$ if and only if it is no-signalling from $\H_A$ to $\H_C$, in the sense of Definition~\ref{def:semicausal}. This was proven in a special case by \textcite{BGNP01} and in full generality by \textcite{ESW02} (in their terminology, an operation is \emph{semicausal} iff it is \emph{semilocalisable}).

Using the equivalences discussed in Section~\ref{sec:prelim-qm-hilbert-space}, Definition~\ref{def:semilocalisable} can be carried over to the algebraic context by taking the Hilbert-Schmidt adjoints of the involved operations. By Theorem~\ref{thm:scattering-morphism-hybrid}, we then find the following. Note that for any precompact and open $O_{A,B,C}$, there is a causal order $O_A\leq O_B\leq O_C$ if and only if $O_B\subseteq {M_{O_A}^+ \cap M_{O_C}^-}$.

\begin{theorem}\label{thm:hybrid}
    Let $O_{A,C}$, $\H_{A,C}$ and $\g_{A,C}$ be as in Theorem~\ref{thm:scattering-morphism-hybrid}; in addition, assume that ${M_{O_A}^+ \cap M_{O_C}^-}$ contains a causal curve that intersects $\g_A$ and $\g_C$.\footnote{\label{fn:singularity}If we assume $\g_{A,C}$ to be inextendible, this seems to be the case in Minkowski spacetime---though I am not completely sure about dimensions higher than 1+1. In a curved spacetime this is not always true, even if $\g_{A,C}$ are inextendible (\eg\ if $M$ is singular, $\g_A$ could be a future-incomplete geodesic \cite[fn.~28]{Lan21} terminating in $O_A$, in which case ${M_{O_A}^+\cap M_{O_C}^-}$ does not even intersect $\g_A$).} % See p43 of Boox notes for whether such a $K$ does always exist - it might in Minkowski spacetime and if we assume \g_{A,C} inextendible?
    Then for any non-selective operation $\L_{AC}:\D(\H_{AC})\to\D(\H_{AC})$, the following are equivalent:
    \begin{enumerate}[{\normalfont(i)}]
        \item $\L_{AC}$ is FV-induced in $M_{O_A}^+ \cap M_{O_C}^-$, where all involved theories are instances of the hybrid model;
        \item $\L_{AC}$ is semilocalisable with communication from $\H_C$ to $\H_A$.
        \item $\L_{AC}$ is no-signalling from $\H_A$ to $\H_C$;
    \end{enumerate}
\end{theorem}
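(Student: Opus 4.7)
The plan is to prove (i)~$\iff$~(ii) directly, using Theorem~\ref{thm:scattering-morphism-hybrid}, and to invoke the Hilbert-space result of \cite{ESW02} (generalising \cite{BGNP01}, and already cited immediately before the theorem statement) for (ii)~$\iff$~(iii).

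For \emph{(i)~$\implies$~(ii)}: if $\L_{AC}$ is FV-induced in $O_B := M_{O_A}^+\cap M_{O_C}^-$, the underlying FV measurement has coupling region $K\subseteq O_B$, which admits the causal order $O_A\leq K\leq O_C$. Theorem~\ref{thm:scattering-morphism-hybrid}(i) then forces the scattering morphism to take the form $\Theta = (\id_A\tns\psi_{BC})\circ(\chi_{AB}\tns\id_C)$ with $\psi_{BC}$ and $\chi_{AB}$ *-automorphisms, hence (by Wigner) unitary conjugations. Translating the non-selective state-update rule~\eqref{eq:state-update-ns} from the Heisenberg into the Schrödinger picture via Proposition~\ref{prop:dh-sbh} and item~(ii) of Section~\ref{sec:prelim-qm-hilbert-space}, we obtain
\begin{equation*}
    \L_{AC}(\rho_{AC}) = \Tr_B\bigl[(\L_{AB}\tns\id_C)\circ(\id_A\tns\L_{BC})(\rho_{AC}\tns\rho_B)\bigr],
\end{equation*}
where $\L_{AB}$ and $\L_{BC}$ are the unitary Schrödinger-picture duals of $\chi_{AB}$ and $\psi_{BC}$; this is Definition~\ref{def:semilocalisable}.

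For \emph{(ii)~$\implies$~(i)}: starting from a semilocalisable decomposition of $\L_{AC}$ through some $(\H_B,\rho_B,\L_{AB},\L_{BC})$, Stinespring's dilation theorem supplies auxiliary Hilbert spaces $\H_{B'}, \H_{B''}$ with fixed initial states and unitaries dilating $\L_{BC}$ and $\L_{AB}$. Absorbing these auxiliaries into a larger probe $\H_B^{\mathrm{new}} := \H_B\tns\H_{B'}\tns\H_{B''}$ (initialised in the tensor product of $\rho_B$ with the Stinespring states) and extending each dilation trivially on the factor it does not touch, we rewrite the semilocalisable decomposition with $\L_{BC}$ and $\L_{AB}$ now unitary conjugations. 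Taking Heisenberg duals gives *-automorphisms $\psi_{BC}$ and $\chi_{AB}$, so that $\Theta := (\id_A\tns\psi_{BC})\circ(\chi_{AB}\tns\id_C)$ is of the form permitted by Theorem~\ref{thm:scattering-morphism-hybrid}(ii). The geometric hypothesis---that $M_{O_A}^+\cap M_{O_C}^-$ contains a causal curve meeting both $\g_A$ and $\g_C$---then supplies a compact $K\subseteq M_{O_A}^+\cap M_{O_C}^-$ through which to route the probe worldline $\g_B$, and Theorem~\ref{thm:scattering-morphism-hybrid}(ii) realises $\Theta$ as the scattering morphism of an FV measurement with coupling region $K$; dualising back yields the original $\L_{AC}$.

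The main obstacle will be the bookkeeping of tensor factors in the Stinespring step, and in particular verifying that the two compositional orders align: since $J^+(\bar O_A)\cap J^-(\bar O_C)=\emptyset$ when $O_A\spacelike O_C$, the causal probe worldline $\g_B$ must meet $\g_C$ strictly before $\g_A$, and this is exactly what matches the ``$\chi_{AB}$ first, then $\psi_{BC}$'' Heisenberg composition of $\Theta$ to the ``$\id_A\tns\L_{BC}$ first, then $\L_{AB}\tns\id_C$'' Schrödinger composition of Definition~\ref{def:semilocalisable}.
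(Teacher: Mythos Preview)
Your proposal is correct and follows essentially the same approach as the paper: both directions of (i)~$\iff$~(ii) are obtained from Theorem~\ref{thm:scattering-morphism-hybrid} together with Stinespring dilation to reduce the semilocalisable decomposition to unitary local pieces, and (ii)~$\iff$~(iii) is delegated to \cite{ESW02}. Your version spells out the Stinespring bookkeeping (the auxiliary spaces $\H_{B'},\H_{B''}$) somewhat more explicitly than the paper does, but the underlying argument is identical.
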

\begin{proof}
    (ii) $\implies$ (i): By using Stinespring's dilation theorem \eqref{eq:stinespring} and expanding the auxiliary system, we see that the definition of semilocalisability does not change if we assume $\L_{AB}$ and $\L_{BC}$ to be given by unitaries ($\rho\mapsto u^*\rho u$) on their respective systems, so that their Hilbert-Schmidt adjoints are also unitarily implementable ($a\mapsto uau^*$), and thus are *-automorphisms of $\B(\H_{AB})$ and $\B(\H_{BC})$, respectively. In the algebraic formulation (see Proposition~\ref{prop:dh-sbh} and succeeding discussion), Eq.~\eqref{eq:semilocalisable} corresponds to
    \begin{equation}\label{eq:semilocalisable-algebraic}
        \widehat{\L_{AC}^\dagger}(\om_{AC})(a) = (\om_{AC}\tns\s_B)\left((\id_A\tns\L_{BC}^\dagger)\circ (\L_{AB}^\dagger\tns\id_C) \left(a\tns\one_B\right)\right)
    \end{equation}
    for some $\s_B\in\S(\B(\H_B))$, all $a\in\B(\H_{AC})$ and all $\om_{AC}\in\S(\B(\H_{AC}))$.
    This is, of course, precisely the state-update rule \eqref{eq:state-update-ns} of a non-selective FV measurement with probe theory $\B(\H_B)$ and a scattering morphism of the form of Eq.~\eqref{eq:scattering-morphism-hybrid}. Hence, by Theorem~\ref{thm:scattering-morphism-hybrid}(ii), any $\L_{AC}$ that is semilocalisable with communication from $\H_C$ to $\H_A$ is FV-induced in ${M_{O_A}^+\cap M_{O_C}^-}$ in the hybrid model.
    
    (i) $\implies$ (ii): Conversely, suppose we have an FV measurement with probe theory $\B(\H_B)$ and coupling region in $M_{O_A}^+ \cap M_{O_C}^-$; then by Theorem~\ref{thm:scattering-morphism-hybrid}(i), its scattering morphism is of the form \eqref{eq:scattering-morphism-hybrid}. This implies that the non-selective state update $\mathcal I_{\s_B,\one_B}$ is of the form \eqref{eq:semilocalisable-algebraic} with $\L_{BC} := \psi_{BC}^\dagger$ and $\L_{AB} := \chi_{AB}^\dagger$, which are operations since $\psi_{BC}$ and $\chi_{AB}$ (from Eq.~\eqref{eq:scattering-morphism-hybrid}) are *-automorphisms. Therefore, $\mathcal I_{\s_B,\one_B}$ is semilocalisable with communication from $\H_C$ to $\H_A$.
    
    (ii) $\iff$ (iii): This follows from the result of \textcite{ESW02}.
\end{proof}

Note that `with communication from $\H_C$ to $\H_A$' refers to the way that Bob implements $\L_{AC}$, namely by letting $\g_B$ first intersect $\g_C$ and then $\g_A$---all within $O_B$. It does not mean that there can be signalling from Charlie to Alice, who are assumed to only be able to control their systems in the regions $O_C$ and $O_A$ (see Figure~\ref{fig:the-pretty-one}).

The equivalence between (i) and (iii) is perhaps the most interesting: it shows that in the case of the hybrid model and under mild assumptions on $O_{A,C}$, the limitations made on the set of quantum operations by restricting to the FV scheme are precisely the limitations imposed by the principle of no superluminal signalling. We leave it as an open question whether this is the case for general models of AQFT.
    \section{Conclusion}\label{sec:conclusion}
In this essay we have discussed one aspect that limits the class of quantum operations which can be physically realised in principle, namely the constraint of no superluminal signalling arising from the relativistic structure of spacetime. We have mainly worked in the framework of algebraic quantum field theory, focussing on the set of (non-selectively) FV-induced operations as a candidate for this class. We have seen that in general any FV-induced operation at least shows good causal behaviour (Corollary~\ref{cor:fv-is-causal}), and have shown a partial converse to this in the hybrid model (Theorem~\ref{thm:hybrid}).

We have also briefly considered an approach from the Hilbert space formalism, arriving at the conclusion that in general, it remains ambiguous on the question of (im)possibility of operations, because this partly depends on the specific physical realisation of the Hilbert space itself.

The question now arises whether the FV-induced operations precisely form the class of in principle physically realisable operations, when restricting to non-selective operations. At first sight, because of the generality of the FV scheme, it would seem that every physically realisable operation should be FV-induced; but one can never be sure that new physics (possibly beyond AQFT's fixed spacetime background and involving gravity) will not lead to new types of operations \cite{BFR21}---which is why we went to the length of generalising the result of \textcite{BFR21} to Corollary~\ref{cor:fv-is-causal}.

If no superluminal signalling were the only constraint on the possibility of operations, then this Corollary would come close to proving, conversely, that all FV-induced operations are possible, for it says that a single action of performing an FV-induced operation does not enable causally disjoint parties to communicate (no matter what local operations these parties perform). It would remain to investigate situations in which the regions under consideration do not admit a causal order, and more complex scenarios in which apart from Alice, Bob and Charlie, additional parties perform physically realisable operations that are not FV-induced.

Relativistic causality provides, however, not the only limit on quantum operations. Another possibility was introduced by \textcite{Nie97}, who proved (in the Hilbert space setting, but this could easily be carried over to AQFT using \eg\ the hybrid model) that being able to measure certain observables, or to implement certain unitaries, allows one to solve the halting problem. This would contradict the physical Church-Turing thesis, \ie\ the assertion that the functions $\N\to\N$ computable by using physical processes are precisely those that are computable by a Turing machine.\footnote{\textcite{Nie97} calls this the Church-Turing thesis, which, however, often refers to the assertion that the Turing computable functions are precisely those that are computable by \emph{pen-and-paper} methods \cite{sep-church-turing}.} Although this is a heuristic statement which can (in all probability) not be verified with certainty \cite[p.\,3]{Nie97}, its violation would have profound consequences on the foundations of theoretical computer science.
Hence, Nielsen's measurements and unitaries might well be impossible, probably leading to restrictions on unitary system-probe interactions in the FV framework.

Another possible restriction on quantum operations, in particular those corresponding to measurements, is implied by a result of \textcite{Wig52} and \textcite{AY60} also known as the WAY theorem, which identifies limits on the observables one can measure (\eg\ a particle's position) when a non-commuting quantity is subject to a conservation law (\eg\ the combined momentum of the particle and measurement apparatus \cite[p.\,421]{Per02}).

Many other open questions remain. One, which is relevant to the FV scheme regardless of the context of superluminal signalling, is whether all observables of a system can be measured by an FV measurement.
Note that Sorkin's observation that the state update \eqref{eq:bob-op} associated to an ideal measurement of the observable $\projk{\phi^+}$ enables superluminal signalling does not necessarily mean that that observable cannot be measured. A complete projective Bell basis measurement, for example, is no-signalling \cite[p.\,6]{BGNP01} and its measurement result also provides an answer to the question posed by the effect $\projk{\phi^+}$.

It would also be interesting to find out under what assumptions a converse of Corollary~\ref{cor:fv-is-causal} holds, \ie\ that all causally well-behaved operations in AQFT are implementable in the FV framework. By using results from Hilbert space QM \cite{ESW02}, we have proven that this holds in the hybrid model (Theorem~\ref{thm:hybrid}) under a geometrical condition on the region $M_{O_A}^+\cap M_{O_C}^-$. It should not be too difficult to find out in greater detail when this condition is satisfied (see footnote~\ref{fn:singularity}).

Moreover, as we just mentioned, it remains to investigate what happens when the set of regions within which operations are performed does not admit a causal order. Sometimes this means that a region is disconnected or overlaps with another region, which are not very interesting scenarios from a signalling point of view. (The connected components of a causally convex region are necessarily causally disjoint, so it makes most sense to consider one observer to be associated with a connected region; and it is not very interesting to consider signalling between observers that coincide in space and time.) But in more than three spacetime dimensions there are also pairs of connected, disjoint causally convex regions which do not admit a causal order. % https://www.desmos.com/calculator/j1vsd84vcg
When considering those types of regions, the discussion of multiple measurements (Sections~\ref{sec:fv-multiple} and \ref{sec:bfr} and \cite{BFR21}) would have to be revisited, and new results about locality of the scattering morphism would be necessary, which also include those regions (as opposed to Eq.~\eqref{eq:FV-local-1} and Proposition~\ref{prop:FV-local-2}). For this purpose, one would probably need to extend our definition of the response maps \eqref{eq:response-maps-existence}--\eqref{eq:response-maps-agree} (arriving at Fewster and Verch's original definition, which in our notation is equivalent to Eq.~\eqref{eq:restriction-property}).

Finally, as we have mentioned on several occasions along the way (\eg\ p.\,\pageref{phily-discussion}), the current treatment does not arrive at a solution to the measurement problem. It might well be that we can only come to a truly satisfactory account of the class of physically realisable operations once we have a theory that incorporates observers as parts of the physical system. In the meantime, there is only one way to be certain about the possibility of operations\ldots\ experiment!

    \phantomsection
    \addcontentsline{toc}{section}{Acknowledgements}
    \section*{Acknowledgements}
    This essay was written at Wolfson College in the University of Cambridge from March to April 2021 as part of the examination for Part III of the Mathematical Tripos. My greatest thanks go to Jeremy Butterfield, for his devoted supervision and inspiring enthusiasm, and the detailed feedback he provided on earlier versions of this work.

    \appendix
    
    \section{Simplification of Fewster and Verch's notation}\label{app:semicolons}
The exposition of the FV measurement scheme presented in \textcite{FV20} and \textcite{Few20} uses a slightly different, yet more complicated notation than ours, as introduced in Section~\ref{sec:prelim-aqft}. In this section I argue why I think that at least for the present purposes, their notation is not more general, and I prove that when adopting our notation, our set of assumptions on the response maps \eqref{eq:response-maps-existence}--\eqref{eq:response-maps-agree} is as least as general as Fewster and Verch's (in my eyes more complicated) corresponding definition \cite[Eq.~(3.2)]{FV20}.

Fewster and Verch's notation arises when one takes a categorical approach to AQFT (as introduced in \eg\ \textcite{FV15}), where the assignment $O\mapsto \Af(O)$ is expressed by letting $\Af$ be a functor between (roughly) the category of globally hyperbolic spacetimes and the category of unital *-algebras. This means that $O$ is viewed not just as a region in $M$, but as a globally hyperbolic spacetime \emph{an sich}, with metric and time orientation inherited from $M$ (note that when $M$ is globally hyperbolic and $O\subseteq M$ is a region, \ie\ an open and causally convex subset of $M$, then $O$ itself is globally hyperbolic \cite{FV15}).
Because in this case one considers the general category of unital *-algebras instead of restricting attention to subalgebras of one fixed global algebra $\Af(M)$, it is necessary to distinguish between $\Af(O)$ (the algebra assigned to $O$ as a spacetime \textit{an sich}) and the actual subalgebra of $\Af(M)$ describing the observables of $M$ which can be measured by an observer confined to $O$. The latter is notated as $\Af(M;O)\subseteq\Af(M)$. The algebras are assumed equivalent, however, through the existence of *-isomorphisms $\a_{M;O} : \Af(O) \to \Af(M;O)$ for each causally convex region $O$ of any globally hyperbolic spacetime $M$, referred to as \defn{compatibility maps}. These maps are assumed to be consistent amongst each other in the sense that whenever $O_1\supseteq O_2\supseteq O_3$, they obey the
\begin{flalign}\label{eq:compy-axiom}
    \text{\textbf{compatibility axiom}:} && \a_{O_1;O_2} \circ \a_{O_2;O_3} = \a_{O_1;O_3}. &&
\end{flalign}

One of the merits of this notation is that it captures the idea that for any causally convex region $O$ of a globally hyperbolic spacetime $M$, the algebra $\Af(O)$ is completely independent of the world outside $O$, because $\Af(O)$ does not have to be a subalgebra of $\Af(M)$. Similarly, the spacetime $M$ might actually be part of a larger spacetime $N$, such that $\Af(M)$ is isomorphic to a subalgebra of $\Af(N)$.

However, assuming all AQFT considerations to be done within a fixed (though still general!) `global' spacetime $M$ does not seem to be a very limiting assumption. If we do that, we can simply redefine the algebras $\Af(O)$ to be subalgebras of $\Af(M)$ by composing the functor $\Af$ with the compatibility maps. Since the compatibility maps are isomorphisms, this does not change the physics in the local regions.
In this way, therefore, one arrives back at the notation as we introduced it in Section~\ref{sec:prelim-aqft}, and which is widely used in other places in literature \cite{FR20, Hal07}.

\

Let us now find out what the definitions that go into the FV measurement scheme as originally set out in \cite{FV20}, in particular those corresponding to the equivalence between the coupled and uncoupled theories outside the coupling region $K$, translate to in our notation.
In Fewster and Verch's notation, let us write $\u_{L,L'}$ for the compatibility maps of the uncoupled theory $\Uf$ and $\g_{L,L'}$ for those of the coupled theory $\Cf$. Instead of our definitions \eqref{eq:response-maps-existence}--\eqref{eq:response-maps-agree}, Fewster and Verch \cite[§3.1]{FV20} assume the existence of isomorphisms
\begin{equation}
    \chi_L : \Uf(L) \to \Cf(L)
\end{equation}
indexed by regions $L \subseteq M\setminus\ch(K)$, such that for any two such regions $L'\subseteq L$, $\chi_L$ and $\chi_{L'}$ agree up to compatibility maps: \ie, the diagram 
\tikzset{
    verticaal/.style={anchor=south, rotate=90, inner sep=.5mm}
} % gebruikt om de \sim verticaal te maken bij verticale pijlen, zie diagram onder.
\begin{equation}\label{eq:chi-condition}
\begin{tikzcd}
    \Uf(L') \arrow{r}[name=u]{\u_{L;L'}} \dar["\chi_{L'}","\sim" verticaal]  & \Uf(L) \dar["\chi_L","\sim" verticaal] \\
    \Cf(L') \arrow{r}[name=g]{\g_{L;L'}}                       & \Cf(L)
    \arrow[to path={(g) node[midway, scale=1.2]{$\circlearrowright$} (u)}]
    % https://tex.stackexchange.com/a/484486/112562
\end{tikzcd}
\end{equation}
commutes \cite[Eq.~(3.2)]{FV20}.

We have the following.

\begin{proposition}\label{prop:semicolons-1}
    There exist isomorphisms $\chi_L:\Uf(L)\to\Cf(L)$ satisfying \eqref{eq:chi-condition} if and only if there exist isomorphisms $\phi_L:\Uf(M;L)\to\Cf(M;L)$ satisfying the restriction property 
    \begin{equation}\label{eq:restriction-property}
        \phi_{L'} = \phi_L\big|_{\Uf(L')}.
    \end{equation}
    Here, $L$ ranges over regions contained in $M\setminus\ch(K)$.
\end{proposition}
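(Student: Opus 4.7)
The plan is to exhibit an explicit bijection between collections $\{\chi_L\}$ satisfying \eqref{eq:chi-condition} and collections $\{\phi_L\}$ satisfying \eqref{eq:restriction-property}, simply by conjugating with the compatibility maps $\u_{M,L}:\Uf(L)\to\Uf(M;L)$ and $\g_{M,L}:\Cf(L)\to\Cf(M;L)$, which are isomorphisms by construction.

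For the forward direction, given isomorphisms $\chi_L:\Uf(L)\to\Cf(L)$ satisfying \eqref{eq:chi-condition}, I would define
\begin{equation*}
    \phi_L := \g_{M,L}\circ\chi_L\circ\u_{M,L}^{-1} : \Uf(M;L)\to\Cf(M;L),
\end{equation*}
which is an isomorphism as a composition of three isomorphisms. To verify the restriction property, fix $L'\subseteq L$ and $x\in\Uf(M;L')$, and use the compatibility axiom $\u_{M,L'}=\u_{M,L}\circ\u_{L,L'}$ to write $x=\u_{M,L}(\u_{L,L'}(y))$ for some $y\in\Uf(L')$. Then $\phi_L(x)=\g_{M,L}\circ\chi_L\circ\u_{L,L'}(y)$, which equals $\g_{M,L}\circ\g_{L,L'}\circ\chi_{L'}(y)$ by the commuting square \eqref{eq:chi-condition}, and this in turn equals $\g_{M,L'}\circ\chi_{L'}(y)=\phi_{L'}(x)$ by the compatibility axiom for $\g$ and the definition of $\phi_{L'}$.

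For the reverse direction, given $\phi_L:\Uf(M;L)\to\Cf(M;L)$ satisfying \eqref{eq:restriction-property}, I would define
\begin{equation*}
    \chi_L := \g_{M,L}^{-1}\circ\phi_L\circ\u_{M,L} : \Uf(L)\to\Cf(L).
\end{equation*}
To verify the commuting square \eqref{eq:chi-condition}, I would compute $\chi_L\circ\u_{L,L'}=\g_{M,L}^{-1}\circ\phi_L\circ\u_{M,L'}$ using the compatibility axiom for $\u$, and then observe that because $\phi_L$ restricts to $\phi_{L'}$ on $\Uf(M;L')$, its image on $\u_{M,L'}(\Uf(L'))$ lies in $\Cf(M;L')$, where $\g_{M,L}^{-1}$ agrees with $\g_{L,L'}\circ\g_{M,L'}^{-1}$ by the compatibility axiom for $\g$; this lets me rewrite the expression as $\g_{L,L'}\circ\chi_{L'}$.

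The two constructions are manifestly inverse to each other since the conjugations by $\u_{M,L}^{\pm 1}$ and $\g_{M,L}^{\pm 1}$ cancel. The only real content is the compatibility axiom \eqref{eq:compy-axiom}, which ensures that the `global' subalgebras $\Uf(M;L')\subseteq\Uf(M;L)$ nest consistently; there is no genuine obstacle here beyond a careful diagram chase, and I expect the mildest difficulty to lie in checking that $\phi_L$ really maps \emph{onto} $\Cf(M;L)$ (rather than into some larger subalgebra), which follows from the fact that $\chi_L$ is assumed to be surjective onto $\Cf(L)$ and $\g_{M,L}$ is an isomorphism onto $\Cf(M;L)$ by definition of the latter.
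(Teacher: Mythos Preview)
Your proof is correct and follows essentially the same approach as the paper: both define $\phi_L$ and $\chi_L$ by conjugating with the compatibility maps $\u_{M;L}$ and $\g_{M;L}$, and both verify the relevant coherence condition using the compatibility axiom \eqref{eq:compy-axiom} together with the assumed property of the other family. The only difference is presentational---the paper packages the verification of the restriction property into a single large commutative diagram with five subcells, whereas you carry out the same chase elementwise---but the underlying argument is identical.
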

\begin{proof}
Assume that the $\chi_L$ isomorphisms exist. For each region $L\subseteq M\setminus\ch(K)$, define $\phi_L$ by composing $\chi_L$ with the appropriate compatibility maps: viz.\ $\phi_L$ is the unique morphism such that the diagram
\begin{equation}\label{eq:chi-phi-diagram}
\begin{tikzcd}
    \Uf(L) \arrow{r}[name=u]{\u_{M;L}}[below]{\sim} \dar["\chi_L","\sim" verticaal]  & \Uf(M;L) \dar[dashed,"\exists!\,\phi_L", "\sim" verticaal] \\
    \Cf(L) \arrow{r}[name=g]{\g_{M;L}}[below]{\sim}      & \Cf(M;L)
    \arrow[to path={(g) node[midway, scale=1.2]{$\circlearrowright$} (u)}]
\end{tikzcd}
\end{equation}
commutes.

Now let $L'\subseteq L$ be two such regions. Then proving the restriction property comes down to proving that the black subdiagram of
\begin{equation}
\begin{tikzcd}[gray]
\black{\Uf(M;L)}    \dar[black,"\phi_L","\sim" verticaal]    & \lar[""{name=11},"\u_{M;L}"',"\sim"]     \Uf(L)    \dar["\chi_L","\sim" verticaal]      & \lar[""{name=12},hook',"\u_{L;L'}"']     \Uf(L')    \dar["\chi_L'","\sim" verticaal]   \rar[""{name=13},"\u_{M;L'}","\sim"'] &   \black{\Uf(M;L')}   \dar[black,"\phi_L'","\sim" verticaal]    \ar[lll,bend right,hook',black]  \\
\black{\Cf(M;L)}    & \lar[""{name=21},"\g_{M;L}"',"\sim"]     \Cf(L)     & \lar[""{name=22},hook',"\g_{L;L'}"']     \Cf(L')     \rar[""{name=23},"\g_{M;L'}","\sim"'] &   \black{\Cf(M;L')}   \ar[lll,bend left,hook',black]
\arrow[draw=none,to path={(21)+(0,.7) node{$B$}}]
% or: \arrow[draw=none,to path={(21) -- (11) node[midway,...]{...}}] node[scale=1.2]{$\circlearrowright$}}]
\arrow[draw=none,to path={(22)+(0,.7) node{$C$}}]
\arrow[draw=none,to path={(23)+(0,.53) node{$D$}}]
\arrow[draw=none,to path={(12)+(0,.7) node{$A$}}]
\arrow[draw=none,to path={(22)+(0,-.5) node{$E$}}]
\end{tikzcd}
\end{equation}
commutes, where the unlabelled hooked arrows denote inclusion. The black subdiagram does indeed commute, since the five lettered subdiagrams commute: $A$ and $E$ do because of the compatibility axiom \eqref{eq:compy-axiom}, $B$ and $D$ because of definition \eqref{eq:chi-phi-diagram}, and $C$ because of assumption \eqref{eq:chi-condition}.

Conversely, if one assumes the existence of isomorphisms $\phi_L$ satisfying~\eqref{eq:restriction-property}, one can define $\chi_L$ by (again) requiring \eqref{eq:chi-phi-diagram} to commute. Then a similar argument to the above shows that \eqref{eq:chi-condition} is satisfied.
\end{proof}

This proposition translates Fewster and Verch's definition into our notation. At least for the purposes of this essay, however, it is sufficient to consider the isomorphisms $\phi_L$ where $L$ is contained in $M_K^+$, in $M_K^-$, or in both (which is indeed the case whenever one assumes the existence of a causal order on $L$ and the coupling region $K$). By the restriction property \eqref{eq:restriction-property}, all such $\phi_L$ are determined by the two maps $\phi_\pm := \phi_{M_K^\pm}$. One can check that Fewster and Verch's definition of the advanced and retarded response maps \cite[Eq.\,(3.4)]{FV20} simply translates to $\phi_{\pm}$ in our notation. Therefore, these maps are also the only ones needed to define the scattering morphism \eqref{eq:scattering-morphism}. This, together with the following result, motivates our definitions \eqref{eq:response-maps-existence}--\eqref{eq:response-maps-agree}.

\begin{corollary}\label{cor:simplification}
    If there exist isomorphisms $\chi_L:\Uf(L)\to\Cf(L)$ indexed by regions $L\subseteq M\setminus \ch(K)$ and satisfying \eqref{eq:chi-condition}, then there exist isomorphisms $\phi_\pm: \Uf(M;M_K^\pm)\to\Cf(M;M_K^\pm)$ that are localisation-preserving~\eqref{eq:response-maps-localisation-preserving} and which agree on $\Uf(M;M_K^+\cap M_K^-)$ \eqref{eq:response-maps-agree}.
\end{corollary}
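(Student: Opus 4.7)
\medskip

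\noindent\textbf{Proof plan.} The strategy is to apply Proposition~\ref{prop:semicolons-1} and then read off the three claimed properties of $\phi_\pm$ from the restriction property~\eqref{eq:restriction-property}, using only elementary facts about the regions $M_K^\pm$ and $K^\perp$.

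First I would verify that $M_K^\pm$ and $K^\perp$ are legitimate regions contained in $M\setminus\ch(K)$, so that Proposition~\ref{prop:semicolons-1} applies to them. Openness follows because $K$ is compact, hence $J^\mp(K)$ and $J^+(K)\cup J^-(K)$ are closed; causal convexity is Lemma~\ref{lemma:ccompl-is-cconvex}. The inclusion $M_K^\pm\subseteq M\setminus\ch(K)$ holds because $\ch(K)=J^+(K)\cap J^-(K)\subseteq J^\mp(K)$. A direct set-theoretic computation also gives the identity
\begin{equation}
    M_K^+\cap M_K^- \;=\; \big(M\setminus J^-(K)\big)\cap\big(M\setminus J^+(K)\big) \;=\; M\setminus\big(J^+(K)\cup J^-(K)\big) \;=\; K^\perp,
\end{equation}
which will deliver the agreement property for free.

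Next I would invoke Proposition~\ref{prop:semicolons-1} to obtain isomorphisms $\phi_L:\Uf(M;L)\to\Cf(M;L)$ for every region $L\subseteq M\setminus\ch(K)$, satisfying~\eqref{eq:restriction-property}. I then define $\phi_\pm := \phi_{M_K^\pm}$. Localisation preservation~\eqref{eq:response-maps-localisation-preserving} is then immediate: for any region $L\subseteq M_K^\pm$ (hence also in $M\setminus\ch(K)$), the restriction property gives $\phi_\pm\big|_{\Uf(M;L)} = \phi_L$, which by construction is an isomorphism $\Uf(M;L)\to\Cf(M;L)$. Finally, applying the restriction property \emph{to both} $\phi_+$ and $\phi_-$ using $K^\perp\subseteq M_K^\pm$ yields
\begin{equation}
    \phi_+\big|_{\Uf(M;K^\perp)} \;=\; \phi_{K^\perp} \;=\; \phi_-\big|_{\Uf(M;K^\perp)},
\end{equation}
which is exactly~\eqref{eq:response-maps-agree}.

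There is no real obstacle: once Proposition~\ref{prop:semicolons-1} is in hand, the corollary is a straightforward specialisation to the two distinguished regions $M_K^\pm$, together with the observation that their intersection is $K^\perp$. The only point requiring a line of care is checking that $M_K^\pm$ and $K^\perp$ are regions in the technical sense (open and causally convex), so that Proposition~\ref{prop:semicolons-1} can be invoked at all.
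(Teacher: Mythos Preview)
Your proposal is correct and follows essentially the same approach as the paper: invoke Proposition~\ref{prop:semicolons-1} to obtain the family $\{\phi_L\}$, set $\phi_\pm := \phi_{M_K^\pm}$, and read off both localisation-preservation and the agreement on $K^\perp$ directly from the restriction property~\eqref{eq:restriction-property}. Your write-up is more explicit than the paper's (which dispatches the corollary in one sentence), in that you spell out why $M_K^\pm$ and $K^\perp$ are regions contained in $M\setminus\ch(K)$ and compute $M_K^+\cap M_K^-=K^\perp$; these are precisely the routine verifications the paper leaves implicit.
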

\begin{proof}
    Given $\chi_L$, define $\phi_L$ as in Proposition~\ref{prop:semicolons-1} and set $\phi_\pm := \phi_{M_K^\pm}$. Then both required properties of $\phi_\pm$ follow directly from the restriction property~\eqref{eq:restriction-property}.
\end{proof}

Finally, note that the notation of the FV scheme can probably be simplified even further by redefining the coupled theory and the response maps so that we can assume that the retarded response map $\phi_-$ is the identity---as we have in fact done for our hybrid model (Section~\ref{sec:fv-hybrid-model})---so that the coupled and uncoupled theories are defined on the same global algebra and the scattering morphism $\Theta$ is simply $\phi_+$. I have chosen not to do this in the general case because it supports the physical argumentation behind the construction of the scattering morphism in the beginning of Section~\ref{sec:fewster-verch}.
    \section{Geometrical lemmas}\label{app:geometry}
\begin{lemma}[Cf.\ {\cite[Lemma~3]{BFR21}}]\label{lemma:separating-cauchy-surface}
    Let $K$ and $L$ be compact subsets of a globally hyperbolic spacetime $M$ such that $L \subseteq M^+_{K}$. Then there is a Cauchy surface $\Sigma$ of $M$ such that $\Sigma\subseteq M^+_K \cap M^-_L$.
\end{lemma}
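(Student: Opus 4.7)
The plan is to identify $M^+_K \cap M^-_L$ as the underlying set of a globally hyperbolic sub-spacetime of $M$, and then to observe that any Cauchy surface of this sub-spacetime is automatically a Cauchy surface of $M$ itself. Write $A := J^-(K)$, $B := J^+(L)$, and $U := M^+_K \cap M^-_L = M \setminus (A \cup B)$. Since $K$ and $L$ are compact in the globally hyperbolic spacetime $M$, both $A$ and $B$ are closed, so $U$ is open. By Lemma~\ref{lemma:ccompl-is-cconvex} both $M^+_K$ and $M^-_L$ are causally convex, hence so is their intersection $U$. Invoking the fact, quoted in Section~\ref{sec:prelim-lorentz}, that an open causally convex submanifold of a globally hyperbolic spacetime is itself globally hyperbolic, $U$ admits a Cauchy surface $\Sigma$ of $U$.

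The crucial claim is that this $\Sigma$ is also a Cauchy surface of $M$. To prepare the ground I would first observe that the hypothesis $L \subseteq M^+_K$ forces $A \cap B = \emptyset$: any $q \in A \cap B$ would allow one to concatenate a future-directed causal curve from some $p \in L$ to $q$ with one from $q$ to some $r \in K$, placing $p \in L \cap J^-(K)$, contradicting the hypothesis. Now let $\gamma : (a,b) \to M$ be any inextendible smooth timelike curve of $M$. A routine composition argument (as in the proof of Lemma~\ref{lemma:ccompl-is-cconvex}) shows that $\gamma^{-1}(A)$ is downward closed in $(a,b)$, and since $A$ is closed this preimage takes the form $(a, t_K]$ for some $t_K \in [a,b]$ (with the convention $t_K = a$ when the preimage is empty); symmetrically $\gamma^{-1}(B) = [t_L, b)$. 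Disjointness of $A$ and $B$ forces $t_K < t_L$ (else the single point $\gamma(t_K)$ would lie in $A \cap B$), so $\gamma^{-1}(U) = (t_K, t_L)$ is a non-empty open sub-interval.

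It remains to check that $\gamma|_{(t_K, t_L)}$ is inextendible \emph{as a curve in $U$}, for then by the Cauchy property of $\Sigma$ in $U$ it meets $\Sigma$ exactly once, and therefore so does the original $\gamma$. For the past end: if $t_K > a$ then by continuity $\gamma(t_K) \in \overline{A} = A$, so the would-be past endpoint lies outside $U$; and if $t_K = a$ then $\gamma$ has no past endpoint in $M$ at all, so neither does its restriction. The symmetric argument handles the future end. The main obstacle in this plan is really just conceptual clarity about when a Cauchy surface of an open sub-spacetime ascends to a Cauchy surface of the ambient spacetime; once one sees that disjointness of $A$ and $B$ makes $\gamma^{-1}(U)$ a connected interval whose boundary values are excluded from $U$, the rest is book-keeping with the definitions.
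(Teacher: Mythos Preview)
Your proof is correct and takes a somewhat different route from the paper's. The paper works asymmetrically: it takes $M_K^+$ as its globally hyperbolic sub-spacetime, invokes Geroch's result to obtain a Cauchy time function $t: M_K^+ \to \R$, and uses compactness of $L$ to pick a level set $\Sigma = t^{-1}(\{t'\})$ with $t'$ strictly below the minimum of $t$ on $L$, so that $\Sigma \subseteq M_L^-$ by construction. You instead work symmetrically with $U = M_K^+ \cap M_L^-$ from the outset and dispense with the time function altogether. Both proofs finish the same way, by checking that an inextendible timelike curve in $M$ restricts to an inextendible curve in the chosen sub-spacetime; your disjointness observation $J^-(K) \cap J^+(L) = \emptyset$ plays the role that compactness of $L$ and the time function play in the paper. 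Your version is arguably tidier, as it uses the hypothesis $L \subseteq M_K^+$ exactly once (for disjointness) and treats $K$ and $L$ on an equal footing.

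One small point you leave implicit: in asserting that $\g^{-1}(A)$ has the form $(a, t_K]$, you need that $\g$ is not \emph{entirely} contained in $J^-(K)$ (and dually for $J^+(L)$); otherwise $t_K = b$ and $(t_K, t_L)$ could be empty. This follows from the existence of a Cauchy temporal function $\tau$ on $M$: $\tau$ is bounded above on $J^-(K)$ by $\max_K \tau$, while $\tau \circ \g$ is unbounded above since $\g$ meets every level set. The paper's proof skates over the same point when it asserts that $\g$ ``crosses the boundary of $J^-(K)$ exactly once'', so this is not a defect of your argument relative to the original.
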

\begin{proof}
    Since $M_K^+$ is a causally convex region of $M$ by Lemma~\ref{lemma:ccompl-is-cconvex}, it is globally hyperbolic. By a result of \textcite{Ger70} (see \cite[Prop.~4]{BS03}), then, there exists a continuous, surjective function $t:M_K^+\to \R$ which has Cauchy surfaces of $M_K^+$ as its level sets and which is strictly increasing on all future-directed causal curves. Since $L$ is compact and $t$ continuous, $t(L)$ has a minimal element $t_L$. Take any $t' < t_L$ and set $\Sigma$ equal to the level surface $t^{-1}(\{t'\})$. Then $\Sigma\subseteq M_L^-$ and $\Sigma$ is a Cauchy surface for $M_K^+$. Then it is also a Cauchy surface for $M$: this follows from the observation that every inextendible timelike curve $\g:\R\to M$ restricts to an inextendible timelike curve $\g'$ in $M_K^+$. For any future endpoint of $\g'$ in $M_K^+$ is also a future endpoint for $\g$ in $M$, and similarly for past-endpoints if $\g$ does not intersect $J^-(K)$. If $\g$ does intersect $J^-(K)$, then, being timelike, it crosses the boundary of $J^-(K)$ exactly once, say at $\g(u)$, where $u\in(a,b)$. Noting that $M_K^+$ is closed because $K$ is compact \cite{FV12}, we have $\g' = \g\big|_{(u,b)}$. Now we see that $\g'(t) \to \g(u) \notin M_K^+$ as $t\to u$, so $\g'$ has no past endpoint. In conclusion, $\g'$ is inextendible in $M_K^+$, meaning that it, and hence $\g$, intersects $\Sigma$ in exactly one point.
    % For if $\g$ is contained within $M_K^+$, then any extension of $\g'$ within $M_K^+$ would be an extension of $\g$ within $M$: if it is not, then it intersects the boundary of $J^-(K)$ at exactly one point (since $\g$ is timelike and hence all points on the trajectory of $\g$ before it crosses the boundary are in $J^-(K)$), say $\g(u)$; then $\g' = \g\big|_{(u,\infty)}$ up to reparameterisation. Any continuous extension of $\g'$ to before $u$ then inevitably takes the value $\g(u)$ at $u$, leading one into $J^-(K)$ and hence out of $M_K^+$. So $\g'$ is inextendible in $M_K^+$, meaning that it intersects $\Sigma$ in exactly one point.
\end{proof}
% \g does indeed intersect the boundary at at least one point, namely at the value of \g at the supremum of all times such that \g is inside J^-(K).

\begin{lemma}\label{lemma:covering}
    Let $M$ be globally hyperbolic and $\g,\d:[a,b]\to M$ be two smooth causal curves with compact trajectories $\g,\d$ such that $\g\cap\d=\emptyset$. Then $\g$ admits a finite covering $\g \subseteq \bigcup_{i=1}^n D^+(O_i)$ in terms of future domains of dependence of causally convex, open regions $O_i$ which satisfy $D^+(O_i) \cap \d = \emptyset$ and $\g\cap O_{i+1} \subseteq \g\cap D^+(O_i)$.
\end{lemma}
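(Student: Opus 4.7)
The plan is to construct, for each $p \in \g$, a causally convex open neighborhood $U_p$ with $D^+(U_p) \cap \d = \emptyset$, and then extract a finite chain of such neighborhoods along $\g$ satisfying the required overlap property.

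\textbf{Step 1 (disjointness of $D^+$ from $\d$).} For each $p \in \g$, I claim there is a causally convex open neighborhood $U_p$ of $p$ with compact closure such that $D^+(U_p) \cap \d = \emptyset$. To produce it, take a nested sequence of compact neighborhoods $K_n \downarrow \{p\}$ sitting inside a globally hyperbolic (hence causally convex) open neighborhood of $p$; such neighborhoods form a basis in globally hyperbolic spacetimes. Using that $D^+(K)$ is closed for compact $K$ in a globally hyperbolic spacetime, and that $\bigcap_n D^+(K_n) = D^+(\{p\}) = \{p\}$ (if $q$ were in every $D^+(K_n)$, picking a past-inextendible causal curve $\lambda$ through $q$ and intersection points $x_n \in \lambda \cap K_n$, closedness of causal curves and compactness would force $p \in \lambda$, which together with freedom to perturb $\lambda$ gives $q = p$), the decreasing compact sets $D^+(K_n) \cap \d \subseteq \d$ have empty intersection, and so $D^+(K_n) \cap \d = \emptyset$ for some $n$. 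Take $U_p$ to be any causally convex open subset of $\mathrm{int}(K_n)$ containing $p$.

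\textbf{Step 2 (finite chain).} Parametrize $\g$ by a Cauchy time function $T$ of $M$, so that $s \mapsto T(\g(s))$ is a strictly increasing homeomorphism of $[a,b]$ onto $[T_0, T_1]$. For each $s$, since $U_{\g(s)}$ is open and $U_{\g(s)} \subseteq D^+(U_{\g(s)})$, $\g \cap D^+(U_{\g(s)})$ contains a relatively open arc around $\g(s)$, say $\g((s - \alpha(s), s + \beta(s)))$ with $\alpha(s), \beta(s) > 0$. These arcs form an open cover of $[T_0, T_1]$, from which I extract by compactness finitely many parameter values $s_1 < \cdots < s_n$ whose corresponding arcs still cover, refining if necessary so that $s_{i+1} \in (s_i, s_i + \beta(s_i))$ for each $i$. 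Now, for each $i \geq 2$, replace $U_{\g(s_i)}$ by a smaller causally convex open neighborhood $O_i$ of $\g(s_i)$ whose intersection with $\g$ lies inside $(s_{i-1} - \alpha(s_{i-1}), s_{i-1} + \beta(s_{i-1}))$; this is possible because $\g(s_i)$ lies in the interior of that arc. Setting $O_1 := U_{\g(s_1)}$, the resulting $O_i$ are causally convex open, satisfy $D^+(O_i) \cap \d = \emptyset$ (since $O_i \subseteq U_{\g(s_i)}$), give $\g \subseteq \bigcup_i O_i \subseteq \bigcup_i D^+(O_i)$, and satisfy $\g \cap O_{i+1} \subseteq \g \cap D^+(O_i)$ by the above construction.

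\textbf{Main obstacle.} The substantive work is Step 1: showing that $D^+$ of a shrinking compact neighborhood of $p$ collapses to $\{p\}$. This relies on two properties of globally hyperbolic spacetimes, namely closedness of $D^+(K)$ for compact $K$ and sequential compactness of causal curves (see e.g.\ \cite{BS03, Ger70}). Step 2 is then essentially a Lebesgue-number argument using continuity of $\g$ and a Cauchy time function.
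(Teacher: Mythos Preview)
Your overall strategy matches the paper's—build local neighborhoods with $D^+$ disjoint from $\delta$, pass to a finite subcover, then shrink to enforce the chain condition—but Step 2 has a genuine gap that breaks the argument.

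After you replace each $U_{\gamma(s_i)}$ (for $i\geq 2$) by a small causally convex neighborhood $O_i$ of the single point $\gamma(s_i)$, two things go wrong. First, your claim $\gamma \subseteq \bigcup_i O_i$ is false: the $O_i$ are small neighborhoods of finitely many points and cannot cover the curve. Presumably you mean $\gamma \subseteq \bigcup_i D^+(O_i)$, but this is not justified either, because shrinking $O_i$ shrinks $D^+(O_i)$, and you give no control over how much. Second, the chain condition you actually establish is $\gamma\cap O_{i+1} \subseteq D^+(U_{\gamma(s_i)})$, not $\gamma\cap O_{i+1} \subseteq D^+(O_i)$; since $O_i \subsetneq U_{\gamma(s_i)}$ one has only $D^+(O_i) \subseteq D^+(U_{\gamma(s_i)})$, so the required inclusion does not follow. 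In short, arbitrary shrinking destroys both the covering and the chain.

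The paper avoids this by a different shrinking: it replaces $O_{i+1}$ not by a small ball around a point, but by a thin causally convex neighborhood $O'_{i+1}$ of a Cauchy surface $\Sigma$ of $O_{i+1}$, chosen so that $\Sigma$ meets $\gamma$ inside $D^+(O_i)$. Because $\Sigma$ is a Cauchy surface of $O_{i+1}$, one has $D(O'_{i+1}) = D(\Sigma) = D(O_{i+1})$, so the future domains of dependence—and hence the covering of $\gamma$—are preserved exactly under this replacement, while $\gamma\cap O'_{i+1}$ now sits inside $D^+(O_i) = D^+(O'_i)$. This Cauchy-surface trick is the missing idea in your Step~2.

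Your Step~1 is more laborious than the paper's (which dispatches $D^+(O_p)\cap\delta=\emptyset$ in one line from the causality of $\delta$), though your care is not misplaced: the paper's one-liner is arguably too quick as stated, since for a general causally convex $O_p$ disjoint from $\delta$ the set $D^+(O_p)$ can still meet $\delta$. Your shrinking argument does give the conclusion, but you should justify the claim that $D^+(K_n)$ is closed for compact $K_n$, or else rephrase the compactness argument to avoid needing it.
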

\begin{proof}
    See Figure~\ref{fig:D-O-covering} on page~\pageref{fig:D-O-covering} for an illustration of the desired covering. Since $M$ is locally compact and Hausdorff, it is regular, implying that every point $p\in M$ has an open neighbourhood $O_p$ such that $O_p\cap\d = \emptyset$. By suitably shrinking $O_p$, we can assume it is causally convex. Since $\d$ is causal, we also have $D^+(O_p)\cap\d = \emptyset$. This gives a covering $\bigcup_{p\in\g} D^+(O_p)$, which admits a finite subcover $\bigcup_{i=1}^n D^+(O_i)$, where the $O_i$ are labelled such that $t^-(O_i) < t^-(O_j) \implies i < j$. Here
    \begin{equation}
        t^-(O) := \inf\{t\in[a,b]:\g(t)\in O\} \text{\quad and\quad} t^+(O) := \sup\{t\in[a,b]:\g(t)\in O\}.
    \end{equation}
    Note that by construction, all $O_i$ intersect $\g$, and that $t^-(D^+(O_i)) = t^-(O_i)$, since $\g$, being causal, cannot be in $D^+(O)\setminus O$ without first passing through $O$ (here we assume w.l.o.g.\ that $\g$ is future-directed). Moreover, since $D^+(O_i)$ is causally convex, $\g$ can enter $D^+(O_i)$ only once, so that $\g\cap\D^+(O)$ is precisely the trajectory of $\g\big|_{(t^-(O),t^+(O))}$. So if $t^-(D^+(O_i))\leq t^-(D^+(O_j)) \leq t^+(D^+(O_j)) \leq t^+(D^+(O_i))$ then we can leave out $D^+(O_j)$ from the cover. Therefore we can assume that $t^+(D^+(O_i)) \leq t^+(D^+(O_{i+1}))$ for all $1\leq i\leq n-1$.
    
    This, together with the fact that all $O_i$ and $D^+(O_i)$ are open, implies that $\g\cap D^+(O_i)\cap O_{i+1} \neq \emptyset$: otherwise, $\g(t^+(D^+(O_i)))$ would not be covered by the covering. Next, $O_{i+1}$, being a causally convex subset of $M$ and thus globally hyperbolic, can be foliated by Cauchy surfaces \cite{Ger70}; hence, find a Cauchy surface $\Sigma$ of $O_{i+1}$ such that $\Sigma$ intersects $\g\cap D^+(O_i)$. Then, noting that $\g\cap D^+(O_i)\cap O_{i+1}$ is open, define $O'_{i+1}$ to be a small causally convex neighbourhood of $\Sigma$ such that $\g\cap O'_{i+1} \subseteq \g\cap D^+(O_i)\cap O_{i+1}$. Finally, set $O'_1 := O_1$. Then $D^+(O_i') = D(\Sigma) = D^+(O_i)$ for all $i$, so $\bigcup_{i=1}^n D^+(O'_i)$ still covers $\g$, and $\{O_i'\}_{i=1}^n$ satisfies the required properties.
\end{proof}
    
    %\phantomsection
    \addcontentsline{toc}{section}{References}
    \printbibliography
\end{document}

% Errata since hand-in version of thesis:
% sec B.2: definition of G (Goldbach conjecture), corrected to $G := \forall n(\g(n)=0)$.